\title{Simple Mechanisms for Welfare Maximization in Rich Advertising Auctions\thanks{ Alexandros Psomas was supported in part by a Google Research Scholar Award. The work of Divyarthi Mohan was partially supported by the European Research Council (ERC) under the European Union's Horizon 2020 research and innovation program (grant agreement no. 866132). Part of this work was done when Mohan and Psomas were employed at Google Research.}}
\author{Gagan Aggarwal\thanks{
	(gagana@google.com)
        Google Research.
        }
        \and Kshipra Bhawalkar \thanks{
	(kshipra@google.com)
       Google Research.
        }
        \and Aranyak Mehta\thanks{
        (aranyak@google.com)
       Google Research.
        }
	\and Divyarthi Mohan\thanks{
        (divyarthim@tau.ac.il)
       Tel Aviv University.
        }
	\and Alexandros Psomas\thanks{
        (apsomas@cs.purdue.edu)
        Purdue University.
        }
}
\date{}
\newcommand{\bvec}{\mathbf{b}}
\newcommand{\vvec}{\mathbf{v}}
\newcommand{\xvec}{\mathbf{x}}
\newcommand{\bS}{\mathbf{S}}
\newcommand{\bA}{\mathbf{A}}
\newcommand{\addset}{\mathcal{S}}
\newcommand{\eps}{\varepsilon}
\newcommand{\mc}[1]{\mathcal{#1}}
\newcommand{\ALGB}{\ifmmode{ALG_B}\else{$ALG_B$~}\fi}
\newcommand{\ALGI}{\ifmmode{ALG_I}\else{$ALG_I$~}\fi}
\newcommand{\ALGM}{\ifmmode{ALG_{max}}\else{$ALG_{max}$~}\fi}
\newcommand{\val}{\mathrm{Val}}
\newcommand{\greedybpb}{\ifmmode{\mathrm{Greedy-BpB}}\else{GreedyByBangPerBuck}\fi}
\newcommand{\IOPT}{\ifmmode{IntOPT}\else{$IntOPT$~}\fi}
\newtheorem{theorem}{Theorem}
\newtheorem{lemma}{Lemma}
\newtheorem{claim}{Claim}
\newtheorem{fact}{Fact}
\newtheorem{observation}{Observation}
\newtheorem*{informal}{Informal Theorem}
\newtheorem{definition}{Definition}
\newtheorem{example}{Example}
\newtheorem{algo}{Algorithm}
\renewcommand{\cite}[1]{\citep{#1}}
\begin{document}

\maketitle

\begin{abstract}
Internet ad auctions have evolved from a few lines of text to richer informational layouts that include images, sitelinks, videos, etc. Ads in these new formats occupy varying amounts of space, and an advertiser can provide multiple formats, only one of which can be shown. The seller is now faced with a multi-parameter mechanism design problem. Computing an efficient allocation is computationally intractable, and therefore the standard Vickrey-Clarke-Groves (VCG) auction, while truthful and welfare-optimal, is impractical. 

In this paper, we tackle a fundamental problem in the design of modern ad auctions. We adopt a ``Myersonian'' approach and study allocation rules that are monotone both in the bid and set of rich ads. We show that such rules can be paired with a payment function to give a truthful auction. Our main technical challenge is designing a monotone rule that yields a good approximation to the optimal welfare. Monotonicity doesn't hold for standard algorithms, e.g. the incremental bang-per-buck order, that give good approximations to ``knapsack-like'' problems such as ours. In fact, we show that no deterministic monotone rule can approximate the optimal welfare within a factor better than $2$ (while there is a non-monotone FPTAS). Our main result is a new, simple, greedy and monotone allocation rule that guarantees a $3$ approximation.

In ad auctions in practice, monotone allocation rules are often paired with the so-called \emph{Generalized Second Price (GSP)} payment rule, which charges the minimum threshold price below which the allocation changes. We prove that, even though our monotone allocation rule paired with GSP is not truthful, its Price of Anarchy (PoA) is bounded. Under standard no overbidding assumption, we prove a pure PoA bound of $6$ and a Bayes-Nash PoA bound of $\frac{6}{(1 - \frac{1}{e})}$. Finally, we experimentally test our algorithms on real-world data.
\end{abstract}

 \section{Introduction} \label{sec:intro}

% \begin{itemize}
%     \item Auctions are important
%     \item Truthfulness is important
%     \item Welfare is important
%     \item We would like to run VCG but that is intractable.
%     \item We only really know how to (truthfully) implement allocation rules that are monotone.
%     \item We have to lose a constant ($2$) if we focus on monotone allocation rules.
%     \item We give a slightly larger constant ($4$).
%     %\item In real data the performance is better.
%     %\item In real data we also show that heuristics based on our recipe do well, in the cardinality and position-multiplier generalizations. We can show that these heuristics are truthful, but the approximation itself is tricky.
%     \item Make sure to give a reasonable algorithm (in theory) that doesn't do well in practice (e.g. only look at value).
%     \item also compare revenues
% \end{itemize}

Internet Ad Auctions, in addition to being influential in advancing auction theory and mechanism design, are a half-a-trillion dollar industry~\cite{emarketer}. A significant advertising channel is sponsored search advertising: ads that are shown along with search results when you type a query in a search box. These ads traditionally were a few lines of text and a link, leading to the standard abstraction for ad auctions: multiple items for sale to a set of unit-demand bidders, where each bidder $i$ has a private value $v_{i} \cdot \alpha_{is}$ for the ad in position $s$, which has click-through rate $\alpha_{is}$.
However, when using your favorite search engine, you might instead encounter sitelinks/extensions leading to parts of the advertisers' website, seller ratings indicating how other users rate this advertiser or offers for specific products. Advertisers can often explicitly opt in or out of showing different extensions with their ads. In fact, some extensions require the advertiser to provide additional assets, e.g. sitelinks, phone numbers, prices, promotion etc, and an ad cannot be shown unless this additional information is available.
All these extensions/decorations change the amount of \emph{space} the ad occupies, as well as affect the probability of a user clicking on the ad. 
%The seller's problem is now to select a set of ads that fit within the given total space.
The new and unexplored abstraction for modern ad auctions is now to select a set of ads that fit within the given total space.

In this paper, we study the problem of designing a \emph{truthful} auction to determine the best set of ads that can be shown, with the goal of maximizing the social welfare.   
More formally, we consider the \emph{Rich Ads} problem.  In our model, each advertiser specifies a value per click $v_i$ and set of rich ads. Each ad has an associated probability of click $\alpha_{ij}$ and a space $w_{ij}$ that it would occupy if shown. The space and click probabilities are publicly known. Crucially, advertisers' private information is \emph{not} single-dimensional. In addition to misreporting her value, there is another strategic dimension available: an advertiser can report only a subset of the set of ads available if the allocation under this report improves her utility. %A number of recent works study this problem from various angles but they do not focus on truthful mechanisms~\cite{CavalloKSW17,HartlineIKLN18,GhiasiHLY19}. 
The open problem we address in this paper is whether there exist simple, approximately optimal and truthful mechanisms for Rich Ads.

%Our main result is a simple, truthful mechanism that obtains a constant approximation to the optimal welfare. Our mechanism can be implemented in time polynomial in number of advertisers, number of rich ads and log of the capacity. \anote{need one more sentence about why our mechanism is cool}
%The mechanism is truthful in the value $v_i$ and the set $S_i$ of rich ads. %We assume that the advertiser incurs a huge penalty if it reports an ad that it does not have. 

\subsection{Results and Techniques}

The classic Vickrey-Clarke-Groves (VCG) mechanism is truthful and maximizes welfare for our setting, but it is computationally intractable: maximizing welfare is NP-complete (even without truthfulness) since our problem generalizes the {\sc Knapsack} problem. It is also well known that coupling an approximation algorithm for welfare with VCG payments does not result in a truthful mechanism. And, maximal-in-range mechanisms~\cite{NisanR01}, that optimize social welfare over a restricted domain, even though are one way around such situations, have limited use, since the range of possible outcomes (allocations) has to be committed to before seeing the bidders preferences (i.e., needs to be independent of bidders reports). For single parameter problems, Myerson's lemma~\cite{Myerson81} can be used to obtain a truthful mechanism, as long as the allocation rule is monotone. The Rich Ads problem is not a single parameter problem, so this approach does not immediately work. However, similar to the inter-dimensional (or ``one-and-a-half'' dimensional) regime~\cite{FiatGKK16,devanur2020optimal1,devanur2020optimal2}, we can extend Myerson's lemma to our domain. We show that an allocation rule that is monotone in the bid and the set of rich ads\footnote{Here, an allocation rule is defined to be monotone in the set of rich ads if the expected clicks allocated to an advertiser can only increase when the advertiser reports a superset of rich ads.} can be paired with a payment rule to obtain a truthful mechanism. %This result is similar to results in the 1.5 dimensional literature. \knote{ref?}

Incentive issues aside, the Rich Ads problem is an extension of the {\sc Knapsack} problem, called {\sc Multi-Choice Knapsack}: in addition to the knapsack constraint, we also have constraints that allow to allocate (at most) one rich ad per advertiser. As an algorithmic problem, this is well studied~\cite{Sinha79,lawler1979fast}. The optimal fractional allocation can be derived using a simple \emph{greedy} algorithm using the \emph{incremental bang-per-buck} order. However, it turns out that the optimal (integral or otherwise) allocation, as well as other natural allocations, are not monotone. In fact, as we show, no deterministic (resp. randomized) monotone allocation rule can obtain more than half (resp. $11/12$ fraction) of the social welfare. In contrast, without the monotonicity constraint, there is an FPTAS for the {\sc Multi-Choice Knapsack} problem~\cite{lawler1979fast}.

Our \emph{main result} is providing an integral allocation rule that is monotone and obtains at least a third of the optimal (fractional) social welfare. Pairing with an appropriate payment function we get the following (informal) theorem.

%Our mechanism always outputs a feasible integer allocation.

\begin{informal}\label{informal1}
There exists a simple truthful mechanism, that can be computed in polynomial time, which obtains a $3$-approximation to the optimal social welfare.
\end{informal}

%\amnote{
To obtain this result, we first find an allocation of space amongst the advertisers. In contrast to the optimal fractional algorithm described above which allocates greedily using the incremental bang-per-buck order, our algorithm allocates greedily using an \emph{absolute bang-per-buck} order. %}
%\st{First, we focus on finding a (fractional) allocation that greedily allocates rich ads in \emph{bang-per-buck} order. We use this algorithm only to partition the total space amongst the advertisers.} 
Crucially, the \emph{space} allocated to each advertiser in this way is monotone, even though the expected number of clicks (i.e. the utility) of the bang-per-buck algorithm itself is \emph{not} monotone. By post-processing to utilize this space optimally for each advertiser, we obtain an integral allocation that is monotone in the expected number of clicks.
We prove that this allocation gives a two approximation to the optimal fractional welfare, minus the largest value ad. Finally, by randomizing between this integral allocation (with probability $2/3$) and the largest value ad (with probability $1/3$), we get a $3$-approximation to the optimal social welfare. Since the overall allocation rule is monotone, we can pair it with an appropriate pricing to get a truthful mechanism.

We proceed to further explore the merits of our monotone allocation rule by pairing it with the \emph{Generalized Second Price (GSP)} payment rule, which charges each advertiser the minimum threshold (on their bid) below which their allocation changes. The overall auction is not truthful.
However, we can analyze its performance by bounding its social welfare in a worst-case equilibrium. In particular, we consider the full information pure Nash equilibrium, where bidders best-respond to a profile of competitors bids, as well as the incomplete information Bayes-Nash equilibrium, where the bidders best-respond to a distribution of valuation draws and bids for the competitors. The corresponding pure Price of Anarchy (PoA) and Bayes-Nash Price of Anarchy are the ratios of the optimal social welfare to the welfare of the worst equilibrium.  In either setting, we make the standard no-overbidding assumption, where bidders do not bid more than their value. This assumption is required as without it the PoA of even the single-item second price auction (which is truthful) can be unbounded.\footnote{Consider, for example, the equilibrium where all bidders bid $0$, except the lowest bidder, who bids infinity.} 
\begin{informal}
There exists a simple mechanism with a monotone allocation rule, paired with the GSP payment rule, which under the no-overbidding assumption guarantees a pure Price of Anarchy (resp. Bayes-Nash PoA) %with respect to the optimal social welfare 
of at most $6$ (resp. $\frac{6}{1-1/e}$).
\end{informal}
We prove our PoA bounds by identifying a suitable deviation for each advertiser, and bounding the advertiser's utility in this deviation relative to the social welfare of the optimal integral allocation, our integral bang-per-buck allocation (in the equilibrium), and the largest value ad (in the equilibrium); as opposed to single-dimensional PoA bounds, the knapsack constraint in our setting introduces a number of technical obstacles we need to bypass.
To prove a bound for the Bayes-Nash PoA, we combine techniques from our pure PoA bound with the standard smoothness framework~\cite{Roughgarden-intrinsic}. In particular, the smoothness part of our argument is very similar to that of~\cite{caragiannis} for the Bayes-Nash PoA of the standard GSP position auction. Due to the specific form of the smoothness framework that we use, our bound also applies to mixed, correlated, coarse-correlated and Bayes-Nash with correlated valuations.

Finally, we provide an empirical evaluation of our mechanism on real world data from a large search engine. We compare performance of our mechanism with VCG and the fractional-optimal allocation that doesn't account for incentives. Our empirical results show that our allocation rule obtains at least $0.4$ fraction of the optimal in the worst-case. %\dnote{[remove:, so reality is not much rosier compared to the worst-case analysis.]} 
However, there are many instances where our allocation rule is almost as good as VCG. In fact, the average approximation factor of our allocation rule is $0.97$. Furthermore, our mechanisms are significantly faster than VCG, even with the Myersonian payment computation. We also empirically evaluate heuristic extensions of our algorithms when there is a bound on the total number of distinct rich ads shown. %\dnote{[Alex, check this.]} %%Our heuristic extension performs well, even though the additional constraint changes the nature of the optimization problem (for instance, the simple greedy algorithm~\cite{Sinha79} does not extend.\anote{Need to write our conclusions from these experiments.}

\subsection{Related work}
Traditional sponsored search auctions have been studied extensively~\cite{AggarwalGM06,Varian07, EdelmanOS07}. A number of recent works relax the traditional model of sponsored search auctions~\cite{Colini-Baldeschi20,HUMMEL2016} and introduce different versions of the ``rich ads'' problem~\cite{Deng10,CavalloKSW17,GhiasiHLY19,HartlineIKLN18}; the specific model we study in this paper is new.~\citet{Deng10} are the first to formulate a rich ad problem where ads can occupy multiple slots. They analyze VCG and GSP variants for a special version of the rich ad problem where ads can be of only one of two possible sizes. They leave the problem addressed in this paper as an open problem for future work.

Much of the literature focuses on GSP-like rules (e.g., because the cost of switching from existing GSP to VCG can be high~\cite{VarianH14}).~\citet{CavalloKSW17} consider the more general rich ad problem where there are constraints on number of ads shown and position effects in the click through rate. But their setting is still single-parameter --- advertisers report a bid per click and cannot mis-report the set of rich ads. 
They provide a local search algorithm that runs within polynomial time and a generalized GSP like pricing to go with it. 
However, as opposed to our interest here, their auction is not truthful, nor do they give any approximation guarantees.~\citet{GhiasiHLY19} consider the optimization problem when the probability of click is submodular or subadditive in the size of the rich ad. They give an LP rounding based algorithm that provides a $4$ approximation for submodular and a $\Omega(\frac{\log m}{\log \log  m})$ approximation for subadditive, with respect to the social welfare assuming truthful bidding. %\gnote{Is their allocation monotone? If not, we should say that. In general, it is good to state clearly somewhere in the related work section that in none of the related work does anyone give a monotone allocation rule with guarantees on welfare in the non-BIC setting (assuming this is true) }
They however do not provide a truthful payment rule, or any PoA guarantees. 
These works also focus on a single-dimensional setting (where the advertiser is strategic about its bid but the set of ads is publicly known). In contrast, we consider a multi-dimensional setting. Our simple and truthful mechanism also has a \emph{monotone} allocation function, so we pair it with GSP as well. 

The design of core or core-competitive auctions in the rich ad setting was presented in \cite{HartlineIKLN18, GoelKL15}. These works compromise on truthfulness or social welfare to achieve revenue competitiveness.

The PoA of the GSP auction for text ads was studied in ~\cite{caragiannis2011efficiency, lucier2010price, lucier2011gsp}. Our PoA bounds use the smoothness framework introduced in ~\cite{Roughgarden-intrinsic}, and later extended by~\cite{caragiannis} to show PoA bounds for GSP (as well as in~\cite{Roughgarden-bayespoa} and~\cite{syrgkanis2013composable} for more general use). 

Finally, there is a separate, but related, research agenda of reducing Bayesian incentive compatible mechanism design to Bayesian algorithm design for welfare optimization~\cite{HartlineL10, BeiH11, HartlineKM15, Dughmi21}. We note that here we don't assume access to priors\footnote{This also rules out prophet-inequality type solutions, e.g.~\citet{DuttingFKL20}. } and focus on worst-case approximation guarantees and truthfulness in dominant strategies (not Bayesian incentive compatibility).

%Finally, there is a separate, but related, research agenda of reducing Bayesian incentive compatible mechanism design to Bayesian algorithm design for welfare optimization~\cite{HartlineL10, BeiH11, HartlineKM15, Dughmi21}. We note that here we don't assume access to priors (which also rules out prophet-inequality type solutions, e.g.~\cite{DuttingFKL20}) and focus on worst-case approximation guarantees and truthfulness in dominant strategies (not Bayesian incentive compatibility).

 \section{Preliminaries}\label{sec: prelims}

\subsection{Rich Ad Model}
We introduce the following model for the rich ads auction problem. There is a set $\mc{N}$ of $n$ advertisers and a universe of rich ads $\addset$. Each advertiser $i \in \mc{N}$ has a private value per click $v_i$ and a private set of rich ads $A_i \subseteq \addset$.\footnote{
We expect the rich ads to be tailored to an advertiser, so we assume that $A_i\cap A_{i'} = \emptyset$, for all $i, i' \in \mc{N}$.}
We use $\vvec = (v_1, v_2, \ldots, v_n)$ to denote the vector of values per click  and $\bA = (A_1, A_2, \ldots, A_n)$ to denote the vector of sets of rich ads. 
For every advertiser $i \in \mc{N}$, each rich ad $j \in \addset$ has a publicly known space $w_{ij}$ and a publicly known probability of click $\alpha_{ij}$.\footnote{Note that this safe to assume. The space consumed by a rich ad is evident when the rich ad is provided. The probability of click can be predicted by the platform (e.g. using machine learning models).} %We will occasionally use $v_i$ when the distinction between bid and true value is not important). 
We use $v_{ij}$ for the value of rich ad $j$ for advertiser $i$. If $j \notin A_i$, then the value of advertiser $i$ is $v_{ij} = 0$; otherwise,  $v_{ij} = \alpha_{ij} v_i$. 
An advertiser can be allocated only one of the ads from the set $\addset$. Finally, there is a total limit $W$ on the total space occupied by the ads. 
We assume without loss of generality that for each $i$ and each $j$, $w_{ij} \leq W$, as any ad that is larger than $W$ cannot be allocated integrally in space $W$.
A (randomized or fractional) allocation $\xvec \in [0,1]^{n \times |S|}$ indicates the probability $x_{ij}$ that ad $j$ is allocated to advertiser $i$. An allocation is feasible if each advertiser gets at most one ad, i.e. $\sum_{j \in \addset} x_{ij} \leq 1$ for all $i \in \mc{N}$, and the total space used is at most $W$, i.e. $\sum_{i \in \mc{N}, j \in \addset} x_{ij} w_{ij} \leq W$. An allocation is integral if $x_{ij} \in \{ 0 , 1 \}$, for all $i \in \mc{N}$ and $j \in \addset$.

Our goal is to maximize social welfare. For an allocation $\xvec = Alg(\vvec, \bA)$, $SW(Alg(\vvec, \bA)) = \sum_{i, j} x_{i,j} v_{ij}$. We can write an integer program for the optimal allocation as follows, by introducing a binary variable $x_{ij} \in \{0, 1\}$ for the allocation of advertiser $i \in \mc{N}$ and rich ad $j \in \addset$. The objective is to maximize welfare $\sum_{ij} x_{ij} v_{ij}$, subject to a Knapsack constraint $\sum_{i} \sum_{j} w_{ij} x_{ij} \leq W$, and feasibility, i.e. $\sum_{j} x_{ij} \leq 1$ for all $i \in \mc{N}$ (expressing that each advertiser can get only one ad).
%\begin{align}
%\text{maximize} & \sum_{ij} x_{ij} v_{ij} \\%\label{eq:rich-ad-ip}\\
%\text{subject to} & \sum_{j} x_{ij} \leq 1 & \forall i \;\; &\text{(Only one ad per advertiser)}\label{eq:cardinality}\\
%& \sum_{i} \sum_{j} w_{ij} x_{ij} \leq W  &  &\text{(Knapsack constraint)} \notag \\
%& x_{ij} \in \{0, 1\} & \forall i, j \notag
%\end{align}
%%\knote{make sure we are consistent reported set, true set}

\subsection{Mechanism Design Considerations}
By standard revelation principle arguments, it suffices to focus on direct revelation mechanisms. Each advertiser $i \in \mc{N}$ reports a bid $b_i$ and a set of rich ads $S_i \subseteq \mathcal{S}$. Similarly to many works in the inter-dimensional regime, e.g.~\cite{malakhov2009optimal,devanur2017optimal}, we assume that $S_i\subseteq A_i$, that is, an advertiser cannot report that they want an ad they don't have. Let $\bvec = (b_1, b_2, \ldots, b_n)$ to denote the vector of bids and $\bS = (S_1, S_2, \ldots, S_n)$ to denote the vector of sets of rich ads. We use $b_{ij} = b_i \cdot \alpha_{ij}$ if $j \in S_i$ and $b_{ij} = 0$ otherwise, and refer to the rich ad using a (reported value, space) tuple $(b_{ij}, w_{ij})$.
A mechanism selects a set of ads to show, of total space at most $W$, and charges a payment to each advertiser. 
Let $x_{ij}(\bvec, \bS)$ be the probability that ad $j$ is allocated to advertiser $i$, and $p_i(\bvec, \bS)$ denote the expected payment of advertiser $i$. Let $\xvec_i(\bvec,\bS)$ be the allocation vector of advertiser $i$. We assume that for any valid allocation rule for $j \notin S_i$ $x_{ij}(\bvec, \bS) = 0$. 
We slightly overload notation, and use $x_i(\bvec, \bS)$ to denote the expected number of clicks the advertiser will get; that is, $x_i(\bvec, \bS) = \sum_{j \in S_i} x_{ij}(\bvec, \bS) \alpha_{ij}$. 
If required we refer to the cost per-click  $cpc_i(\bvec, \bS) = p_i(\bvec, \bS)/x_i(\bvec, \bS)$

Advertisers have quasi-linear utilities.
An advertiser with value $v_{i}$ and set $A_i$ has utility $v_i x_i(\bvec, \bS) - p_i(\bvec, \bS)$, when reports are according to $\bvec$ and $\bS$.
Let $u_i(v_i, A_i \rightarrow b_i, S_i; \bvec_{-i}, \bS_{-i})$ be the utility of advertiser $i$ when her true value and set of ads are $v_i, A_i$, but reports $b_i, S_i$, and everyone else reports according to $\bvec_{-i}, \bS_{-i}$. For ease of notation we often drop $\bvec_{-i}, \bS_{-i}$ when it's clear from the context. When the profile of true types is fixed, we drop $(v_i, A_i)$ and use the notation $u_i(b_i, S_i, \bvec_{-i}, \bS_{-i})$.

A mechanism is \emph{truthful} if no advertiser has an incentive to lie, i.e. for any all $\bvec_{-i}, \bS_{-i}$,  $u_i(v_i, A_i \rightarrow v_i, A_i; \bvec_{-i}, \bS_{-i}) \geq u_i(v_i, A_i \rightarrow b_i, S_i; \bvec_{-i}, \bS_{-i})$, for all $v_i, A_i, b_i, S_i$, 
%and all $\bvec_{-i}, \bS_{-i}$. 
A mechanism is individually rational if in all of its outcomes, all agents have non-negative utility.

We are interested in auctions that are computationally tractable, truthful, individually rational, with the goal of maximizing the social welfare $SW(\xvec(\bvec, \bS)) = \sum_i v_i x_{i}(\bvec, \bS)$. %The standard Vickrey-Clarke-Groves mechanism is truthful in this setting and obtains the optimal social welfare.  
Even ignoring truthfulness and individual rationality, the computational constraints rule out achieving the optimal social welfare; therefore, we seek approximately optimal mechanisms. 

\begin{definition}
A truthful mechanism $\mathcal{M}$ obtains an $\alpha$ factor approximation to the social welfare
if $SW(\xvec_{\mathcal{M}} ({\vvec, \bA})) \geq SW(\xvec_{OPT}({\vvec, \bA})) / \alpha$.
\end{definition}

\paragraph{GSP Pricing and Price of Anarchy.}
In Sponsored Search, a popular pricing choice is the Generalized Second Price (GSP)~\cite{Varian07,EdelmanOS07}. While this is classically defined for a position auction with k ads being selected, it can be defined for any allocation algorithm that is monotone in the bid. 

\begin{definition}[GSP]
For any allocation rule $x_i$ that is monotone in bid, and a bid profile $(\bvec, \bS)$, the GSP per click price for advertiser $i$ is the minimum bid below which the advertiser obtains a smaller amount of expected clicks: $cpc_i(\bvec, \bS) = \inf_{b'_i:  x_i(b'_i, S_i, \bvec_{-i}, \bS_{i}) = x_i(b_i, S_i, \bvec_{-i}, \bS_{-i})} b'_i$.
Given a GSP cost-per-click, the GSP payment is the cost-per-click times expected number of clicks: $p_i(\bvec, \bS) = x_i(\bvec, \bS) cpc_i(\bvec, \bS)$.
\end{definition}

The mechanism that charges GSP prices may not be truthful. We can study the Price of Anarchy (PoA)~\cite{koutsoupias1999worst} to understand the effective social welfare. The notion of Price of Anarchy captures the inefficiency of a pure Nash equilibrium. 
Fix a valuation profile $(\vvec, \bA)$. A set of bids $(\bvec, \bS)$ is a \emph{pure Nash equilibrium}, if for each $i$, for any $(b'_i, S_i)$,
$u_i(v_i, A_i \rightarrow b_i, S_i; \bvec_{-i}, \bS_{-i}) \geq u_i(v_i, A_i \rightarrow b'_i, S'_i; \bvec_{-i}, \bS_{-i})$.
The pure Price of Anarchy is the ratio of the optimal social welfare to the social welfare of the worst pure Nash equilibrium of the mechanism $\mathcal{M}$: $\text{pure PoA} = \sup_{\vvec, \bA,~\text{pure Nash} (\bvec, \bS)} \frac{SW(OPT(\vvec, \bA))}{SW(\mathcal{M}(\bvec, \bS))}$.

We will also consider Bayes-Nash Price of Anarchy. Let $(\vvec, \bA)$ be drawn from a (possibly correlated) distribution $\mathcal{D}$, and $D_i$ be the marginal of $i$ in $\mathcal{D}$. 
A \emph{Bayes-Nash equilibrium} is a (possibly randomized) mapping from value, set of rich ads $(v_i, A_i)$ to a reported type $(b_i(v_i, A_i), S_i(v_i, A_i))$ for each $i$ and $(v_i, A_i) \in Support({D}_i)$ such that, for any $b_i'$ and $S_i' \subseteq A_i$, 
$$ \mathbb{E}_{\vvec_{-i}, \bA_{-i}, \bvec_{-i}, \bS_{-i}}[u_i(v_i, A_i \rightarrow b_i, S_i; \bvec_{-i}, \bS_{-i})] \geq \mathbb{E}_{\vvec_{-i}, \bA_{-i}, \bvec_{-i}, \bS_{-i} }[u_i(v_i, A_i \rightarrow b_i', S_i', \bvec_{-i}, \bS_{-i})] $$
In the expression above the expectation is conditioned on $v_i$ and over random draws of $\vvec_{-i}, \bA_{-i}$ and the competitors bids $\bvec_{-i}, \bS_{-i}$ drawn from the mapping $b_j(v_j, A_j), S_j(v_j, A_j)$ for each $j\neq i$. 
The \emph{Bayes-Nash Price of Anarchy (PoA)} is the ratio of the optimal social welfare to that of the worst Bayes-Nash equilibrium of the mechanism $\mathcal{M}$.
$$\text{Bayes-Nash POA} = \sup_{\mathcal{D}, (\bvec, \bS) \text{ Bayes Nash eq.}}\frac{\mathbb{E}_{(\vvec, \bA) } [SW(OPT(\vvec, \bA))]}{\mathbb{E}_{(\vvec, \bA), (\bvec, \bS)}[SW(\mathcal{M}(\bvec(\vvec, \bA), \bS(\vvec, \bA)))]}.$$

For the PoA bounds we also focus on equilibria that satisfy the \emph{no overbidding} condition, i.e. equilibria where each bid profile $\bvec, \bS$ satisfies $b_i \leq v_i$ for every advertiser $i$. Note that, the equilibrium condition still allows for deviations that overbid. However, by the definition of GSP overbidding is dominated. If an advertiser can obtain a higher expected number of clicks by bidding higher than their value, then their GSP cost-per-click will be larger than their value $v_i$, and the advertiser obtains negative utility. On the other hand if the expected number of clicks is unchanged, then the GSP cost-per-click is also the same and the utility is unchanged as well, thus no advertiser will be able to gain by overbidding.

\subsection{Optimal fractional allocations}

The algorithmic problem is a special case of a well-known variation of the {\sc Knapsack} problem, called 
{\sc Multi-Choice Knapsack}~\cite{Sinha79}. The integer program for {\sc Multi-Choice Knapsack} is the same as the integer program above, except that the inequality constraint $\sum_{j} x_{ij} \leq 1$ is replaced with an equality. Our problem is easily reduced to 
the {\sc Multi-Choice Knapsack} problem by introducing a \emph{null} ad with $(\alpha_{i0}, w_{i0}) = (0, 0)$. \citet{Sinha79} provide a characterization of the optimal fractional solution of {\sc Multi-Choice Knapsack} and provide a fast algorithm to compute the fractional optimal solution. As is colloquial in the {\sc Knapsack} literature, we refer to the ratio $\frac{b_{ik}}{w_{ik}}$ as \emph{Bang-per-Buck} and the ratio $\frac{b_{ij} - b_{ik}}{w_{ij} - w_{ik}}$ with $w_{ij} > w_{ik}$ as \emph{Incremental Bang-per-Buck}.~\cite{Sinha79} show that allocating ads in the incremental bang-per-buck order gives the optimal fractional solution. We state this standard algorithm in Appendix~\ref{app: missing prelims}.
We refer to the solution constructed by this algorithm as OPT and use it as a benchmark in our approximation guarantees. We note a few more properties of OPT.

\begin{fact}[\cite{Sinha79}] \label{fact: fracopt} 
In the optimal fractional allocation constructed by the algorithm, all advertisers except one have a rich-ad allocated integrally. Also for any advertiser $i$ allocated space $W_i^*$ in OPT, the allocation maximizes the value that advertiser $i$ can obtain in that space.
\end{fact}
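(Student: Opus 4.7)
The plan is to unpack the greedy algorithm from~\cite{Sinha79} and verify both claims by inspecting its execution. The algorithm first pre-processes each advertiser $i$'s rich-ad set by restricting attention to the \emph{upper concave envelope} in the $(w, v)$-plane --- ads that are LP-dominated (strictly below a convex combination of other ads of $i$) are discarded, since they can never appear in a fractional optimum. Within this restricted set, the incremental bang-per-buck values along $i$'s envelope form a decreasing sequence. The algorithm then sorts all such per-advertiser increments globally in decreasing order of incremental bang-per-buck and processes them one by one, each time ``upgrading'' the corresponding advertiser from her current envelope point to the next. It stops as soon as applying the next increment would exceed $W$, at which moment it applies exactly that one increment fractionally to fill the remaining space.

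For the first claim, each processed increment involves a single advertiser, and at most one increment --- the terminal one --- is applied fractionally. Every earlier increment is applied integrally, moving its advertiser from one envelope point to the next, while every later increment is not applied at all. Hence every advertiser except possibly the one associated with the terminal increment ends up allocated a single rich ad with probability one.

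For the second claim, I would fix an advertiser $i$ with final allocated space $W_i^*$ and observe that $i$'s allocation is either a single envelope point at space $W_i^*$ or a convex combination of two consecutive envelope points straddling $W_i^*$. It is a standard convex-analysis fact that this is precisely the optimal solution to the single-advertiser LP
\begin{equation*}
\max \Big\{ \sum_{j} v_{ij} x_{ij} \;:\; \sum_{j} w_{ij} x_{ij} \leq W_i^*,\ \sum_{j} x_{ij} \leq 1,\ \xvec \geq 0 \Big\},
\end{equation*}
since any ad off the envelope is LP-dominated for every budget and the value function traced by the envelope is concave in the budget. The remaining point to verify is that the global greedy order does not ``skip'' any increment of $i$ within its used prefix; this is where monotonicity of incremental bang-per-buck along $i$'s envelope is invoked, ensuring that once an increment of $i$ is postponed past the stopping time, all later increments of $i$ are postponed as well. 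The main obstacle in turning this sketch into a fully rigorous proof is handling ties and degeneracies in the incremental bang-per-buck ordering and making the upper-envelope construction precise in the presence of duplicate $(w, v)$ pairs.
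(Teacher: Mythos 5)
The paper does not prove this statement: it cites it directly to \citet{Sinha79} and uses it as a black box. So there is no ``paper's own proof'' to compare against line by line. What the paper does provide is (a) a description of the incremental bang-per-buck algorithm in Appendix~\ref{appsubsec: incremental bang-per-buck order}, with dominated and LP-dominated ads removed in a preprocessing step, and (b) Lemma~\ref{lem: 1optfrac}, which characterizes the single-advertiser allocation in space $W_i^*$ via a basic-feasible-solution argument (an LP with $|S|$ variables and $|S|+2$ constraints has an optimum with at most two nonzero variables). Note, however, that Lemma~\ref{lem: 1optfrac}'s proof itself invokes Fact~\ref{fact: fracopt}, so it is not a self-contained replacement for it.

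Your sketch reconstructs the standard Sinha--Zoltners argument and is correct in substance; it also matches the algorithm the paper states (upper concave envelope per advertiser via removal of dominated/LP-dominated ads, decreasing incremental bang-per-buck along each envelope, global greedy over increments, single terminal fractional increment). Your route to the second claim differs slightly from the paper's Lemma~\ref{lem: 1optfrac}: you argue directly from concavity of the per-advertiser envelope function (the value at budget $W_i^*$ is the envelope value, which dominates any feasible mix), whereas the paper derives the ``at most two nonzero variables'' structure abstractly from LP degrees of freedom and then computes the value. Your version is more constructive and makes explicit the crucial ``no skipping'' observation --- that because each advertiser's increments are sorted in decreasing incremental bang-per-buck, the set of increments the global greedy applies to advertiser $i$ is necessarily a prefix of $i$'s envelope --- which is exactly the step that ties the greedy's output to the per-advertiser optimum and is implicitly assumed in the paper. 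Your honest flagging of ties/degeneracies is fair; those are the only places where a fully rigorous writeup would need extra care, and they do not affect the correctness of the claim.
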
 

This fact also implies a $2$-approximate integral allocation as follows. Construct an optimal fractional solution using the incremental bang-per-buck order. Let $i'$ denote the advertiser that is allocated last: select the larger of the optimal fractional solution without $i'$ and the highest value ad of $i'$.

 \section{Monotonicity and Lower Bounds}\label{sec: monotone and lower bounds}

\subsection{Monotonicity implies truthfulness}
In single-parameter domains, Myerson's lemma provides a handy tool for constructing truthful mechanisms. One has to only construct a monotone allocation rule, and then the lemma provides a complementary payment rule such that the overall mechanism is truthful. We extend this approach to our particular multi-parameter domain. If the set of ads is fixed for each advertiser, then monotonicity in bid and Myerson-like payments imply truthfulness. We give constraints between the allocation rules for different sets of ads and show that they imply truthfulness everywhere using a local-to-global argument. We begin by defining monotonicity in our setting. An allocation rule is said to be monotone if it is monotone in each dimension of the buyer's preferences. 

\begin{definition}
An allocation rule $x(\bvec, \bS)$ is monotone in $b_i, S_i$ for each $i$, if (1) For all $\bvec_{-i}, \bS_{-i}, S_i, b_i' \geq b_i$; we have $x_i(b_i', S_i, \bvec_{-i}, \bS_{-i}) \ge x_i(b_i, S_i, \bvec_{-i}, \bS_{-i})$, and (2) For all $\bvec_{-i}, \bS_{-i}, b_i, S_i' \supseteq S_i$; we have $x_i(b_i, S'_i, \bvec_{-i}, \bS_{-i}) \ge x_i(b_i, S_i, \bvec_{-i}, \bS_{-i})$.
\end{definition}
As the following example shows, the optimal allocation rule is \emph{not} monotone. The example also shows that monotonicity is not necessary for truthfulness (since VCG is truthful and optimal).

\begin{example} \label{ex: opt}
Consider two advertisers with two rich ads each. Both have value $1$ and the rich ads have (value, size) $= (1, 1), (1 + \epsilon, 2)$. The space available is $3$. In this case, the optimal integer solution chooses the smaller ad from one advertiser and the larger ad from other. However, if one of them removes their smaller option, they get the larger option deterministically.
\end{example}

The next lemma shows that monotonicity is sufficient for truthfulness.

\begin{lemma} \label{lem: monotone}
If a valid allocation rule $\xvec(\mathbf{b}, \bS)$ is monotone in $b_i, S_i$ for each $i \in \mc{N}$, then charging payment $p_i(\mathbf{b}, \bS) = b_i x_i(\mathbf{b}, \bS) - \int_{0}^{b_i} x_i(b, \mathbf{b}_{-i}, \bS) db$ results in a truthful auction.
\end{lemma}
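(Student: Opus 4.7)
The plan is to decompose any potential deviation $(b_i, S_i) \neq (v_i, A_i)$ into two steps: first change the bid from $b_i$ to $v_i$ while keeping the reported set at $S_i$, then enlarge the reported set from $S_i$ to $A_i$. Showing that each step weakly increases the advertiser's utility establishes truthfulness. This is exactly the ``local-to-global'' argument alluded to in the preamble of the lemma, and it leverages the two dimensions of monotonicity separately.

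\textbf{Step 1 (bid deviation, fixed set).} Fix $S_i$ and $\bvec_{-i}, \bS_{-i}$, and write the agent's utility under the Myerson-type payment as
\[
u_i(b_i, S_i) \;=\; v_i\, x_i(b_i, S_i, \bvec_{-i}, \bS_{-i}) - p_i(b_i, S_i, \bvec_{-i}, \bS_{-i}) \;=\; (v_i - b_i)\, x_i(b_i, S_i, \bvec_{-i}, \bS_{-i}) + \int_0^{b_i} x_i(b, S_i, \bvec_{-i}, \bS_{-i})\, db.
\]
This is the classical Myerson setting: since $x_i(\cdot, S_i, \bvec_{-i}, \bS_{-i})$ is nondecreasing in its first argument (condition (1) of the definition), the standard two-case analysis applies. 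If $b_i < v_i$, then $\int_{b_i}^{v_i} x_i(b, S_i) db \geq (v_i - b_i)\, x_i(b_i, S_i)$, so $u_i(v_i, S_i) \geq u_i(b_i, S_i)$. If $b_i > v_i$, then $\int_{v_i}^{b_i} x_i(b, S_i) db \leq (b_i - v_i)\, x_i(b_i, S_i)$, again giving $u_i(v_i, S_i) \geq u_i(b_i, S_i)$. Hence $u_i(v_i, S_i) \geq u_i(b_i, S_i)$ for every $b_i \geq 0$.

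\textbf{Step 2 (set deviation, bid equal to value).} Now fix $b_i = v_i$ and vary the reported set $S_i \subseteq A_i$. At $b_i = v_i$ the first term $(v_i - b_i) x_i$ vanishes and the utility reduces to
\[
u_i(v_i, S_i) \;=\; \int_0^{v_i} x_i(b, S_i, \bvec_{-i}, \bS_{-i})\, db.
\]
By condition (2) of the monotonicity definition, $x_i(b, A_i, \bvec_{-i}, \bS_{-i}) \geq x_i(b, S_i, \bvec_{-i}, \bS_{-i})$ pointwise in $b$, so integrating yields $u_i(v_i, A_i) \geq u_i(v_i, S_i)$ for every $S_i \subseteq A_i$.

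\textbf{Step 3 (combining).} For any misreport $(b_i, S_i)$ with $S_i \subseteq A_i$, chain the two steps:
\[
u_i(v_i, A_i \to b_i, S_i;\, \bvec_{-i}, \bS_{-i}) \;\leq\; u_i(v_i, A_i \to v_i, S_i;\, \bvec_{-i}, \bS_{-i}) \;\leq\; u_i(v_i, A_i \to v_i, A_i;\, \bvec_{-i}, \bS_{-i}),
\]
where the first inequality is Step 1 (applied with reported set $S_i$) and the second is Step 2. Since this holds for every $\bvec_{-i}, \bS_{-i}$, truthful reporting is a dominant strategy, establishing the claim. The only subtlety worth flagging is that the two monotonicity conditions are being used in a decoupled way on two different intermediate profiles; no joint monotonicity (e.g.\ in $(b_i, S_i)$ simultaneously) is required, which is what makes the local-to-global reduction clean.
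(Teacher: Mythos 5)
Your proof is correct and follows essentially the same route as the paper's: decompose the deviation through the intermediate profile $(v_i, S_i)$, apply the standard Myerson bid-monotonicity argument with the set held fixed, apply set-monotonicity at the truthful bid, and chain. The only cosmetic difference is that you spell out the two-case Myerson calculation explicitly while the paper cites Myerson's lemma, and the paper phrases the bid step via the equality $u_i(v_i, A_i \to b_i, S_i) = u_i(v_i, S_i \to b_i, S_i)$ (justified by the fact that the allocation to unreported ads is zero), which you implicitly absorb into your direct computation of the utility.
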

\begin{proof}
We note that, for a fixed set of rich ads $S'_i$ this is a single parameter setting (in the bids $b_i$). Since we use the same payment rule as Myerson, his result implies for any $b_i, S'_i$
$u_i(b_i, \bvec_{-i}, S'_i, \bS_{-i}) \leq u_i(v_i, \bvec_{-i}, S'_i, \bS_{-i})$. %\amnote{$\forall b_i$} for any fixed set of rich ads $S'_i$. 
Thus the mechanism is truthful in bids. 

Next we show that reporting a smaller set of 
rich ads and the true value is not beneficial, i.e.
$u_i(v_i, S_i, \bvec_{-i}, \bS_{-i}) \leq u_i(v_i,  A_i, \bvec_{-i}, \bS_{-i})$, for any $S_i \subseteq A_i$.
From the definition of the payment, $u_i(v_i, S_i, \bvec_{-i}, \bS_{-i}) = \int_0^{v_i} x_i(b, S_i, \bvec_{-i}, \bS_{-i}) db$. Since  the allocation rule is monotone, we have \\
$x_i(b, \mathbf{b}_{-i}, A_i, \bS_{-i}) \geq x_i(b, \mathbf{b}_{-i}, S_i, \bS_{-i})$; the claim follows.

Finally we can chain these two results to show that misreporting both bid and the set of rich ads will also not increase a buyer's utility. Let $S_i$ be the reported set of rich ads, and fix $\bvec_{-i}$ and $\bS_{-i}$. Recall that in our model the buyer can only report a subset of rich-ads, thus $S_i \subseteq A_i$. %We assume without loss of generality that $S'_i \subset S_i$, and
Further for any valid allocation rule $x_{ij} =0$ for all $j\notin S_i$.
Thus the utility when the true type is $A_i$ is equal to the utility when the true type is $S_i$: $u_i(\nu_i, A_i\rightarrow \nu'_i,S_i) = u_i(\nu_i,S_i \rightarrow \nu_i',S_i )$ for all $\nu_i,\nu_i'$. Putting all the previous claims together we have:
\begin{align*}
    u_i(v_i,A_i\rightarrow v_i,A_i) & \ge  u_i(v_i,A_i\rightarrow v_i,S_i) \quad\text{(By local IC)}\\
    & = u_i (v_i,S_i \rightarrow v_i,S_i ) \\
    & \ge u_i (v_i,S_i \rightarrow b_i,S_i )\quad\text{(By local IC)}\\
    & = u_i (v_i,A_i \rightarrow b_i,S_i ) \qedhere
\end{align*}
\end{proof}

\subsection{Lower bounds}

Next, we illustrate the challenge in coming up with allocation rules which are monotone in the set of rich ads. We also prove that monotonicity rules out approximation ratios strictly better than $2$ for deterministic mechanisms (and $12/11$ for randomized mechanisms).

As we've seen in Example~\ref{ex: opt}, the algorithm that finds the optimal integer allocation is not monotone. The following example shows that simple algorithms such as selecting ads in the incremental bang-per-buck order, or the $2$ approximation algorithm presented in Section~\ref{sec: prelims} are not monotone either. Recall that bang-per-buck = $b_{ij}/w_{ij}$ and incremental bang-per-buck = $\frac{b_{ij} - b_{ik}}{w_{ij} - w_{ik}}$ with $w_{ij} > w_{ik}$. 
\begin{example}
We have two advertisers $A$ and $B$. $A$ has two rich ads with (value, size) as (2,1),(3.5,3). $B$ has one ad with (value, size)=(3,3). 

(i) Suppose $W=4$. The incremental bang-per-buck algorithm picks $A: (2, 1)$, followed by $B: (3, 3)$. The rich ad $A: (3.5, 3)$ does not get picked over $B: (3, 3)$, despite having higher bang-per-buck, since it has smaller incremental bang-per-buck (of $0.75$). On the other hand if A removes the rich ad $(2, 1)$, the algorithm picks $A: (3.5, 3)$ at the very beginning, and $A$ obtains a higher value.

(ii) Suppose $W = 3.5$, the optimal fraction solution is $A: (2, 1)$ with a weight of $1.0$ and $B: (3, 3)$ with weight $2.5/3$. The $2$-approximation algorithm of Section~\ref{sec: prelims} compares allocating just $A: (2, 1)$ or just $B: (3, 3)$, and chooses $B$. But if A removed $(2, 1)$, then the optimal fraction solution is $A: (3.5, 3)$ with a weight of $1.0$ and $B: (3, 3)$ with weight $0.5/3$. The $2$-approximation algorithm compares $A: (3.5, 3)$ and $B: (3, 3)$, and chooses $A$ because it has higher value. Thus A gets a higher value by removing $(2, 1)$.\footnote{Even though we only seek integer monotone algorithms, this example shows that even the fractional allocation is not monotone: when A provides all ads its value is $2$, but when A removes $(2, 1)$, its value is $3.5$.}
\end{example}

The next theorem gives lower bounds on the approximation factor a monotone algorithm can achieve.
%We provide lower bounds on the approximation factor for any deterministic/randomized and monotone algorithms. 

\begin{theorem}\label{thm: lower bounds}
No monotone and deterministic (resp. randomized) algorithm has an approximation ratio better than $2 - \eps$ (resp. $12/11 - \eps$) for any $\eps > 0$.
\end{theorem}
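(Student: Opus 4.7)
I would prove the two bounds by explicit adversarial instances.

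For the deterministic bound of $2-\eps$, the plan is to reduce to the classical {\sc Knapsack} integrality gap, which already applies in our setting. Concretely, consider the two-advertiser instance with $W=1$, bidder~$1$ owning the single rich ad $(v_{11},w_{11})=(1,\eps)$, and bidder~$2$ owning the single rich ad $(v_{21},w_{21})=(1,1)$. Since $1+\eps>W$, no integer allocation contains both ads, so every deterministic rule achieves welfare at most $1$. The incremental bang-per-buck algorithm of Fact~\ref{fact: fracopt}, on the other hand, allocates bidder~$1$'s ad fully and a $(1-\eps)$-fraction of bidder~$2$'s ad, with welfare $2-\eps$. This yields approximation ratio at least $2-\eps$ for every deterministic (hence for every monotone deterministic) rule. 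Monotonicity in the set of rich ads plays no role since each advertiser has only one rich ad; the bound holds \emph{a fortiori} for monotone rules.

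For the randomized bound of $12/11-\eps$, a fractional/randomized allocation can realize $2-\eps$ on the previous instance, so we must genuinely exploit monotonicity in $S_i$. The plan is to construct a small symmetric family in which each of two advertisers owns a ``small'' and a ``large'' rich ad, with parameters tuned so that (i) the welfare-optimal fractional allocation in the full-report instance packs both small ads, while (ii) in the reduced instance where advertiser $i$ reports only her large ad, the welfare-optimal allocation makes substantial use of $i$'s large ad, thereby blocking the other advertiser. Monotonicity in the set of rich ads forces that bidder $i$'s expected clicks in the reduced instance cannot exceed her expected clicks in the full-report instance. Writing these constraints together with the feasibility constraints as a linear program---with variables being the expected clicks of each bidder in each of the four relevant sub-instances (both full, each bidder drops small, both drop small)---the approximation-ratio constraints force the LP to be infeasible whenever the ratio $c < 12/11$, provided the large ad's value-to-space ratio is chosen correctly. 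The lower bound follows by contraposition.

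\textbf{Main obstacle.} The deterministic case reduces to a clean integrality-gap argument. The randomized case is substantially more delicate, because monotonicity in $S_i$ only couples bidder $i$'s expected clicks across instances in which all \emph{other} bidders' sets are fixed. Consequently, the LP linking the four sub-instances has relatively few monotonicity edges, and a naive symmetric two-advertiser family may admit a monotone rule strictly beating $11/12$. Identifying the correct parameter regime (in particular the value and size of the large ad) so that the LP's worst-case ratio saturates at exactly $11/12$, and ruling out the possibility that adversary-side randomization plus fractional reshuffling improve on this, is the principal technical difficulty.
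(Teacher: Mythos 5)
Your deterministic argument has a fundamental mismatch with what the theorem is actually asserting. Your instance ($W=1$, two single-ad bidders with $(1,\eps)$ and $(1,1)$) has \emph{integral} optimal welfare $1$: every deterministic (or randomized) integral allocation already achieves $1 = $ integral OPT, so there is no lower bound at all against the integral optimum. Your $2-\eps$ gap is purely the knapsack \emph{integrality gap} against the fractional LP optimum, and it says nothing about monotonicity: it rules out \emph{every} integral allocation rule, monotone or not. But the theorem is meant to separate monotone rules from arbitrary rules---the paper explicitly contrasts this lower bound with the existence of a non-monotone FPTAS that achieves $(1-\eps)\cdot\IOPT$. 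An integrality-gap bound cannot establish that separation, because the FPTAS is also "only" a $2$-approximation against the fractional benchmark. The paper's instance (two advertisers each holding $(1,1)$ and $(1+\eps,2)$ with $W=3$) has integral OPT $= 2+\eps$, and the argument genuinely needs set-monotonicity: to beat $2$ some advertiser must receive the small ad, and when that advertiser drops the small ad, monotonicity in $S_i$ caps her expected clicks, forcing the algorithm to give her nothing, leaving welfare at most $1+\eps$ against an OPT that is still $2+\eps$. That is a bound against the integral optimum that uses monotonicity in an essential way---which is exactly what your instance, with one rich ad per advertiser, cannot produce, since set-monotonicity is vacuous there.

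For the randomized bound you do correctly identify that monotonicity in $S_i$ must drive the argument, but your proposal stops at a plan: you sketch an LP over expected-click variables across four sub-instances, acknowledge the monotonicity constraints are sparse, and explicitly flag parameter tuning as the unresolved "principal technical difficulty." That is precisely the content of the proof, so as written this is a gap rather than an argument. The paper's construction is much more direct and worth internalizing: with two advertisers each holding $(1,1)$ and $(2-\eps,2)$ and $W=3$, OPT $=3-\eps$; by averaging, some advertiser (say $B$) gets expected value $x \le (3-\eps)/2$ under truthful reporting; when $B$ hides $(1,1)$, monotonicity caps $B$'s expected value at $x$, so the algorithm can place probability at most $x/(2-\eps)$ on the welfare-$(3-\eps)$ outcome $\{A{:}(1,1), B{:}(2-\eps,2)\}$ and otherwise falls back to welfare at most $2-\eps$; plugging in $x=(3-\eps)/2$ gives expected welfare at most $\tfrac{11}{12}(3-\eps)$ against an unchanged OPT of $3-\eps$. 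No LP is needed---a single averaging step and one monotonicity constraint suffice.
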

\begin{proof}
There are two advertisers. Each advertiser has two rich ads: $(1, 1), (1+ \eps, 2)$. The total space is $3$. The optimal solution has value $(2 + \epsilon)$. 
To obtain an approximation better than $2-\epsilon$, 
we must choose a small ad for at least one of the advertisers. Since the algorithm is monotone, when that advertiser does not provide the small ad, the algorithm cannot give them the larger ad. Thus when the advertiser does not provide the small ad, the algorithm must give this advertiser nothing, resulting in welfare at most $(1+\eps)$. See Appendix~\ref{app:missing from sec 3} for the proof for randomized algorithms.
\end{proof}

%For randomized and monotone algorithms we show a bound of $12/11 - \eps$. {We include the proof of this lemma in }
%\begin{lemma}\label{lem: lower bound rand}
%\end{lemma}

 %\input{full paper/lower_bounds}
 \section{A Simple Monotone $3$-Approximation}\label{sec: positive result}
In this section we give our main result: a monotone algorithm that obtains a $3$ approximation to the optimal social welfare. First, we give a fractional algorithm for allocating \emph{space} to each advertiser $i$. Second, we show that optimally (and integrally) using the space given to each advertiser $i$ gives a monotone allocation. 
Finally, we show that randomizing between the former algorithm and simply allocating the max value ad is a monotone rule that obtains a $3$ approximation to $OPT$. Omitted proofs (and definitions) can be found in Appendix~\ref{app: missing from positive}. We show that the truthful payment function matching our allocation rule can be computed in polynomial time in Appendix~\ref{appsubsec: payments}.

%We start by constructing a simple fractional algorithm that is monotone, and show that it is a $2$-approximation to the fractional optimal. 

\subsection{Monotone space allocation algorithm}\label{subsec:monotone space}
%\paragraph{Monotone space allocation algorithm}%\label{subsec:monotone space}

We start by giving a monotone algorithm for allocating space to each advertiser. This total space allocated is monotone in $b_i$ and $S_i$, for all $i \in \mc{N}$. Our algorithm also provides an allocation of rich-ads to that space, but this allocation by itself may not be monotone, and may not provide a good approximation. 
Our algorithm, which we call $\ALGB$, works as follows. First, we order the ads in the bang-per-buck order. We iteratively choose the next ad in this order; let $i$ be the corresponding advertiser, and $j$ be the rich ad. We replace the previous ad of $i$ with $j$, if this choice results in more space allocated to $i$. If there is not enough space we fractionally allocate $j$ and terminate.

\begin{algo}[$\ALGB$]
Let $w$ denote the remaining space at any stage of the algorithm; initialize $w = W$. Let $E_i$ be the set of ads that are available to agent $i$ and {let $\mc{E}$ denote the set $\cup_{i=1}^n E_i$}; initialize $E_i = S_i$. Let $x_{ij}$ denote the fractional allocation of advertiser $i$ for the rich ad $j$. The total space allocated to advertiser $i$ is $W_i = \sum_{j \in \mc{M}} w_{ij} x_{ij}$.

While the remaining space $w$ is not zero:
\begin{enumerate}[leftmargin=*]
    \item Let $i$ be the advertiser whose rich ad $(b_{ij}, w_{ij})$ has the highest bang-per-buck, among all ads in $\mc{E}$. Let $(b_{ik}, w_{ik})$ be the ad previously allocated to $i$; $(b_{ik}, w_{ik}) = (0, 0)$, if no previous ad exists.
    \item Remove all ads of $i$ {(including ad $j$)} with space at most $w_{ij}$ from $E_i$.
    \item If $w \geq w_{ij} - w_{ik}$, add $w_{ij} - w_{ik}$ to the total space allocated to advertiser $i$, which now becomes $W_i = w_{ij}$. Allocate rich ad $j$ in that space, i.e. set $x_{ij} = 1$ (and $x_{ik} = 0$). Update $w = w - w_{ij} + w_{ik}$.
    \item If $w < w_{ij} - w_{ik}$, add $w$ to the total space allocated to advertiser $i$, that is, $W_i = w_{ik} + w$.
    Allocate rich ad $j$ to $i$ fractionally with $x_{ij} = W_i/w_{ij}$ (and set $x_{ik} = 0$). Update $w = 0$.
\end{enumerate}
\end{algo}

The following observation shows that any ads removed for not being-selected will not be used by the fractional-optimal solution as well. See Appendix~\ref{app: missing prelims} for the precise definition of dominated.

\begin{observation}\label{obv:ALGB ignores dominated}
Let $j'\in E_i$ be some ad removed from $E_i$ in ``step 2'' of $\ALGB$ for having space at most $w_{ij}$. Then either $j' = j$ or $j'$ is a ``dominated'' ad.
\end{observation}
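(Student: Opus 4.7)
The plan is to unpack the two facts characterizing $j'$ and $j$ and derive directly that $j'$ has both weakly lower space and weakly lower value than $j$, which (by the standard notion of domination for multi-choice knapsack ads) makes $j'$ dominated whenever $j'\neq j$.

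First, by the rule in step 2 of $\ALGB$, any ad $j'\in E_i$ removed together with $j$ satisfies $w_{ij'}\leq w_{ij}$ directly by construction. So the only new content is to show the value inequality $b_{ij'}\leq b_{ij}$.

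Second, I would invoke the selection criterion of step 1: $j$ maximizes bang-per-buck over the whole pool $\mc{E}$, and since $j'\in E_i\subseteq \mc{E}$ (it has not yet been removed at the moment of this iteration), we have
\[
\frac{b_{ij}}{w_{ij}} \;\geq\; \frac{b_{ij'}}{w_{ij'}}.
\]
Multiplying through by $w_{ij'}$ and then using $w_{ij'}\leq w_{ij}$ gives
\[
b_{ij'} \;\leq\; b_{ij}\cdot \frac{w_{ij'}}{w_{ij}} \;\leq\; b_{ij}.
\]

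Combining the two inequalities, for $j'\neq j$ we have $w_{ij'}\leq w_{ij}$ and $b_{ij'}\leq b_{ij}$, which is exactly the condition that $j'$ is dominated by $j$ (per the standard definition, deferred to Appendix~\ref{app: missing prelims}). No step here is a real obstacle; the only subtle point worth double checking is that $j'$ is indeed still in $\mc{E}$ at the start of this iteration (otherwise it would have been removed earlier, in which case we simply appeal to the earlier iteration that removed it and apply the same argument there).
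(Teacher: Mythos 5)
Your argument has a genuine gap in the final step, where you apply the definition of domination with the inequalities pointing the wrong way. Recall that (per Definition~\ref{def: dominated}) ad $k$ is dominated by ad $j$ when $j$ uses \emph{at most as much space} ($w_{ij}\leq w_{ik}$) while achieving \emph{at least as much value} ($b_{ij}\geq b_{ik}$): you would never prefer $k$ over $j$. What you derived for $j'$ is $w_{ij'}\leq w_{ij}$ and $b_{ij'}\leq b_{ij}$, i.e.\ $j'$ uses weakly less space and has weakly less value than $j$. That is \emph{not} the domination condition — when $w_{ij'}<w_{ij}$ strictly, $j'$ may well be preferable in a space-constrained knapsack, and the only thing that rules it out is the bang-per-buck inequality, not plain domination.

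The paper's proof handles this by splitting into two cases. When $w_{ij'}=w_{ij}$, your reasoning does give plain domination. But when $w_{ij'}<w_{ij}$, one must show instead that $j'$ is \emph{LP-dominated} by $j$ and the null ad $0$, i.e.\ that
\[
\frac{b_{ij}-b_{ij'}}{w_{ij}-w_{ij'}} \;\geq\; \frac{b_{ij'}-0}{w_{ij'}-0},
\]
which after rearranging is exactly the bang-per-buck inequality $\frac{b_{ij}}{w_{ij}}\geq\frac{b_{ij'}}{w_{ij'}}$ you already established. Your $b_{ij'}\leq b_{ij}$ derivation is correct and is one ingredient of this, but without the LP-domination case the conclusion does not follow: a removed ad that is strictly smaller than $j$ and has a lower value is not dominated in the plain sense, and would not be obviously discardable by the fractional optimum without the LP-domination argument.
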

%\knote{Added a mini-proof here since it's not an obvious fact.}

Next, we prove that $\ALGB$ allocates space that is monotone in $b_i$ and $S_i$, for all $i \in \mc{N}$. 
\begin{theorem}\label{thm: monotone-space}
Let $\xvec(\ALGB)$ denote the allocation of $\ALGB$. Then, for all $i \in \mc{N}$ and $b_i \leq b'_i$, we have
$ \sum_{j} w_{ij} x_{ij}(\ALGB(\bvec,\bS)) \leq  \sum_{j} w_{ij} x_{ij}(\ALGB(b'_i,\bvec_{-i},\bS))$. Also, for all $i$ and $S_i \subseteq S'_i$, the space $W_i$ is monotone in $S_i$: $\sum_{j} w_{ij}  x_{ij}(\ALGB(\bvec,\bS)) \leq  \sum_{j} w_{ij} x_{ij} (\ALGB(\bvec,S_i,\bS_{-i}))$.
\end{theorem}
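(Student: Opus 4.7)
The plan is to prove each monotonicity statement via a parallel coupling of the two runs of $\ALGB$ being compared, tracking how $W_i$ and the remaining space evolve in lockstep. The key structural feature is that step~2 removes $i$'s ads based only on space, so the sequence of $i$'s ads actually considered by $\ALGB$ is determined solely by the bpb order restricted to $i$'s own ads (and whichever of $i$'s ads have been processed so far). In both monotonicity arguments, I maintain the invariant that at every corresponding bpb threshold, $W_i$ in the ``better'' run is weakly larger and the remaining space is weakly smaller.

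For bid monotonicity, increasing $b_i$ to $b'_i$ multiplies all of $i$'s ad bpbs by the same factor, preserving the relative order of $i$'s ads among themselves and of non-$i$ ads among themselves, but moving $i$'s ads weakly earlier in the global bpb order. Hence the sequence $j_1, j_2, \ldots$ of $i$'s considered ads is identical in the two runs. The substep I need is: when $\ALGB$ reaches $j_l$, the remaining space in the $b'_i$ run is weakly larger. This holds because for every non-$i$ advertiser $i''$, the set of $i''$'s ads whose bpb exceeds $j_l$'s bpb in the $b'_i$ run is a subset of the corresponding set in the $b_i$ run (the threshold is higher), and in each run $i''$'s considered ads form a prefix of $i''$'s bpb-sorted list. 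So $W_{i''}$ accumulated before $j_l$ is weakly smaller in the $b'_i$ run, making the remaining space weakly larger. Consequently $j_l$'s upgrade for $i$ is at least as large in the $b'_i$ run, integrally or fractionally.

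For set monotonicity, by induction on $|S'_i \setminus S_i|$ it suffices to add one ad $j^*$ with bpb $t^*$. Couple the runs in the merged bpb order; they are identical up to $t^*$. At $t^*$ one of two things happens. Either $j^*$ was already removed from $E_i$ in the $S'_i$ run by some prior step~2 (which can only occur when an $i$-ad $j'$ with $w_{j'} \ge w_{j^*}$ and higher bpb was processed), in which case both runs remain identical and $W_i$ is equal throughout. Or $j^*$ is processed: then the current $i$-ad in both runs is the same $j_l$ with $w_{j_l} < w_{j^*}$, and in the $S'_i$ run $W_i$ jumps to $\min(w_{j^*},\, w_{j_l}+r)$ where $r$ is the remaining space at time $t^*$. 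If the upgrade is fractional, the $S'_i$ run terminates with $W_i^{(S'_i)} = w_{j_l} + r$; any continuation of the $S_i$ run can only increase $W_i^{(S_i)}$ by at most $r$ (the same remaining it started with at $t^*$), giving $W_i^{(S_i)} \le w_{j_l} + r = W_i^{(S'_i)}$.

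The main technical obstacle is the integral subcase: after $t^*$ the two runs diverge, with $W_i^{(S'_i)} = w_{j^*} > w_{j_l} = W_i^{(S_i)}$ and remaining $r - (w_{j^*}-w_{j_l})$ vs.\ $r$. I preserve the dual invariant $W_i^{(S'_i)} \ge W_i^{(S_i)}$ and $\mathrm{rem}^{(S_i)} \ge \mathrm{rem}^{(S'_i)}$ by case analysis on the next ad processed: non-$i$ ads apply equally to both runs, subject to the remaining-space discrepancy (fit-in-both, fit-in-$S_i$-only, neither-fits); $i$-ads still in $E_i^{(S'_i)}$ (necessarily with $w > w_{j^*}$) are processed identically in both runs when they fit; and $i$-ads of $w \le w_{j^*}$ that are removed in $S'_i$ but processed in $S_i$ can only raise $W_i^{(S_i)}$ up to $w_{j^*}$, preserving $W_i^{(S'_i)} \ge W_i^{(S_i)}$. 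The crux of the proof is verifying that these cases indeed preserve the invariant; once this is done, the theorem follows by taking the invariant at termination.
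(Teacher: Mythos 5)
Your proof takes a genuinely different route from the paper. For set monotonicity, the paper argues in the removal direction and performs a direct case analysis on the role the deleted ad $k$ played in the original run (never allocated; allocated but not final; final and $i$ not last; final and $i$ last). You instead add one ad at a time and run a lockstep coupling of the two executions, tracking $W_i$ and the remaining space past the divergence point $t^*$. Your version is more mechanical and scales more cleanly to multi-ad differences; the paper's is shorter but leans on more informal reasoning about ``the total space allocated to bidders other than $i$'' at various moments. The bid-monotonicity halves are essentially the same idea (raising $b_i$ only moves $i$'s ads earlier in the bang-per-buck order, so every ad temporarily allocated before is still temporarily allocated), and both are sound.

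There is, however, a real gap in the set-monotonicity half as stated. Your invariant $W_i^{(S'_i)} \ge W_i^{(S_i)}$ together with $\mathrm{rem}^{(S_i)} \ge \mathrm{rem}^{(S'_i)}$ is not sufficient to conclude at termination. The problematic case is exactly your ``fit-in-$S_i$-only'' subcase for a non-$i$ ad: the $S'_i$ run ends there (filling the last of its remaining space), so $W_i^{(S'_i)}$ is frozen, while the $S_i$ run keeps going and can still increase $W_i^{(S_i)}$ by up to $\mathrm{rem}^{(S_i)}$. What you actually need at that moment is $W_i^{(S_i)} + \mathrm{rem}^{(S_i)} \le W_i^{(S'_i)}$, and this does \emph{not} follow from $W_i^{(S'_i)} \ge W_i^{(S_i)}$ and $\mathrm{rem}^{(S_i)} \ge \mathrm{rem}^{(S'_i)} = 0$. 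You implicitly use the right quantity in your fractional subcase, where you bound $W_i^{(S_i)} \le w_{j_l} + r = W_i^{(S'_i)}$; the fix is to elevate this to the invariant itself: maintain $W_i^{(S_i)} + \mathrm{rem}^{(S_i)} \le W_i^{(S'_i)} + \mathrm{rem}^{(S'_i)}$ (equivalently, that the space held by advertisers other than $i$ is weakly larger under $S_i$). This holds with equality while both runs integrally allocate the same non-$i$ ads, persists once the $S'_i$ run terminates, and directly yields $W_i^{(S_i)} \le W_i^{(S'_i)}$ at the $S_i$ run's termination whether it stops by exhausting space or by exhausting ads. With that repair, and keeping the secondary inequality $\mathrm{rem}^{(S_i)} \ge \mathrm{rem}^{(S'_i)}$ for the per-step case analysis, your argument goes through.
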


\begin{proof}
$\ALGB$ allocates some ads that are later replaced. We refer to such ads as temporarily allocated. 

We first prove that an advertiser $i$ will not get allocated less space when bidding $b_i' > b_i$. For any agent $i \in \mc{N}$ and any $j\in S_i$, $\ALGB$ temporarily allocates $j$ if and only if the total space occupied by ads with higher bang-per-buck (counting only the largest such ad for each advertiser) is strictly less than $W$. For any ad $j\in S_i$, $j$'s bang-per-buck $b_{ij}/w_{ij}$ is increasing in the bid. Therefore, if $j\in S_i$  was temporarily allocated to $i$ when reporting $b_i$, then $j$ will definitely be temporarily allocated to $i$ under $b_i' > b_i$. If $j$ is the last ad to be allocated by the algorithm when reporting $b_i$ (and thus might not fit integrally in the remaining space under bid $b_i$), it will only be considered earlier under $b'_i > b_i$, and therefore the remaining space (before allocating $j$) can only be (weakly) larger. Thus, in all cases, the space allocated to $i$ under $b_i'$ is at least as much as under $b_i$.

We next show that by removing an ad $k \in S_i$ the space allocated to $i$ does not increase. {Note that it is sufficient to prove monotonicity removing one rich-ad at a time. Monotonicity for removing a subset of rich-ads follows through transitivity.} If no ad is allocated to $i$ under $S_i$, then definitely the same holds for $S_i \setminus \{ k \}$. Otherwise, let $j \in S_i$ be the final ad allocated to $i$ under $S_i$.
If $k$ was never (temporarily or otherwise) allocated under $S_i$, then the allocation under $S_i \setminus\{k\}$ remains the same. Therefore, we focus on the case that $k$ was allocated to $i$ under $S_i$.

First, consider the case that $k \neq j$. Let $\ell$ be an ad (temporarily or otherwise) allocated under $S_i \setminus \{ k \}$, but not under $S_i$ (if no such ad exists, the claim follows). It must be that $b_{ik}/w_{ik} \geq b_{i \ell}/w_{i \ell}$ otherwise the algorithm under $S_i$ would have temporarily allocated $\ell$ before $k$. Note, if $w_{i\ell} > w_{ik}$ and the algorithm under $S_i$ did not allocate $\ell$ then it must be the case that space ran out before we got to $\ell$. This implies that, also under $S_i\setminus\{k\}$, the space will run out before we get to $\ell$. Hence $w_{i \ell} \leq w_{ik}$. At the time when $\ell$ is allocated to $i$ (under $S_i \setminus \{ k \}$), the total space allocated to bidders other than $i$ is weakly larger compared to the time when $k$ is allocated to $i$ (under $S_i$), while the space allocated to $i$ is weakly smaller.

%\knote{Add a case where $w_{i \ell}$ is larger but under $S_i$ we don't allocate it because space runs out. This cannot happen because we would have run out of space under $S_i \setminus \{ k \}$ as well. }

%If $k$ was allocated to $i$ but $k \neq j$. Consider the case the algorithm temporarily allocates $l$ to $i$ and $l$ was not allocated when the algorithm was run with set $S_i$. This can only happen if $w_{il} \leq w_{ik}$ and $b_{ik}/w_{ik} \geq b_{il}/w_{il}$. The space allocated to bidders other than $i$, is them same or higher and $i$'s space is same or lower. If such an $l$ does not exist, then the space allocated to $i$ is the same.  

Second, consider the case that $k = j$ and $i$ is not the last advertiser allocated by the algorithm. Removing $k$ leads to a different last ad, say $\ell$, for $i$ under $S_i \setminus \{ k \}$. If this ad was temporarily allocated under $S_i$, the claim follows. Otherwise, $\ell$ must have a lower bang-per-buck than $k$. If $w_{i\ell} \leq w_{ik}$, the claim follows. Otherwise, all advertisers who got allocated after $i$ under $S_i$ (we know this set is non-empty) have the opportunity to claim a (weakly) larger amount of space under $S_i \setminus \{ k\}$, before $\ell$ is considered. Thus, the maximum amount of space $i$ is allocated is $w_{ik}$. 

%Then removing $k$ can create the possibility of another ad $l$ for $i$ with $w_{il} \leq w_{il}$ temporarily getting selected with lower space, this might allow the last advertiser to get more space than before and $i$ will be allocated less space.

Finally, suppose $k=j$, and $i$ is the last advertiser allocated by the algorithm. let $W_{-i} = W - w_{ij}x_{ij}$ be the space allocated to advertisers other than $i$, under $S_i$. Since they are allotted this space \emph{before} ad $k$ is considered, the maximum space $i$ can get is $W- W_{-i}$, her allocation under $S_i$.
\end{proof}

$\ALGB$ is inefficient since the bang-per-buck allocation can change the relative order in which the advertisers are assigned space, generating sub-optimal outcomes.
\begin{example}
Let $M$ be a large integer. Let $W = M - 1$ and consider two advertiser A and B. A has two rich ads with (value, size) as $(1, 1), (1+\eps, M-1)$. B has one rich ad with (value, size) = $(\frac{M-1}{M}, M-1)$.  Fractional OPT selects $A: (1, 1)$ fully and $B: (\frac{M-1}{M}, M-1)$ fractionally with weight $(M-2)/(M-1)$ and obtains a social welfare $1 + \frac{M-2}{M}$. The bang-per-buck allocation {in \ALGB} selects $A: (1+ \eps, M)$ fully and obtains social welfare $(1+\eps)$. %Thus the allocation of $\ALGB$ can be almost a two approximation.
\end{example}

While the space allocated is monotone in the bid and rich ads, the allocation itself may not be monotone, since \ALGB may allocate an advertiser a larger ad with lower value than another option; we provide an example in Appendix~\ref{app: examples}.

\subsection{Integral Monotone Allocation}

The allocation generated by $\ALGB$ can be fractional for one advertiser, and can be sub-optimal (but integer) for some of the other advertisers. In the following algorithm, we post-process to find the best \emph{single ad} that fits in $W_i$.

\begin{algo}[$\ALGI$]
First, run $\ALGB$. Let $W_i$ be the space allotted to advertiser $i$. Second, post-process to allocate the ad $j$ with maximum value that fits in $W_i$, i.e. $j\in \mathrm{argmax}_{w_{ij}\le W_i} b_{ij}$. Any remaining space is left unallocated.
\end{algo}

Observe that $\ALGI$ is monotone, since (1) the space allocated by $\ALGB$ is monotone, and (2) if the space allocated by $\ALGB$ is larger, then the post-processing that allocates the highest value ad that fits in this space will also result in same or larger value. We note that any "unused" space in $\ALGB$ remains unallocated. However, $\ALGI$ alone might be an arbitrarily bad approximation: we provide an example in Appendix~\ref{app: examples}.

\paragraph{Main result.} Our main result is the following theorem. 

\begin{theorem}\label{thm: simple 4-approx}
The randomized algorithm that runs $\ALGI$ with probability $2/3$, and otherwise allocates the maximum valued ad, is monotone in $b_i$ and $S_i$, obtains a $3$-approximation to the social welfare, and this approximation factor is tight.
\end{theorem}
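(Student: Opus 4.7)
The plan has three parts: monotonicity of the randomized rule, the $3$-approximation guarantee, and tightness.

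For monotonicity, $\ALGI$ is monotone in $b_i$ and $S_i$ by the preceding subsection. The ``allocate the single max-value ad'' rule is also monotone: raising $b_i$ scales $i$'s values $b_{ij} = b_i \alpha_{ij}$ uniformly, so the identity of $i$'s maximizing ad is preserved and $i$ remains a winner if it already was (with the same $\alpha_{ij}$, hence identical expected clicks), and if $i$ was not winning, raising $b_i$ can only cause $i$ to start winning; expanding $S_i$ can only weakly raise $\max_j b_{ij}$, and since $b_{ij} = b_i \alpha_{ij}$ with $b_i$ fixed, $i$ wins with a weakly larger $\alpha_{ij}$. Monotonicity of the $(2/3, 1/3)$ convex combination follows by linearity of expected clicks.

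For the $3$-approximation, let $F$ denote the fractional OPT and $v^* = \max_{i,j} b_{ij}$. The expected welfare equals $(2/3)\,SW(\ALGI) + (1/3)\,v^*$, so the goal reduces to $2\,SW(\ALGI) + v^* \geq SW(F)$. I would prove the stronger inequality $SW(\ALGI) + v^* \geq SW(F)$ via two steps. First, $SW(\ALGI) \geq SW(\ALGB^{\mathrm{int}})$, where $\ALGB^{\mathrm{int}}$ is the value of $\ALGB$'s integer-allocated ads: for each advertiser integer-allocated by $\ALGB$, $\ALGI$'s post-processing picks the max-value ad fitting in $W_i = w_{i, j_i^A}$, and the ad $j_i^A$ itself is a valid candidate. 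Second, $SW(\ALGB^{\mathrm{int}}) + v^* \geq SW(F)$, the multi-choice analog of the classical ``greedy-bpb plus next item beats fractional OPT'' knapsack bound, established via LP duality. Namely, let $\lambda^A$ be the bang-per-buck of $\ALGB$'s fractional ad (or its final considered ad, if none); weak duality gives $SW(F) \leq \lambda^A W + \sum_i \max_j (b_{ij} - \lambda^A w_{ij})_+$. Every ad integer-allocated by $\ALGB$ has bpb $\geq \lambda^A$, so its value equals $\lambda^A w_{ij}$ plus a non-negative surplus; summing the space-prices over $\ALGB$'s integer ads together with the space used by the fractional ad accounts for $\lambda^A W$, while the per-advertiser surpluses, together with the fractional ad's value (at most $v^*$), bound $\sum_i \max_j (b_{ij} - \lambda^A w_{ij})_+$.

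For tightness, I would exhibit the family: two advertisers, $A$ with ads $(1, 1)$ and $(M, 2M)$, $B$ with the single ad $(1, 1)$, and $W = 2M$. Direct calculation shows $\ALGB$ first processes the two size-$1$ ads (bpb $1$), then attempts $A$'s large ad (bpb $1/2$) with only $2M-2$ space remaining and allocates it fractionally, leaving $W_A = 2M - 1 < 2M$ and $W_B = 1$; hence $SW(\ALGI) = 2$, $v^* = M$, and $SW(F) = 2M^2/(2M-1) = M + 1/2 + O(1/M)$. The expected welfare is $(M + 4)/3$, so $SW(F)/\mathbb{E}[SW] \to 3$ as $M \to \infty$. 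The main obstacle is the LP-duality step in the approximation argument: cleanly translating the dual upper bound into $SW(\ALGB^{\mathrm{int}}) + v^*$ requires careful per-advertiser accounting, particularly handling the fractional advertiser in $\ALGB$ and advertisers whose maximum surplus $\max_j (b_{ij} - \lambda^A w_{ij})_+$ is realized at an ad different from the one $\ALGB$ actually allocates.
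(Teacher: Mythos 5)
Your monotonicity argument is fine, and your tightness instance does drive the ratio to $3$. But the core of the approximation argument is broken: the ``stronger inequality'' $SW(\ALGI) + v^* \geq SW(F)$ that you aim to prove is simply \emph{false}, so no amount of careful accounting in the LP-duality step will establish it. The paper's own tightness instance (Table~\ref{table: lb_richads}) is a counterexample: there $SW(F) = 3M - 1$, $SW(\ALGI) = M + 1 + 2\eps$ and $v^* = M + 2\eps$, so $SW(\ALGI) + v^* = 2M + 1 + 4\eps < 3M - 1$ once $M$ is moderately large. The issue you flag at the end as a technicality to be cleaned up is in fact the fatal one: when $\ALGB$ processes advertiser $A$, it temporarily allocates the small high-bang-per-buck ad $(M,1)$ (surplus roughly $M-1$ at threshold $\lambda^A \approx 1$), then replaces it with $(M+\eps, M)$ whose surplus is essentially zero. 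The dual certificate must pay $\max_j (b_{ij} - \lambda^A w_{ij})_+$ for $A$, which is realized at the discarded ad $(M,1)$, and neither $\ALGB^{\mathrm{int}}$'s value for $A$, nor $\ALGI$'s post-processed value for $A$, nor $v^*$ absorbs that term in addition to $\lambda^A W$. The missing quantity is essentially a second copy of $SW(\ALGI)$, which is why the correct inequality is $2\, SW(\ALGI) + v^* \geq SW(F)$, not the one you wrote.

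The LP-duality route can be salvaged, but only to prove the weaker, correct inequality, by splitting the dual bound differently: (i) show $\sum_i \max_j(b_{ij} - \lambda^A w_{ij})_+ \leq SW(\ALGI)$, using the fact that any positive-surplus ad $j^*$ of $i$ has bang-per-buck above $\lambda^A$, hence is considered by $\ALGB$ before termination, which forces $W_i \geq w_{ij^*}$, and $\ALGI$'s post-processing then grants $i$ value at least $b_{ij^*}$; and (ii) show $\lambda^A W \leq SW(\ALGI) + v^*$ separately, using that every integer-allocated ad has bang-per-buck at least $\lambda^A$ and the fractional piece's value is at most $v^*$. Adding (i) and (ii) gives $SW(F) \leq 2\, SW(\ALGI) + v^*$, and then the $2/3$--$1/3$ mix yields the $3$-approximation. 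That is a genuinely different route from the paper, which instead partitions advertisers by whether $\ALGB$ gives them at least as much space as $OPT$ does, bounds the ``at least'' group directly against $\ALGI$ (Claim~\ref{claim: integral-OPT-I}) and the ``less'' group plus the one fractional $OPT$-advertiser via bang-per-buck comparisons against $\val(\xvec(\ALGB), \mc{N}, \vec{W})$ (Claims~\ref{claim: bpb-K} and~\ref{claim:i'-contribution}). Both approaches are sound and of comparable length once the bookkeeping is carried out; the duality version localizes the factor-$2$ loss cleanly in the two halves of the dual objective, while the paper's version more directly exposes which advertisers in $OPT$ the algorithm fails to cover. But as written, your proof commits to a false target inequality and would not go through.
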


\begin{proof}
% {However, we show that the social welfare of \ALGI is within a factor 2 of $OPT$ with an additive factor loss of $b_{max}$ which is the highest value among all advertisers.} 

% \begin{lemma}\label{lem: 2-approx-almost-int}
% There is an allocation $\xvec$, where there are  at most two advertisers $i\in\{i',i''\}$ for which $x_i \neq x_i(\ALGI)$, such that $\xvec$ gives a $2$ approximation to the optimal welfare. That is, $2 SW(\xvec) \ge  SW(\xvec_{OPT})$.
% \end{lemma}

% \begin{lemma}\label{lem: 2-approx-almost-int}
% % There is an allocation $\xvec$, where there is at most one advertiser $i'$ for which $x_{i'} \neq x_{i'}(\ALGI)$, such that $\xvec$ gives a $2$ approximation to the optimal welfare. That is, $2 SW(\xvec) \ge  SW(\xvec_{OPT})$.
% Let $b_{max}$ denote the highest value among all advertisers then
% $2 SW(\xvec(\ALGI)) + b_{max} \geq SW(\xvec_{OPT})$.
% \end{lemma}

Let $b_{max}$ denote the value of the maximum valued ad. We will first show that $2 SW(\xvec(\ALGI)) + b_{max} \geq SW(\xvec_{OPT})$.
Let $\val(\xvec,A,\vec{s}) = \sum_{i\in A}  (\sum_{j \in \mc{S}} b_{ij} \cdot x_{ij} ) \cdot \left(\frac{s_i}{\sum_{j \in \mc{S} } w_{ij} x_{ij} }\right)$ be the fraction of the social welfare of allocation $\xvec$, $SW(\xvec)$, contributed by a subset of advertisers $A$ for space $\vec{s}$.
Let $\xvec^* = \xvec_{OPT(\bvec,\bS)}$ denote an optimal fractional allocation. Let $W_i = W_i(\ALGB(\bvec,\bS))$ and $W_i^* = W_i(OPT(\bvec,\bS))$ denote the total space allocated to advertiser $i$ in $\ALGB$ and the optimal allocation $\xvec^*$, respectively. The space allocated in $\ALGI$ is exactly the same as $W_i$. Recall Fact~\ref{fact: fracopt}, that there is an optimal fractional allocation where at most one advertiser is allocated fractionally. Let $i'$ be the advertiser whose allocation in $\xvec^*$ is fractional. 
There is also at most one advertiser in $\ALGB$  whose allocation is fractional: the advertiser corresponding to the very last ad that is included; let $i''$ be this advertiser. 
We start by giving a series of technical claims. First, we bound the part of $SW(\xvec(OPT))$ contributed by advertisers who are allocated more space in \ALGB than in $OPT$. Let $\mc{I}$ denote the set of advertisers $i$ with $W_i \ge W_i^*$. 
\begin{claim}\label{claim: integral-OPT-I}
$\val(\xvec^*, \mc{I}\setminus \{i'\}, \vec{W}^*) \le \val(\xvec(\ALGI), \mc{I}\setminus \{i'\}, \vec{W})$.
\end{claim}
%This claim holds because each advertiser in $\mc{I} \setminus \{i'\}$ is allocated integrally in $\xvec^*$ and in \ALGI with more space, they can only obtain higher value. 

Let $\mc{K} = \mc{N}\setminus \mc{I}$ be the set of advertisers $k$ with $W_k < W_k^*$. 
If $\mc{K}=\emptyset$ then $W_i = W_i^*$ for all $i$. We note that $b_i\cdot  x_i(\ALGI) = b_i\cdot \xvec_i^*$ for all $i\neq i'$ by Claim~\ref{claim: integral-OPT-I}. Also $b_{max} \geq b_i \cdot \xvec_{i'}^*$. Thus we get $SW(\xvec(\ALGI)) + b_{max} \geq SW(OPT)$. So for the rest of the proof we assume $\mc{K} \neq \emptyset$. 
We bound the portions of $SW(\xvec_{OPT})$ contributed by $k\in \mc{K}$ using the following claim.
\begin{claim}\label{claim: bpb-K}
 For all $k\in \mc{K}\setminus \{i'\}$ and $i \in \mc{N}$, we have $\frac{b_k\cdot \xvec_k^*}{W_k^*} \le \frac{b_i \cdot \xvec_i(\ALGB)}{W_i}$.
 \end{claim}
That is, advertisers that are allocated less space in \ALGI than in $\xvec^*$, must have lower bang-per-buck as otherwise their rich ad would be considered by \ALGI and they will be allocated more space.

Finally we bound the contribution of $i'$, that is the ad allocated fractionally in $OPT$ as follows: 
\begin{claim}\label{claim:i'-contribution}
 Let $(b_s,s), (b_\ell,\ell)$ with $s < l$ be the ads used in $\xvec^*_{i'}$, the optimal fractional allocation for $i'$. It holds that: (i) $\frac{b_\ell - b_s}{\ell - s} \le \frac{b_i\cdot \xvec_{i}(\ALGB)}{W_i}$ for all $i\in\mc{N}$, (ii) if $s > W_{i'}$, then $\frac{b_s}{s} \le \frac{b_i\cdot \xvec_{i}(\ALGB)}{W_i}$ for all $i\in\mc{N}$, and (iii) if $s\le W_{i'}$ then $b_s \le b_{i'}\cdot \xvec_{i'}(\ALGI)$.
 
%  \begin{itemize}
%      \item $\frac{b_\ell - b_s}{\ell - s} \le \frac{b_i\cdot \xvec_{i}(\ALGB)}{W_i}$ for all $i\in\mc{N}$.
%      \item If $s > W_{i'}$, then $\frac{b_s}{s} \le \frac{b_i\cdot \xvec_{i}(\ALGB)}{W_i}$ for all $i\in\mc{N}$.
%      \item If $s\le W_{i'}$ then $b_s \le b_{i'}\cdot \xvec_{i'}(\ALGI)$.
%  \end{itemize}
\end{claim}
The proof of this claim is more involved. Intuitively we can bound the smaller of $i'$'s allocated ads (if it is small enough) with the value $i'$ obtains in $\ALGI$ and the larger ad since it is allocated last has the lowest incremental bang-per-buck than any other advertiser's ad and hence the incremental bang-per-buck is also lower than the other ad's actual bang-per-buck. 

We put all the claims together to bound $\val(\xvec^*,\mc{K}\cup \{i'\}, \vec{W}^*)$. 
If $s > W_{i'}$, then by Claims~\ref{claim: bpb-K} and~\ref{claim:i'-contribution} we see that the bang-per-buck of $k\in \mc{K}\cup \{i'\}$ is less $b_i\cdot \xvec_i(\ALGB)/W_i$ for all $i$. Then the value for  $\mc{K}\cup \{i'\}$ in OPT is less than the contribution of advertisers using the \emph{same total space} in \ALGB.
\begin{align}
\val(\xvec^*,\mc{K}\cup \{i'\}, \vec{W}^*) \le  \val(\xvec^*, \mc{K}\setminus \{i'\}, \vec{W}^*)  + {b_{i'}\cdot \xvec_{i'}^*}
\le \val(\xvec(\ALGB), \mc{N}, \vec{W})\label{eq:bound-opt-k-1}
\end{align}
If $s \le W_{i'}$, then by Claims~\ref{claim: bpb-K} and~\ref{claim:i'-contribution} we get that $b_s < b_{i'}\cdot \xvec_{i'}(\ALGI)$ and the bang-per-buck of the ads in \ALGB is higher than that of $K\setminus\{i'\}$ and $\frac{b_\ell - b_s}{\ell - s}$. Thus we have,
\begin{align}
\val(\xvec^*,\mc{K}\cup \{i'\}, \vec{W}^*)
&\le  \val(\xvec^*, \mc{K}\setminus \{i'\}, \vec{W}^*) + b_s + (W_{i'}^* - s)\frac{b_\ell - b_s}{\ell - s}\notag\\
&\le \val(\xvec(\ALGB), \mc{N}, \vec{W}) + b_{i'}\cdot \xvec_{i'}(\ALGI)\label{eq:bound-opt-k-2}
\end{align}
Hence by putting both $\mc{I}$ and $\mc{K}$ together we get,
\begin{align*}
    SW(\xvec_{OPT}) &= \val(\xvec^*, \mc{I}\setminus\{i'\}, \vec{W}^*) + \val(\xvec^*, \mc{K}\cup\{i'\}, \vec{W}^*)\\
    &\le \val(\xvec(\ALGI), \mc{I}\setminus \{i'\}, \vec{W}) + \val(\xvec(\ALGB), \mc{N}, \vec{W}) + b_{i'}\cdot \xvec_{i'}(\ALGI) \\
    &\leq \val(\xvec(\ALGI), \mc{N}, \vec{W}) + \val(\xvec(\ALGB), \mc{N}, \vec{W}) \\
    &\le 2\val(\xvec(\ALGI), \mc{N}, \vec{W}) + b_{max} %b_{i''}\cdot x_{i''}(\ALGF)
\end{align*}
where the first equality is the definition of $SW(\xvec_{OPT})$, and the second inequality puts together Claim~\ref{claim: integral-OPT-I} and  Equations~\eqref{eq:bound-opt-k-1}, and~\eqref{eq:bound-opt-k-2}. The third inequality holds because $\mc{I} \cup \{ i' \} \subseteq \mc{N}$.
The final inequality holds because $b_i \cdot \xvec_i(\ALGI) \ge b_{i}\cdot \xvec_{i}(\ALGB)$ for all $i \neq i''$, and $b_{i''} \cdot \xvec_{i''}(\ALGB) \le b_{max}$. % b_{i''}\cdot x_{i''}(\ALGF)$.
%\end{proof}

Thus we have that 
$SW(\xvec_{OPT}) \le 2\val(\xvec(\ALGI), \mc{N}, \vec{W}) + b_{\max}.$
Therefore, running $\ALGI$ with probability $2/3$ and allocating $b_{\max}$ with probability $1/3$ is a $3$-approximation to $SW(\xvec_{OPT})$. Since our algorithm is randomizing between two monotone rules, it is monotone as well. 

%We give an instance which shows that the approximation of the algorithm is at least $3$ in Appendix~\ref{app: missing from positive}.

The following instance shows that the approximation of the algorithm is at least $3$ (see Appendix~\ref{app: missing from positive} for the proof).  Let $M$ be a large integer. There are four advertisers named $A, B, C, D$. We describe the set of rich ads, bang-per-buck and incremental bang-per-buck for each advertiser in Table~\ref{table: lb_richads}. Total space $W = 2M - \eps$.  We assume ties are broken in the order $A, B, C, D$ (but the example can be constructed without ties). 

%We note the bang-per-buck and incremental bang-per-buck for each type of ad. 
\begin{table}[h!]
\centering
\begin{tabular}{ |c|c|c|c| } 
 \hline
  & rich ads & bpb & ibpb \\ 
 \hline
 $A$ & $(M, 1), (M+\eps, M)$ & $M, \frac{M+\eps}{M}$ & M, $\frac{\eps}{M-1}$\\ 
 $B$ & $(1+\eps, 1), (M + \eps, M)$ & $1+\eps, \frac{M + \eps}{M}$ & $1+\eps, 1$  \\
 $C$ & $(M-1, M - 1)$ & $ 1 $ & $1$ \\
 $D$ & $(M+ 2\eps, 2M - \eps)$ & $\frac{M+ 2\eps}{2M-\eps}$ & $\frac{M+ 2\eps}{2M-\eps}$ \\
 \hline
\end{tabular}
\caption{Rich ads, Bang-per-buck(bpb) and Incremental Bang-per-bucks(ibpb) for each advertiser.}
\label{table: lb_richads}
\end{table}

The fractional optimal solution can be constructed by allocating ads in the incremental-bang-per-buck order. The incremental bang-per-buck order is: $A: (M, 1), B: (1+\eps, 1),  B: (M + \eps, M), C:(M-1, M-1), D:(M+2\eps, 2M - \eps), A: (M+\eps, M) $.
Any subsequent ad from the same advertiser fully replaces the previously selected ad. The allocation stops when the space runs out, so it will stop while allocating $C:(M-1, M-1)$ which will be allocated fractionally. The social welfare of the fractional optimal is
$M + M + \eps + \frac{(M-1-\eps)}{M-1} \cdot (M-1) = 3M - 1$.

\ALGB considers ads in the order $A: (M, 1), B: (1+\eps, 1), A: (M+\eps, M), B: (M+\eps, M), C: (M-1, M-1), D: (M+ 2\eps, 2M -  \eps)$. Once again, any subsequent ad from the same advertiser fully replaces the previously selected ad. The algorithm stops when the space runs out. Thus the algorithm stops while allocating $B: (M+ \eps, M)$. \ALGB will allocate space of $M$ to advertiser $A$ and space $M-\eps$ to advertiser $B$. \ALGI runs \ALGB and post-processes to find the best ad for the allocated space. Thus $A$ is allocated $(M+\eps, M)$ and $B$ is allocated $(1+ \eps, 1)$. The social welfare of \ALGI is $1 + \eps + M + \eps = M + 1 + 2\eps$. The maximum value allocation will select $D: (M + 2\eps, 2M - \eps)$.  
Thus the expected value of the algorithm that randomly chooses between \ALGI with probability $2/3$ and the largest value single ad with probability $1/3$ is $2/3(M + 1 + 2\eps) + 1/3(M + 2\eps)= M + 2/3 + 2\eps$ and the ratio with the fractional optimal allocation is 
$\frac{3M - 1}{M + 2\eps + 2/3} = 
\frac{3 - 1/M}{1 + \frac{2}{3M} + 2\frac{\eps}{M}}.$
This ratio can be made arbitrarily close to $3$ by choosing $\eps = 1/M$ and $M$ that is large enough.
\end{proof}

\subsection{Computing Myerson Payments}\label{appsubsec: payments}
Finally, we note that the truthful payment function matching our allocation rule (that gives the overall auction) can be computed in time that is polynomial in the number of advertisers and rich ads. 
Let \ALGM be the algorithm which simply allocates the maximum valued ad. To compute the payment for an advertiser $i$, we need to compute the expected allocation as a function of $i$'s bid $b_i$. We can compute the expected allocation from $\ALGI$ and $\ALGM$ independently, and the final allocation is just $\frac{2}{3} x_i(\ALGI(b, \bvec_{-i}, \bS) + \frac{1}{3} x_i(\ALGM(b, \bvec_{-i}, \bS)$. The payment is then $\frac{2}{3} p_i(\ALGI(\bvec, \bS)) + \frac{1}{3} p_i(\ALGM(\bvec, \bS))$. 

The payment for $\ALGM$ is simple: Advertiser $i$'s allocation is $\max_{j \in S_i} \alpha_{ij}$ if she is the highest value bidder, and zero otherwise. The expected payment $p_i$ is the second highest value  $\max_{i' \neq i, j \in S_{i'}} b_{i'j}$ if $i$ is allocated and zero otherwise. The payment for $\ALGI$ can be computed by identifying possible thresholds for $i$'s bid where $i$'s allocation changes, and computing expected allocation for these thresholds. Advertiser $i$'s allocation can change \emph{only} when his bang-per-buck for one of his ads is tied with that of another ad: there are therefore at most $O(n|\addset|^2)$ such thresholds.  Once we identify thresholds $t_1, t_2, ...$, the corresponding allocations can be computed by re-running the bang-per-buck allocation algorithm. Note that as $i$'s bid changes, the relative order of rich ads does not change for any advertiser $j \in \mc{N}$. Thus, the new bang-per-buck allocation can be computed in $O(n|\addset|)$ time. Once we have the thresholds $t_0 = 0, t_1, t_2, ...$ and the corresponding expected allocations $x_i(\ALGI(t_j, b_{-i}, \bS))$, the final payment is $\sum_{j=1} \left(x_i(\ALGI(t_j, b_{-i}, \bS)) - x_i(\ALGI(t_{j-1}, b_{-i}, \bS)\right) t_j$.

% and provide their bang-per-buck and incremental bang-per-bucks for easy reference in Table~\ref{table: lb_bpb}.
% Advertisers of type $A$ have rich ads with (value, space) $(1, 1), (1+\eps, M)$. Advertisers of type $B$ have rich ads $(\frac{M-1}{M}, M-1)$. Advertisers of type $C$ have rich ads $(M, M), (M + \eps/2, M^2)$, and advertisers of type $D$ have rich ads $(\frac{M^2-M}{M}, M^2 - M)$. There are $M$ advertisers of type $A$ and $B$ and one advertiser of types $C$ and $D$. 
% \begin{table}[h!]
% \centering
% \begin{tabular}{ |c|c|c|c|c| } 
%  \hline
%  type & rich ads & count  \\ 
%  \hline
%  $A$ & $(1, 1), (1+\eps, M)$ & $M$ \\ 
%  $B$ & $(\frac{M-1}{M}, M-1)$ & $M$ \\ 
%  $C$ & $(1+\eps, M), (M + \eps, M^2)$ & $1$  \\
%  $D$ & $(\frac{M^2-M}{M}, M^2 - M)$ & $1$  \\
%  \hline
% \end{tabular}
% \caption{Set of rich ads for each type.}
% \label{table: lb_richads}
% \end{table}

 \section{Price of Anarchy Bounds for GSP} \label{app: poa}
In this section we prove bounds on the pure and Bayes-Nash PoA when monotone algorithms are paired with the GSP payment rule. Note that unlike the previous section, in this section we bound relative to the optimal integer allocation which we denote as \IOPT. 

We consider a mechanism $\mathcal{M}$ that runs $\ALGI$ with probability $1/2$ and allocates the maximum value ad with probability $1/2$. The corresponding GSP payment for either allocation rule is charged depending on the coin flip. The bidders utility is modelled in expectation over the random coin flip that selects between the two allocation rules. We refer to $\ALGI$ as bang-per-buck allocation and that of the maximum-value ad as the max-value allocation.

The following example illustrates that GSP paired with our allocation rules can be non-truthful. 
\begin{example}
Suppose there are two advertisers. Advertiser $1$ has value $v_1 = 1$ and set of rich ads $A_1$ with one rich ad  with (click probability, space): $(1/M, 1)$. Advertiser $2$ has value $v_2 = 1$ and set of rich ads $A_2$ with two rich ads having (click probability, space): $(\eps/M, 1), (1 + \eps^2, M)$ respectively. Suppose total space $W = M$.

In this example, truth-telling is not an equilibrium. Suppose both advertisers bid truthfully. The bang-per-buck order is $2: (1+ \eps^2, M), 1:(1/M, 1), 2:(\eps/M, 1)$. Thus advertiser $2$ gets allocated $(1+\eps^2, M)$ and pays cost-per-click $\frac{1}{M} \cdot \frac{M}{1+\eps^2}$, which is the minimum bid below which 2's bang-per-buck $(1+\eps^2)\cdot bid/M$ is lower than advertiser $1$'s.  In the max-value allocation advertiser $2$ is allocated $(1+\eps^2, M)$ with the GSP cost-per-click of $1/M \cdot 1/(1+\eps^2)$. Advertiser $2$'s utility is
\begin{align*}
u_2(v_1, A_1, v_2, A_2) &= \frac{1}{2} (1+ \eps^2)\left(1 - \frac{1}{M} \cdot \frac{M}{(1+\eps^2)}\right) + \frac{1}{2} (1+\eps^2)\left(1 - \frac{1}{M} \cdot \frac{1}{(1+\eps^2)}\right) \\
&= \frac{1}{2} (1 + 2 \eps^2 - 1/M)
\end{align*}
However, if advertiser $2$ bids $\frac{1}{2}$, advertiser $2$'s utility is
$$u_2(1/2, A_1, v_2, A_2) = \frac{1}{2} (\eps/M) + \frac{1}{2} (1+\eps^2 - 1/M) $$ 
In the second case, the calculation for max-value is the same. In the bang-per-buck allocation, advertiser A has higher bang-per-buck and is allocated first. The rest of the space is allocated to advertiser $2$ which is filled with the smaller ad. The GSP cost-per-click is zero. The latter utility is higher for $\eps < 1/ M$.

% Advertiser $1$ bidding truthfully and advertiser $2$ bidding $M/2$ is a pure Nash equilibrium. The social welfare of the equilibrium is $\frac{1}{2} (M \cdot \frac{\eps}{M} + 1) + \frac{1}{2} M = M/2 + 1/2 + \eps/2$. While the integer-optimum has value $M$, thus the Price of Anarchy is at least $2 - \eps$ for every $\eps > 0$. 
\end{example}

We will use $SW_{bpb}$ and $SW_{max}$ to denote the social welfare of the bang-per-buck %($\ALGI$) 
and max-value %\footnote{Algorithm that allocates to the maximum value ad} allocation 
algorithms. Then,
$SW_{\mathcal{M}}(\bvec, \bS) = \frac{1}{2} SW_{bpb}(\bvec, \bS) + \frac{1}{2} SW_{max}(\bvec, \bS).$
%= p \sum_i v_{i\rho(i)} + (1-p) v_{m, \mu},$$
%\dnote{where $v_{i\rho(i)}$ denotes the welfare of advertiser $i$ from the bang-per-buck allocation and $v_{m, \mu}$ denotes the welfare of the max-bid advertiser}. 
We use $u_i^{max}$ and $u_i^{bpb}$ to denote the utility of advertiser $i$ in the max-value and bang-per-buck allocation respectively.

The key challenge is in bounding the social welfare of the % algorithm~\ref{alg: integer monotone} which we also refer as the 
bang-per-buck allocation. %\amnote{move this up since we use bpb in the prev paragraph}. 
Recall that the bang-per-buck algorithm allocates in the order of $b_{ij}/w_{ij}$. %\amnote{I don't think $\alpha$s have been defined?}
Rich ads from an advertiser that occupy less space than a previously allocated higher bang-per-buck rich ad are not picked. As the algorithm continues, it might replace a previously allocated rich ad \footnote{{(We will refer to these ads as being temporarily allocated)}} of an advertiser with another one that occupies more space (but may have lower value). The algorithm stops, when the next rich ad  cannot fit within the available space or the set of rich ads runs out.  We also post-process each advertiser to choose a rich ad with the highest value that fits within allocated space. 

%\subsection{Helpful Lemmas}
%We first prove some Lemmas that are used in proving all price of anarchy bounds.

We will develop a little notation to make the argument cleaner. For each $i$ and $j\in S_i$ we use $(i,j)$ to denote this rich-ad. Without loss, we can assume that the space $w_{ij}$ occupied by any rich-ad $(i, j)$ is integer and the total space available $W$ is also integer.  Let's think of the algorithm as consuming space in discrete units. 
% \amnote{The following process to name the $i(k), j(k)$ does not seem very clear. What happens when an advertiser's ad get replaced by a larger one?} \knote{Aranyak, PTAL}
For each unit of space, we can associate with it the rich ad that was first allocated to cover that unit of space. For the $k$'th unit of space, let $i(k)$ as the advertiser the $k$'th unit of space is allocated to, and $j(k)$ the rich ad $j(k) \in S_{i(k)}$ that was (temporarily) allocated for advertiser $i(k)$ when the $k$'th unit of space was first covered. Note that $j(k)$ may not be the final rich ad allocated to $i(k)$. In general, the algorithm stops before all the space runs out, in particular, because the next rich ad in the bang-per-buck order is too large to fit in the remaining space. We associate this rich-ad with each of the remaining units of space. It helps to be able to identify $(i(k), j(k))$ as we know that the bang-per-bucks $b_{i(k)j(k)}/w_{i(k)j(k)}$ are non-increasing as k increases. 

We use the following lemma to relate an upper bound to the payment any ``small'' ad has to pay and the equilibrium social welfare. 
\begin{lemma} \label{lem: beta-bound}
Consider an equilibrium profile of per click bids $\bvec$ and sets of rich ads $\bS$. We assume the bids satisfy no overbidding $b_i \leq v_i$ for each $i$. 
Let $k = \lfloor W/2 \rfloor + 1$, $(r, j) = (i(k), b(k))$, and 
%($ be the last rich ad\footnote{\dnote{The ad belongs to advertiser $r$ and $j\in S_r$.}} that is allocated\dnote{[DM: including when allocated temporarily]} in the bang-per-buck order when there is at least $W/2$ space remaining. 
$\beta = \frac{b_{rj}}{w_{rj}}$. Then, 
% \begin{enumerate}
%     \item If $(r, j)$ is allocated in the bang-per-buck allocation, $\beta W \leq 2 SW_{bpb}(\bvec, \bS)$.
%     \item If $(r, j)$ is not allocated in the bang-per-buck allocation, $\beta W \leq 2 SW_{max}(\bvec, \bS)$. 
% \end{enumerate}
% Combining the two cases we get, 
$ \beta W \leq 2 \cdot SW_{bpb}(\bvec, \bS) + 2 \cdot SW_{max}(\bvec, \bS)$.
\end{lemma}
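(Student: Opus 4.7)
The approach is to analyze the snapshot of $\ALGB$ at the instant when the $k$-th unit of space is first covered, and to lower bound $SW_{bpb}$ directly via the bang-per-buck and size of the ads temporarily allocated at that instant. Let $\mc{N}_T$ be the set of advertisers with an allocated ad at that instant, and for each such $i$ let $j_i$ denote the rich ad temporarily allocated to $i$ at that moment. Two facts are immediate: (i) every $(i, j_i)$ satisfies $b_{ij_i}/w_{ij_i} \ge \beta$, because $\ALGB$ processes ads in non-increasing bang-per-buck order and $(r, j)$ is being processed at that instant; and (ii) letting $u$ denote the total space used just before $(r,j)$ is processed and $\Delta = w_{rj} - w_{r,\mathrm{prev}}$ the size that processing $(r,j)$ would add, the $k$-th unit lies in $(u, u + \Delta]$, so that $u + \Delta \ge k \ge W/2$.

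I would then transfer this bound from $\ALGB$'s temporary snapshot to $\ALGI$'s final allocation. During $\ALGB$ each advertiser's allocated space only grows (an ad is only replaced by a strictly larger ad of the same advertiser), so every $i \in \mc{N}_T$ has final space $W_i \ge w_{ij_i}$. The post-processing step of $\ALGI$ then assigns $i$ the highest-value ad of size at most $W_i$, which has value at least $b_{ij_i} \ge \beta w_{ij_i}$. Combining with the no-overbidding assumption, which gives $v_i \alpha_{ij} \ge b_{ij}$ for every $i,j$, yields
\[
SW_{bpb}(\bvec, \bS) \;\ge\; \sum_{i \in \mc{N}_T} b_{ij_i} \;\ge\; \beta \sum_{i \in \mc{N}_T} w_{ij_i} \;\ge\; \beta(u+\Delta) \;\ge\; \beta W/2,
\]
so that $\beta W \le 2\,SW_{bpb}$ whenever $(r,j)$ is allocated integrally by $\ALGB$.

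The subtle case is when $(r,j)$ triggers the fractional step~4 of $\ALGB$, in which case $W_r < w_{rj}$ and $\ALGI$ may only be able to give $r$ the previously allocated ad $(r, j_{\mathrm{prev}})$, whose value is $\ge \beta w_{r,\mathrm{prev}}$ rather than $\ge \beta w_{rj}$. The bound above then degrades to $SW_{bpb} \ge \beta u$. Writing $w$ for the remaining space just before step~4 (so $W = u + w$), the residual $\beta w$ is at most $\beta w_{rj} = b_{rj}$, since step~4 is triggered precisely when $w < w_{rj} - w_{r,\mathrm{prev}} \le w_{rj}$. I would then absorb this residual using the $SW_{max}$ term: the maximum-value allocation is always at least as good as picking the single ad $(r,j)$, so by no overbidding $SW_{max} \ge v_r \alpha_{rj} \ge b_{rj}$. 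This gives $\beta W = \beta u + \beta w \le SW_{bpb} + SW_{max}$ in the fractional case, and combined with the other case yields $\beta W \le 2\,SW_{bpb} + 2\,SW_{max}$.

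The hard part is really just the book-keeping around the fractional-allocation edge case; once that is handled via the $SW_{max}$ term, the rest of the argument is a clean ``bang-per-buck times space'' estimate, made possible by $\ALGI$'s post-processing which inherits at least the value of $\ALGB$'s temporarily allocated prefix. Conceptually, the single fractional ad at the boundary is the only place where $\ALGI$ alone can lose value, and it is exactly why the mechanism pairs $\ALGI$ with the max-value allocation rule.
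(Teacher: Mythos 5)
Your proof is correct, and while it shares the core insight with the paper's proof (every ad temporarily held when the $\lfloor W/2\rfloor+1$'st unit is covered has bang-per-buck $\ge\beta$, and post-processing plus no-overbidding transfer this to $SW_{bpb}$; the max-value term mops up the fractional boundary), you organize the argument around a \emph{snapshot} of $\ALGB$ rather than the paper's unit-by-unit decomposition, and your version is actually cleaner and more robust. The paper partitions $\{1,\dots,k^*\}$ into intervals $(K_{t-1},K_t]$ with $K_t - K_{t-1} = w_{i(K_t)j(K_t)}$ and then collapses $\sum_k \frac{v_{i(k)j(k)}}{w_{i(k)j(k)}}$ to $\sum_t v_{i(K_t)j(K_t)} \le SW_{bpb}$, but this decomposition is shaky when an ad \emph{replaces} a prior ad of the same advertiser: the new ad only ``first covers'' $w_{ij}-w_{ij'}$ fresh units rather than $w_{ij}$, so the claimed interval lengths do not match the ad sizes, and summing temporary values $v_{i(K_t)j(K_t)}$ over intervals can double-count an advertiser whose ad is replaced in the prefix, making $\sum_t v_{i(K_t)j(K_t)} \le SW_{bpb}$ false in general. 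Your argument sidesteps all of this by working with the snapshot set $\mathcal{N}_T$, which contains exactly one ad per advertiser with total size precisely the space used, so $\beta\,\sum_{i\in\mathcal{N}_T}w_{ij_i} \le \sum_{i\in\mathcal{N}_T}b_{ij_i} \le SW_{bpb}$ holds with no over-counting. Your treatment of the fractional/boundary case also differs slightly: the paper argues that in the non-integral case $w_{rj} > W/2$ and hence $\beta W \le 2b_{rj} \le 2SW_{max}$, discarding $SW_{bpb}$ entirely, while you split $\beta W = \beta u + \beta w$ and get the marginally tighter $\beta W \le SW_{bpb} + SW_{max}$. Both routes comfortably give the stated $\beta W \le 2SW_{bpb} + 2SW_{max}$.
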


\begin{proof}
Recall that we defined $(i(k), j(k))$ as the rich-ad allocated by $\ALGB$ to the $k$'th unit of space. Then we let $k^* = \lfloor W/2 \rfloor + 1$, $(r, j) = (i(k^*), j(k^*)$ and $\beta = b_{rj}/w_{rj}$. Note that $(r, j)$ may not be allocated integrally if it is bigger than the remaining space. 
% Note that there is always an ad such that before the ad was allocated the available space was $>= W/2$, and if the ad was placed remaining space will be $< W/2$. This ad might be larger than the remaining space and may not be placed. In the later case, the allocation stops. We consider these two cases. \\

Case (i): $(r, j)$ is allocated integrally \\
%For any rich ad (temporarily) allocated before $(r, j)$, the bang-per-buck is higher. Let $I$ denote the set of advertisers that were allocated before 
%\dnote{$r$} (note that $r \in I$) and \dnote{for each $i\in I$} let $x(i)$ denote their allocated rich-ads\footnote{\dnote{Maybe we should say: let $t$ be the time step when $(r,j)$ is allocated, and $x(i)$ is the ad allocated to $i$ at time $t$.} \knote{assume the space is in integer units. We can associate with each unit the rich-ad and hence advertiser that was allocated to that unit of space. Then $(r, j)$ is the advertiser allocated to $\lfloor(W/2)\rfloor + 1$ and $(i, x(i)$ are allocated to $t < W/2$}}.
%We have that for each $i$, $\frac{b_{i x(i)}}{w_{ix(i)}} \geq \frac{b_{rj}}{w_{rj}}$ and $\sum_{i \in I} w_{ix(i)} \geq W/2$. These might not be the final rich-ads allocated to advertisers in $i$, but we know that the final allocated rich ads has a higher values. Thus we get,  
The rich-ads are also allocated contiguously. Let $K_0 = 0$ and $K_t = K_{t-1} + w_{i(K_{t})j(K_{t})}$. $K_t$ is the cumulative units of space covered by the first $t$ rich ads. For each $k \in \{K_{t-1} + 1, K_{t-1} + 2, \ldots, K_t\}$, $i(k) = i(K_t)$ and $j(k) = j(K_t)$. Then $\sum_{k = K_{t-1} + 1}^{K_{t}} \frac{v_{i(k)j(k)}}{w_{i(k)j(k)}} = v_{i(K_{t})j(K_{t})}$. 
\begin{align*} 
    \beta W &\leq 2 \beta k^* = 2 \frac{b_{i(k^*)j(k^*)}}{w_{i(k^*)j(k^*)}} k^*\\
    &\leq 2\sum_{k=1}^{k^*} \frac{b_{i(k)j(k)}}{w_{i(k)j(k)}} \\
    &\leq 2\sum_{k=1}^{k^*} \frac{v_{i(k)j(k)}}{w_{i(k)j(k)}} 
    = 2 \sum_{t=1} \sum_{k = K_{t-1} + 1}^{K_{t}} \frac{v_{i(k)j(k)}}{w_{i(k)j(k)}} 
    = 2 \sum_{t=1} v_{i(K_{t})j(K_{t})} \\
    &\leq 2SW_{bpb}(\bvec, \bS)
\end{align*}
In the first inequality, we use that $k^* > W/2$. Second inequality follows since, for each $k \leq k^*$, $b_{i(k)j(k)}/w_{i(k)j(k)} \geq \beta$.
In the third step, we use the no-overbidding assumption. And the last inequality uses the fact that the temporary allocation $v_{i(K_{i}))j(K_{i}} \leq v_{i(K_i) \rho(i(K_i))}$ where $\rho(i)$ denotes the ad allocated to advertiser $i$ in the bang-per-buck allocation. \\
Case(ii) $(r, j)$ is not placed. \\
We have that $w_{rj} > W/2$ and $\beta \leq 2 b_{rj}/W$. Hence,$\beta W \leq 2 b_{rj} \leq 2 SW_{max}(\bvec, \bS)$.

Combining the inequalities for the two cases, we have, $\beta W \leq 2 SW_{bpb}(\bvec, \bS) + 2SW_{max}(\bvec, \bS)$
\end{proof}

The following lemma establishes a bound on the utility of an advertiser with rich ad $(i, t)$ of size less than $W/2$ bidding at least $\beta \frac{w_{it}}{\alpha_{it}}$ where $\alpha_{it}$ is the probability of click for rich ad $(i, t)$ . We use these deviations with the equilibrium condition to relate the social welfare of a bid-profile to that of a target optimal outcome. %The proofs of Lemmas~\ref{lem: beta-bound} and~\ref{lem: deviation} are in Appendix~\ref{app:helpful-proofs}.
%\knote{For Bayes-Nash, we need this Lemma for a bid greater than $\beta w_{it}/\alpha_{it}$.}\dnote{I changed the proof to say $(i,t)$ will be allocated when bid greater than $\beta w_{it}/\alpha_{it}$.}
\begin{lemma} \label{lem: deviation}
Let $k = \lfloor W/2 \rfloor + 1$, $(r, j) = (i(k), b(k))$, and $\beta = b_{rj}/w_{rj}$ as defined above. Then for an advertiser $i$, with rich-ad $(i, t)$ with $w_{it} \leq W/2$ bidding $(y, \{t\})$ with $y = \min\{v_i, \beta w_{it}/\alpha_{it} + \varepsilon\}$, for some $\varepsilon$,
$u_i^{bpb}(y, \{t\}, \bvec_{-i}, \bS_{-i}) \geq v_{it} - \beta w_{it}.$
%with $v_i \geq y \geq \beta\frac{w_{it}}{\alpha_{it}}$
%$$u_i^{bpb}(y, A_i, \bvec_{-i}, \bS_{-i}) \geq v_{it} - y \alpha_{it}$$
\end{lemma}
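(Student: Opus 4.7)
The plan is to show that under the deviation $(y, \{t\})$, advertiser $i$ is allocated ad $(i,t)$ with $\alpha_{it}$ expected clicks and pays a GSP cost-per-click of at most $\beta w_{it}/\alpha_{it}$. Start with the trivial case: if $v_i \le \beta w_{it}/\alpha_{it}$, then $v_{it} - \beta w_{it} \le 0$ and the inequality follows from individual rationality (under no overbidding, $cpc_i \le y \le v_i$, so utility is nonnegative). Henceforth assume $v_i > \beta w_{it}/\alpha_{it}$; then for small enough $\varepsilon > 0$ we have $y > \beta w_{it}/\alpha_{it}$, and the bang-per-buck of $(i,t)$ under the new bid, $y \alpha_{it}/w_{it}$, is strictly greater than $\beta$.

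The key step is to show that $(i,t)$ is temporarily allocated by $\ALGB$ under the deviated profile. I would compare the run of $\ALGB$ on $(\bvec, \bS)$ with the run on $(y, \bvec_{-i}, \{t\}, \bS_{-i})$. By the definition of $(r,j) = (i(k^*), j(k^*))$ as the first ad to cover unit $k^* = \lfloor W/2 \rfloor + 1$, the total space occupied by the temporarily-allocated ads just before $(r,j)$ is processed in the original run is exactly $k^* - 1 = \lfloor W/2 \rfloor$, so the ads of advertisers $\ne i$ in this set use at most $\lfloor W/2 \rfloor$ space. The relative order in which the ads of advertisers $\ne i$ are processed is identical in both runs, and the deviated run has (weakly) more remaining space at every step since $i$'s contribution is never larger; by induction on the processing order, the temporarily-allocated ad of each $j \ne i$ in the deviated run coincides with that in the original run up to the moment $(i,t)$ is considered. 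Because $(i,t)$ has bang-per-buck strictly greater than $\beta$, it is reached before $(r,j)$; at that step the space used by others is at most $\lfloor W/2 \rfloor$, so the free space is at least $\lceil W/2 \rceil \ge w_{it}$ and $(i,t)$ fits integrally. Since $\{t\}$ is $i$'s entire reported set, no later replacement of $(i,t)$ can occur, and the post-processing step of $\ALGI$ leaves $(i,t)$ as $i$'s allocated ad with expected clicks $\alpha_{it}$.

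To conclude, I bound the GSP payment. The argument above applies verbatim to every bid $y' > \beta w_{it}/\alpha_{it}$: at any such $y'$, the bang-per-buck of $(i,t)$ exceeds $\beta$ and $(i,t)$ is allocated with $\alpha_{it}$ expected clicks. By the definition of GSP, the cost-per-click is the infimum of such $y'$, hence at most $\beta w_{it}/\alpha_{it}$, so the payment is at most $\beta w_{it}$. Therefore
$$u_i^{bpb}(y, \{t\}, \bvec_{-i}, \bS_{-i}) \;\ge\; v_i \alpha_{it} - \beta w_{it} \;=\; v_{it} - \beta w_{it}.$$

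The main technical obstacle I anticipate is formalizing the coupling between the two runs, because $\ALGB$'s replacement mechanism (a larger ad may supersede a smaller one tentatively assigned to the same advertiser) interacts nontrivially with the ``does it fit'' check in Steps~3--4. One must argue that no ad of $j \ne i$ is blocked earlier in the deviated run than in the original; the saving grace is that shrinking $i$'s footprint only enlarges the remaining space, so the deviated run progresses at least as far as the original along the shared portion of the bang-per-buck order.
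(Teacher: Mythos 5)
Your proof is correct and follows essentially the same approach as the paper (which proves Lemma~\ref{lem: deviation} as a corollary of the slightly more general Lemma~\ref{lem: bayes-deviation}): handle the case $v_i \le \beta w_{it}/\alpha_{it}$ by individual rationality, argue that under bid $y' > \beta w_{it}/\alpha_{it}$ the ad $(i,t)$ enters the bang-per-buck order before $(r,j)$ so that the space occupied by others is at most $\lfloor W/2\rfloor$ and $(i,t)$ fits integrally, and then bound the GSP cost-per-click by $\beta w_{it}/\alpha_{it}$. The only meaningful difference is that you spell out the run-coupling argument (why advertisers $\neq i$ occupy at most $\lfloor W/2\rfloor$ units under the deviation) in more detail than the paper does; the paper asserts this more tersely.
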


We prove the following lemma, which covers Lemma~\ref{lem: deviation}, and is also used in the Bayes-Nash POA proof. 
\begin{lemma} \label{lem: bayes-deviation}
Let $k = \lfloor W/2 \rfloor + 1$, $(r, j) = (i(k), b(k))$, and $\beta = b_{rj}/w_{rj}$ as defined above. Then for an advertiser $i$, with rich-ad $(i, t)$ with $w_{it} \leq W/2$ bidding $(y, A_i)$ with $v_i \geq y \geq \beta\frac{w_{it}}{\alpha_{it}}$
$$u_i^{bpb}(y, A_i, \bvec_{-i}, \bS_{-i}) \geq v_{it} - y \alpha_{it}$$
Moreover, for some $\varepsilon$, $y = \min\{v_i, \beta w_{it}/\alpha_{it} + \varepsilon\}$,
$$u_i^{bpb}(y, \{t\}, \bvec_{-i}, \bS_{-i}) \geq v_{it} - \beta w_{it}.$$
\end{lemma}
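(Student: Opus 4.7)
The plan is to reduce both utility bounds to a single structural claim — namely that under the stated deviation, $\ALGI$ allocates advertiser $i$ an ad with click probability at least $\alpha_{it}$ — and then close out each bound with a short GSP pricing argument.

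To prove the structural claim, I would first show that after running $\ALGB$ under the deviation, advertiser $i$'s total allocated space satisfies $W_i \geq w_{it}$; post-processing in $\ALGI$ then selects the best-$\alpha$ ad of $i$ fitting in $W_i$ (in the singleton-set case this is necessarily $(i,t)$ itself, so $\alpha_{ij^*} = \alpha_{it}$; in the case $S_i' = A_i$, $(i,t)$ is a candidate, so $\alpha_{ij^*} \geq \alpha_{it}$). The key leverage is that in the deviation the bang-per-buck of $(i,t)$ equals $y\alpha_{it}/w_{it} \geq \beta$, while for every other advertiser $l \neq i$ all bang-per-bucks are identical to those in the equilibrium. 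So the set of $l$'s ads that $\ALGB$ processes before reaching $(i,t)$ (or before termination) is contained in $l$'s ads with bang-per-buck at least $\beta$, which lets me bound $W_l$ at that moment by $W_l(\sigma_{\mathrm{eq}})$, the value in the equilibrium state just before $(r,j)$ is processed. Summing over $l \neq i$ yields $\sum_{l \neq i} W_l \leq k-1 = \lfloor W/2 \rfloor \leq W/2$. I then case-split: (a) if $\ALGB$ actually reaches $(i,t)$ in the deviation, then either a previous ad $(i,j')$ with $w_{ij'} \geq w_{it}$ was already integrally allocated (so $W_i \geq w_{it}$), or $(i,t)$ is still in $E_i$ and fits because the slack $W - \sum_{l\neq i}W_l - W_{i,\mathrm{prev}} \geq W/2 - W_{i,\mathrm{prev}} \geq w_{it} - W_{i,\mathrm{prev}}$ suffices; (b) if $\ALGB$ terminates earlier via a fractional allocation of some $(m,j')$, then for $m=i$ the fractional step gives $W_i = W - \sum_{l\neq i}W_l \geq W/2 \geq w_{it}$, and for $m \neq i$ the fractional contribution still satisfies $W_m \leq w_{m,j'} \leq W_m(\sigma_{\mathrm{eq}})$, so the bound $\sum_{l\neq i} W_l \leq W/2$ persists and $W_i \geq W/2 \geq w_{it}$ again.

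Once the structural claim is in hand, the two utility bounds follow quickly. For the first bound, bidding $y$ yields $\alpha_{ij^*}$ clicks, so the GSP threshold satisfies $cpc_i \leq y$, and I would write $u_i^{bpb} = (v_i - cpc_i)\alpha_{ij^*} \geq (v_i - y)\alpha_{ij^*} \geq (v_i - y)\alpha_{it} = v_{it} - y\alpha_{it}$, using $v_i \geq y$ and $\alpha_{ij^*} \geq \alpha_{it}$. For the second bound, with $S_i' = \{t\}$ we have $j^* = t$ and $x_i = \alpha_{it}$; applying the structural claim to \emph{every} bid $b > \beta w_{it}/\alpha_{it}$ shows that $i$ is allocated $(i,t)$ at every such bid, so $cpc_i \leq \beta w_{it}/\alpha_{it}$, giving $p_i \leq \beta w_{it}$ and $u_i^{bpb} \geq v_{it} - \beta w_{it}$. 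In the corner case $v_i < \beta w_{it}/\alpha_{it}$ (so $y = v_i$), the no-overbidding hypothesis forces $u_i^{bpb} \geq 0 > v_{it} - \beta w_{it}$, and the bound is trivial.

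The hard part will be the fractional-termination subcase in the structural claim when some ad from $m \neq i$ exhausts the space before $(i,t)$ is reached: I would need to argue carefully that the post-fractional $W_m$ is bounded by $W_m(\sigma_{\mathrm{eq}})$ so that the $\sum_{l \neq i}W_l \leq W/2$ invariant is preserved all the way through termination, without circular dependence on how much slack remained at earlier steps. Tie-breaking at bang-per-buck exactly $\beta$ is a secondary irritation, but the $\varepsilon$-perturbation in the second-part deviation cleanly avoids it, and an infimum/limit argument takes care of it in the first part.
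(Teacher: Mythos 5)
Your proposal follows essentially the same route as the paper's proof: the central claim is that the deviation gives $(i,t)$ bang-per-buck above $\beta$, so it is processed before $(r,j)$, and since the space taken by advertisers other than $i$ before $(r,j)$ (hence before $(i,t)$) is at most $\lfloor W/2\rfloor\le W/2$, there is room for $(i,t)$; the two utility bounds then follow from GSP pricing (cpc bounded by the bid for the first, by the bang-per-buck of $(r,j)$ for the second). You are somewhat more explicit than the paper about the fractional-termination subcase and tie-breaking, but these close out the same way, so the arguments are materially identical.
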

\begin{proof}
First, %let $b'_i= \beta\frac{w_{it}}{\alpha_{it}} + \varepsilon < v_i$. 
we will argue that with any bid $b'_i \in (\beta \frac{w_{it}}{\alpha_{it}}, v_i]$, and set of rich ads $S'_i \subseteq A_i$ such that $t \in S'_i$, $i$ will be allocated a rich ad of value at least $v_{it}$. This is tricky, because changing $i$'s bid also changes the allocation of all advertisers allocated earlier in the bang-per-buck order. However, as long as the space occupied by all ads other than $i$ when we get to $(i, t)$ is at most $W/2$, there is sufficient space remaining to place $(i, t)$. Since $b'_i > \beta\frac{w_{it}}{\alpha_{it}}$, $\frac{b'_i \alpha_{it}}{w_{it}} > \beta$, and $(i,t)$ appears %right 
before $(r,j)$ in bang-per-buck order. Hence the space allocated to others before $(i,t)$ is at most $W/2$. 
Thus the bang-per-buck algorithm will allocate at least $(i, t)$. Suppose the algorithm allocates $(i, j)$ instead, then $w_{ij} \geq w_{it}$ and the post processing step guarantees that $v_{ij} \geq v_{it}$ and $\alpha_{ij} \geq \alpha_{it}$. Since the GSP cost-per-click is at most the bid $b'_i$ we get
\[
 u_i^{bpb}(b'_i, S'_i, \bvec_{-i}, \bS) \ge v_{ij} - b'_i \alpha_{ij} = \alpha_{ij}(v_i - b'_i)
      \ge v_{it} - b'_i\alpha_{it}
\]
In the last step we use that $b'_i \leq v_i$.

If $v_i < \beta\frac{w_{it}}{\alpha_{it}}$, let us consider the deviation $(b'_i,\{t\})$ then by setting $b'_i = v_i$ we get $u_i^{bpb}(b'_i, A_i, \bvec_{-i}, \bS) \geq 0 \geq v_{it} - \beta w_{it}$. This is because the GSP cost-per-click is at most the bid $b'_i \le\beta \frac{w_{it}}{\alpha_{it}}$.

Next we consider a deviation $b_i', \{(t)\}$ with $b'_i= \beta\frac{w_{it}}{\alpha_{it}} + \varepsilon < v_i$, for some $\varepsilon > 0$ such that $(r,j)$ immediately follows $(i,t)$ in bang-per-buck order. In this case $b'_i$ is still sufficient for $i$ to be allocated at least $(i, t)$. The GSP payment of $(i,t)$ is at most $\beta\cdot{w_{it}}$, as GSP payment is set by the bang-per-buck of $(r,j)$ or lower. Thus,
\begin{align*}
    u_i^{bpb}(b'_i, \{t\}, \bvec_{-i}, \bS) 
     & \ge v_{it} - \beta w_{it} \qedhere
\end{align*}
\end{proof}

%\subsection{Pure Price of Anarchy}
Now we can prove the pure price of anarchy bound. In a pure PoA proof, we can consider deviations that depend on the other players bids which allows us to obtain a tighter analysis. 
\begin{theorem}
With the no-overbidding assumption, the pure PoA of the mechanism that selects using $\ALGI$ with probability $1/2$, selects the maximum value ad with probability $1/2$ and charges the GSP payment in each case is at most $6$
\end{theorem}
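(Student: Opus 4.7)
The plan is to bound the optimal integer welfare $SW(\IOPT)$ against the equilibrium welfare of $\mathcal{M}$ by exhibiting a tailored deviation for each bidder that wins in \IOPT, leveraging Lemmas~\ref{lem: beta-bound} and~\ref{lem: deviation}. Fix an equilibrium $(\bvec,\bS)$ satisfying no-overbidding, and let $t_i^*$ denote the ad \IOPT assigns to bidder $i$. Partition the \IOPT-winners into ``small'' bidders $S = \{i : w_{it_i^*} \le W/2\}$ and the at-most-one ``large'' bidder $i^*\in L$ with $w_{i^* t^*}>W/2$ and value $V^* = v_{i^* t^*}$ (there is at most one large bidder because $W/2 + W/2 = W$). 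Let $\beta$, $(r,j)$ be as defined just above Lemma~\ref{lem: beta-bound}.

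For each $i\in S$, I would plug into the equilibrium inequality the deviation $(y_i,\{t_i^*\})$ from Lemma~\ref{lem: deviation} with $y_i=\min\{v_i,\beta w_{it_i^*}/\alpha_{it_i^*}+\varepsilon\}$. That lemma gives $u_i^{bpb}(\mathrm{dev})\ge v_{it_i^*}-\beta w_{it_i^*}$, and individual rationality of GSP under no-overbidding forces $u_i^{max}(\mathrm{dev})\ge 0$. Averaging over the coin flip and invoking the equilibrium condition yields $u_i^{eq}\ge \tfrac12(v_{it_i^*}-\beta w_{it_i^*})$; summing over $i\in S$ and using $\sum_{i\in S} w_{it_i^*}\le W$ gives $\sum_{i\in S}v_{it_i^*}\le 2\sum_{i\in S}u_i^{eq}+\beta W$.

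For $i^*$, I would use the deviation $(v_{i^*},\{t^*\})$ against the max-value component. A short case analysis---either $i^*$ wins max-value under the deviation (in which case the GSP cost-per-click is set by the next-highest bid-value, which is at most $SW_{max}(\bvec,\bS)$ by no-overbidding), or $i^*$ loses, in which case the equilibrium max-value winner has bid-value $\ge V^*$ and hence value $\ge V^*$ by no-overbidding---yields $V^*\le SW_{max}(\bvec,\bS)+2u_{i^*}^{eq}$. Adding this to the small-bidder aggregate produces the master inequality
\[OPT\ \le\ 2\sum_i u_i^{eq}+\beta W+SW_{max}\ \le\ 2SW_\mathcal{M}+\beta W+SW_{max},\]
where the last step uses $\sum_i u_i^{eq}\le SW_\mathcal{M}$ since GSP payments are non-negative.

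The final step is to close out by splitting into the two cases of Lemma~\ref{lem: beta-bound}. In Case~(i), where $(r,j)$ fits integrally, $\beta W\le 2SW_{bpb}$ plugs in to give $OPT\le 3SW_{bpb}+2SW_{max}\le 3(SW_{bpb}+SW_{max})=6\,SW_\mathcal{M}$. The main obstacle is Case~(ii), where $(r,j)$ does not fit and $\beta W\le 2SW_{max}$: naively substituting only yields $SW_{bpb}+4SW_{max}$, which is loose. I expect the tightening to come from exploiting that in Case~(ii), $(r,j)$ is itself large with bid-value $\ge \beta W/2$ (so $SW_{max}$ is already substantial and absorbs most of $V^*$), sharpening the large-ad bound to $V^*\le SW_{max}$ directly rather than the generic $V^*\le SW_{max}+2u_{i^*}^{eq}$, which removes the extra $SW_{max}$ term and lands on $6\,SW_\mathcal{M}$.
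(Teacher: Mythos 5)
Your overall architecture matches the paper's: deviate each small \IOPT-winner to $(y_i,\{\tau(i)\})$ via Lemma~\ref{lem: deviation}, deviate the (unique) large winner $i^*$ against the max-value component, and control $\beta W$ via Lemma~\ref{lem: beta-bound}. The Case~(i) arithmetic you carry out ($\beta W\le 2SW_{bpb}$, giving $OPT\le 3SW_{bpb}+2SW_{max}\le 6SW_{\mathcal{M}}$) is correct. The problem is Case~(ii), and your proposed repair does not close it.

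The gap is that you bound $\sum_{i\in S} w_{i\tau(i)}\le W$. The paper's proof hinges on the observation that whenever a large winner $i^*$ with $w_{i^*\tau(i^*)}>W/2$ exists, feasibility of \IOPT\ forces $\sum_{i\in S} w_{i\tau(i)}<W/2$. Substituting this into the small-bidder aggregate replaces $\beta W$ by $\beta W/2$, and then \emph{both} cases of Lemma~\ref{lem: beta-bound} give $\beta W/2\le SW_{bpb}+SW_{max}$, so no (i)/(ii) split is needed at all: one gets $OPT\le 2SW_{bpb}+3SW_{max}\le 6SW_{\mathcal{M}}$ directly. (When no $i^*$ exists, the $+SW_{max}$ term vanishes and $\sum_i w_{i\tau(i)}\le W$ with $\beta W\le 2(SW_{bpb}+SW_{max})$ already gives $6SW_{\mathcal{M}}$.) Your fix, replacing $V^*\le SW_{max}+2u^{eq}_{i^*}$ by $V^*\le SW_{max}$, does not save Case~(ii): first, $V^*=v_{i^*\tau(i^*)}$ is a true value while $SW_{max}$ is the realized welfare of the max-value component in equilibrium, and nothing forces $V^*\le SW_{max}$ (the equilibrium max-value winner could be a shaded small bidder while $i^*$ has the large true value --- the correct derivable bound is exactly $V^*\le SW_{max}+2u^{eq}_{i^*}$ via a truthful deviation); and second, even granting $V^*\le SW_{max}$, the master inequality still reads $OPT\le 2\sum_{i\in S}u^{eq}_i+\beta W+SW_{max}\le 2SW_{\mathcal{M}}+2SW_{max}+SW_{max}=SW_{bpb}+4SW_{max}$, which can be as large as $8SW_{\mathcal{M}}$. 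You need the $W/2$ tightening on the small bidders' space, not a sharpening of the $i^*$ term.
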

\begin{proof}
Consider the integer optimal allocation \IOPT. For each $i$, let $\tau(i)$ denote the rich ad selected for advertiser $i$ in \IOPT. If an advertiser $i$ is not allocated in \IOPT, we set $\tau(i) = 0$ which indicates the null ad. 
%$\tau(i) = 0$ \dnote{be the null-ad}
%with $\alpha_{i0} = 0$ and $s_{i0} = 0$ 
%if advertiser $i$ is not allocated in \IOPT. 
Let $(\bvec, \bS)$ denote a pure Nash equilibrium bid profile. The allocation of the mechanism is composed of two parts - bang-per-buck allocation and max-value-allocation. 
%Let $b_{max}$ denote the highest value\footnote{\dnote{$b_{max} = \max_{i,j\in S_i} b_i \alpha_{ij}$. Note that, $b_{max}$ is \emph{not} the highest bid $b_i$. }} according to bids $(\bvec, \bS)$. Suppose this rich-ad belongs to advertiser $m$ and rich-ad $\mu$. Then $b_{max} = b_m \alpha_{m, \mu}$. 
Let $\gamma(i)$ refer to the max-value allocation for advertiser $i$, but note that $\gamma(i) = 0$, i.e. the null ad, for all but one ad.  
Let $\rho(i)$ denote the rich ad allocated to advertiser $i$ in the bang-per-buck allocation.
%Note that $\gamma_i = 0$ for all $i$ except one. 

First note that for any bid $b_i' \leq v_i$ and $S'_i \subseteq A_i$, $u_i^{bpb}(b'_i, S'_i, \bvec_{-i}, \bS_{-i}) \geq 0$ and $u_i^{max}(b'_i, S'_i, \bvec_{-i}, \bS_{-i}) \geq 0$. This is because the GSP cost-per-click is always less than the bid and the bid is less than value. Thus in either mechanism if the allocated rich ad is $(i, t)$, the utility $\alpha_{it}(v_i - cpc_i) \geq 0$.

We first bound the social welfare in the bang-per-buck allocation for advertisers that obtain an ad of space $\leq W/2$ in the optimal outcome $\tau$.
Let $(r, j)$ be defined as in Lemma~\ref{lem: deviation} and let $\beta = \frac{b_{rj}}{w_{rj}}$. By Lemma~\ref{lem: deviation}, for each $i$ with $w_{i\tau(i)} \leq W/2$, there exists an $\varepsilon$ such that with $b'_i=\min \{v_i, \beta\frac{w_{i\tau(i)}}{\alpha_{i\tau(i)}} + \varepsilon\}$,
%\begin{align*}
$    u_i^{bpb}(b'_i, \{\tau(i)\}, \bvec_{-i}, \bS) \ge v_{i\tau(i)} - \beta w_{i\tau(i)}.$ Thus we get,
\begin{align*}
(p v_{i\rho(i)} + (1-p) v_{i\gamma(i)}) &\geq u_i(\bvec, \bS)\\
    &\geq  u_i(b'_i, \{\tau(i)\}, \bvec_{-i}, \bS) \\ 
    &\geq p  u_i^{bpb}(b'_i,\{\tau(i)\}, \bvec_{-i}, \bS) \\
    &\geq p v_{i\tau(i)} - p \beta w_{i \tau(i)}
\end{align*}
Here, the first inequality uses the fact that the equilibrium payment is non-negative and the second is from the equilibrium condition. The third inequality follows from the fact that utility in max-value with GSP is non-negative. The last step is the estimation we have derived. 

We have the above inequality for all $i$ such that $w_{i \tau(i)} \leq W/2$. Note that there can be at most one advertiser with $w_{i \tau(i)} > W/2$, denote this advertiser as $i^*$. Let $T = \mc{N} \setminus \{i^*\}$. 
Summing the above inequality over all $i \neq i^*$ we get,
\begin{align}
    p SW(T, \rho) + (1-p) SW(T, \gamma) 
    &= \sum_{i \neq i^* } (p v_{i\rho(i)} + (1-p) v_{i\gamma(i)}) \nonumber \\
    &\geq \sum_{i \neq i^* } (p v_{i\tau(i)} - p \beta w_{i \tau(i)}) \nonumber \\
    &\geq p SW(T, \tau) - p \sum_{i \neq i^*} \beta w_{i \tau(i)} \label{eq:eq1}
%    &\geq p SW(T, \tau) - p \beta W. \label{eq:eq1}
\end{align}
%We consider two cases $|S| = 0$ or $|S| = 1$.

{\bf Case 1:} First we consider the case that there is no $i^*$ with $w_{i^* \tau(i*)} > W/2$. 
%If $|S| = 0$, then we can add $p$ times equation~\ref{eq:better-nash} and 
% Equation ~\ref{eq:eq1} provides
% $$p SW_{bpb}(\bvec, \bS) + (1-p) SW_{max}(\bvec, \bS) \geq p SW(OPT) -  p \beta \sum_{i} w_{i\tau(i)}
% $$
%Note that $SW_{bpb}(\bvec, \bS) = SW(P, \rho) + SW(T, \rho)$,  $SW_{max}(\vec, \bS) \geq SW(T, \gamma)$ and $SW(OPT) = SW(P, \tau) + SW(T, \tau)$. 
Starting from equation~\eqref{eq:eq1}, we can bound $\sum_i w_{i \tau(i)} \leq W$ and use Lemma~\ref{lem: beta-bound}. We have $p SW_{bpb}(\bvec, \bS) + (1-p) SW_{max}(\bvec, \bS) \geq p SW(\IOPT) - 2p (SW_{bpb}(\bvec, \bS) + SW_{max}(\bvec, \bS))$.
Setting $p=1/2$ and rearranging, we get that the price of anarchy = $SW(\IOPT)/SW_{\mathcal{M}}(\bvec, \bS) \leq 6$.
%, that is, the price of anarchy is at most $6$. 

{\bf Case 2:} If there is an ad $i^*$ with $w_{i^* \tau(i^*)} > W/2$. Then we have that $\sum_{i \neq i^*} w_{i \tau(i)} < W/2$. We will prove the following inequality for $i^*$. 
\begin{equation} \label{eq:bound-S}
p v_{i^*\rho(i^*)} + (1-p) SW_{max}(\bvec, \bS) \geq (1-p) v_{i^* \tau(i^*)}
\end{equation}
Let $i^*$ deviate to bid truthfully. His utility on deviation,
\begin{align}
    u_{i^*}(v_{i^*}, A_{i^*}, \bvec_{-i^*}, \bS_{-i^*}) 
    &\geq (1-p)  u_{i^*}^{max}(v_{i^*}, A_{i^*}, \bvec_{-i^*}, \bS_{-i^*}) \notag\\
    &\geq (1-p) \max_{j \in A_{i^*}} v_{i^*j} - (1-p) \max_{i \neq i^*, j \in S_{i}} b_{ij} \notag\\
    &\geq (1-p) v_{i^*\tau(i^*)} - (1-p) \max_{i \neq i^*, j \in S_{i}} b_{ij}
\end{align}
Here the first inequality uses the fact that with bid less than value, the utility in the bang-per-buck allocation with GSP cost-per-click is non-negative. The second inequality is because $i^*$ may not be allocated when bidding $v_i^*$ in which case the competing bid is larger than $i^*$'s value.
%The pure Nash equilibrium condition provides, $u_{i^*}(\bvec, \bS) \geq u_{i^*}(v_{i^*}, A_{i^*}, \bvec_{-i^*}, \bS_{-i^*})$. 

If $i^*$ gets allocated in the max-value algorithm in equilibrium, $u_{i^*}(\bvec,\bS)$ is at most $p v_{i^*\rho(i^*)} + (1-p) v_{i^*\gamma(i^*)} - (1-p) \max_{i \neq i^*, j \in S_{i}} b_{ij}$. Using the equilibrium condition with
$u_{i^*}(\bvec, \bS) \geq u_{i^*}(v_{i^*}, A_{i^*}, \bvec_{-i^*}, \bS_{-i^*})$, we get $p v_{i^*\rho(i^*)} + (1-p) v_{i^*\gamma(i^*)} \geq (1-p) v_{i^*\tau(i^*)}$. Equation~\eqref{eq:bound-S} follows because $v_{i^*\gamma(i^*)} = SW_{max}(\bvec,\bS)$. 

If $i^*$ does not get allocated in the max-value algorithm in the equilibrium, then $u_{i^*}(\bvec,\bS)$ is at most $p v_{i^* \rho(i^*)}$. Using $(1-p)\max_{i \neq i^*, j \in S_{i}} b_{ij} \leq(1-p) SW_{max}(\bvec, \bS)$, with the pure Nash equilibrium condition, and rearranging we get $p v_{i^*\rho(i^*)} + (1-p) SW_{max}(\bvec, \bS) \geq (1-p) v_{i^* \tau(i^*)}$, i.e.,  equation~\eqref{eq:bound-S}.

Then adding equations~\eqref{eq:eq1} and ~\eqref{eq:bound-S}, we get
\begin{align*}
    p SW_{bpb}(\bvec, \bS) + 2(1-p) SW_{max}(\bvec, \bS)
    \geq p (SW(\IOPT) - v_{i^*\tau(i^*)}) + (1-p) v_{i^* \tau(i^*)} -  p \beta \frac{W}{2} 
\end{align*}
where we use $SW_{bpb}(\bvec, \bS) = SW(T, \rho) + v_{i^*\rho(i^*)}$,  $SW_{max}(\bvec, \bS) \geq SW(T, \gamma)$, $SW(\IOPT) = SW(T, \tau) + v_{i^* \tau(i^*)}$, and $\sum_{i \in T} w_{i \tau(i)} < W/2$. 
By Lemma~\ref{lem: beta-bound}, we have $\beta W/2 \le SW_{bpb}(\bvec, \bS) +  SW_{max}(\bvec, \bS)$. Thus,
$$ pSW_{bpb}(\bvec, \bS) + 2(1-p) SW_{max}(\bvec, \bS)) \geq p SW(\IOPT) - p SW_{bpb}(\bvec, \bS) -  p SW_{max}(\bvec, \bS) . $$
Setting $p = 1/2$, % in either case\footnote{{\dnote{We can bound $p\beta W \le 2p (SW_{bpb}(\bvec,\bS) + SW_{max}(\bvec,\bS))$}}.} 
we get that $6 SW(\bvec, \bS) \geq SW(\IOPT)$. So the price of anarchy is %$=SW(OPT)/SW(\bvec, \bS)$ 
at most 6. 
\end{proof}

\paragraph{Bayes-Nash PoA}
We also provide bounds on the Bayes-Nash Price of Anarchy when our allocation rule is paired with the GSP payment rule. Our proof technique is very similar to that of~\cite{caragiannis}. We borrow ideas from~\cite{caragiannis} and combine with techniques from our pure-PoA bound to prove a smoothness inequality, and obtain a bound on the Bayes-Nash PoA. %We include a detailed proof in Appendix~\ref{app:bayes-poa}. 

\begin{theorem} \label{thm: bayes-poa}
Under a no-overbidding assumption, the mechanism that runs $\ALGI$ with probability $1/2$ and allocates to the maximum valued ad with probability $1/2$, and charges the corresponding GSP price has a Bayes-Nash PoA of $\frac{6}{1-1/e}$.
%in each is $(\frac{1}{2}(1-\frac{1}{e}), 3)$ semi-smooth. 
\end{theorem}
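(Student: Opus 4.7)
The plan is to establish a smoothness inequality in the style of \cite{caragiannis} and invoke the smoothness framework of \cite{Roughgarden-intrinsic,Roughgarden-bayespoa,syrgkanis2013composable} to lift the bound from pure Nash to Bayes-Nash (and mixed/correlated/coarse-correlated) equilibria. The key deviation for each advertiser $i$ is to sample $z$ from the density $g(z) = 1/(1-z)$ on $[0, 1-1/e]$ (normalized since $\int_0^{1-1/e} (1-z)^{-1} dz = 1$, exactly as in Caragiannis et al.) and report $(v_i z, A_i)$. Note the deviation depends only on $v_i$, not on $\bvec_{-i}$ or $\bS_{-i}$, which is essential for the Bayes-Nash extension.

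For a fixed opponent profile, let $(r,j)$ and $\beta$ be as in Lemma~\ref{lem: beta-bound}, and for each $i$ with $w_{i\tau(i)} \leq W/2$ set $c = \beta w_{i\tau(i)}/(\alpha_{i\tau(i)} v_i)$. When $z \geq c$ we have $v_i z \geq \beta w_{i\tau(i)}/\alpha_{i\tau(i)}$, so the first bound of Lemma~\ref{lem: bayes-deviation} applies and yields $u_i^{bpb}(v_i z, A_i, \bvec_{-i}, \bS_{-i}) \geq v_{i\tau(i)}(1-z)$; when $z < c$, the GSP utility is at least $0$. Integrating against $g$ and handling both the $c \leq 1-1/e$ case (direct integration, using $g(z)(1-z) \equiv 1$) and the $c > 1-1/e$ case (the bound becomes trivial because the right-hand side is nonpositive), I obtain the uniform bound
\[
\mathbb{E}_z[u_i^{bpb}(v_i z, A_i, \bvec_{-i}, \bS_{-i})] \geq (1-1/e)\, v_{i\tau(i)} - \beta w_{i\tau(i)}.
\]

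I then partition $\mc{N} = T \cup \{i^*\}$, where $i^*$ is the unique advertiser (if any) with $w_{i^* \tau(i^*)} > W/2$. Summing the previous inequality over $i \in T$ (so that $\sum_{i \in T} w_{i\tau(i)} \leq W$, or $\leq W/2$ when $i^*$ exists) and applying Lemma~\ref{lem: beta-bound} to rewrite $\beta W$ as at most $2\,SW_{bpb}(\bvec,\bS) + 2\,SW_{max}(\bvec,\bS)$, I bound the total deviation utility on $T$ in terms of $(1-1/e)\,SW(T,\tau)$ minus a constant multiple of $SW_{\mathcal{M}}(\bvec,\bS)$. For the large-ad advertiser $i^*$, whose target $\tau(i^*)$ cannot be obtained through a bang-per-buck deviation, I would route the argument through the max-value allocation exactly as in Equation \eqref{eq:bound-S} of the pure-PoA proof, scaled by the factor $(1-1/e)$, which suffices because the max-value deviation already delivers $v_{i^*\tau(i^*)}$ directly (no integration needed). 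Combining the two contributions with mixing parameter $p = 1/2$ and using the (Bayes-Nash) equilibrium condition for each $i$ produces a smoothness inequality with parameters $(\lambda,\mu) = (1-1/e,\, 6)$, and the framework then delivers Bayes-Nash PoA $\leq \mu/\lambda = 6/(1-1/e)$.

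The main obstacle is the bookkeeping around $i^*$: for small ads the $(1-1/e)$ factor enters through the randomized deviation and integration, while for $i^*$ the natural deviation is deterministic (truthful bidding with $A_{i^*}$), so one must ensure both pieces fit into a single smoothness-style inequality with matching $\lambda$. A secondary, more mechanical obstacle is verifying that the $\beta$-dependent loss terms telescope correctly, which requires applying Lemma~\ref{lem: beta-bound} after summing over $T$ so that the $\sum_{i} \beta w_{i\tau(i)}$ aggregate, not individual $\beta w_{i\tau(i)}$ terms, is what gets absorbed into $SW_{bpb} + SW_{max}$.
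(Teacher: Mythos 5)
Your overall plan matches the paper's: prove a semi-smoothness inequality with the ``logarithmic'' randomized deviation of Caragiannis et al.\ (report $(v_i z, A_i)$ with $z$ from density $1/(1-z)$ on $[0, 1-1/e]$), split the advertisers into those with small optimal ads and the possible single advertiser $i^*$ with $w_{i^*\tau(i^*)} > W/2$, apply Lemma~\ref{lem: bayes-deviation} and Lemma~\ref{lem: beta-bound}, and invoke the extension lemma. This is essentially the paper's route. However, there is a genuine gap in how you handle $i^*$, and a secondary bookkeeping error.

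The gap: you propose to use the \emph{deterministic} truthful deviation for $i^*$ (``no integration needed''), routed through the max-value allocation as in Equation~\eqref{eq:bound-S}, while using the randomized deviation for everyone else. But in the semi-smoothness framework the deviation $b_i'(v_i)$ must be a fixed (randomized) function of $(v_i, A_i)$ alone; it cannot depend on $\vvec_{-i}, \bA_{-i}$. Whether $i$ is the large-ad player $i^*$ is determined by the entire valuation profile (through $OPT(\vvec,\bA)$), so a deviation that switches from randomized to truthful exactly when $i = i^*$ is not a valid semi-smoothness deviation, and the Bayes-Nash lemma would not apply. The paper avoids this by having \emph{every} player, including $i^*$, use the same randomized deviation; what changes for $i^*$ is only which part of the mixture ($u_{i^*}^{max}$ rather than $u_{i^*}^{bpb}$) is used as the lower bound in the analysis. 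One then does need to integrate: $\mathbb{E}_{y\sim f}[u_{i^*}^{max}(y,A_{i^*},\cdot)] \ge (1-1/e)\alpha_{i^*\tau(i^*)} v_{i^*} - SW_{max}(\bvec,\bS)$, which is where the $(1-1/e)$ factor for $i^*$ genuinely comes from.

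Secondary issue: your claimed parameters $(\lambda,\mu) = (1-1/e,\,6)$ and formula ``PoA $\le \mu/\lambda$'' are internally inconsistent and do not match the lemma. With $p=1/2$ the BPB part only contributes a $\tfrac{1}{2}(1-1/e)$ coefficient, and Lemma~\ref{lem: beta-bound} plus the $SW_{max}$ loss term from $i^*$ gives $\mu = 2$; the framework's bound is $(\mu+1)/\lambda$, so the correct derivation is $(\lambda,\mu) = (\tfrac{1}{2}(1-1/e),\,2)$ and PoA $\le 3/(\tfrac{1}{2}(1-1/e)) = 6/(1-1/e)$.
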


We will prove the bound using the smoothness framework. Our proof approach is similar to that of~\cite{caragiannis} for proving bounds on the price of anarchy of the GSP auction in the traditional position auction setting. However the knapsack constraint and the randomized allocation rule create unique challenges in our setting that we have to overcome. 

We recall the definition of $(\lambda, \mu)$ semi-smoothness, as defined as \cite{caragiannis}, that extends \cite{Roughgarden-intrinsic}, \cite{nadavR}. 
\begin{definition}[$(\lambda,\mu)$ semi-smooth games~\cite{caragiannis}]
A game is $(\lambda,\mu)$ semi-smooth if for any bid-profile $(\bvec, \bS)$, for each player $i$, there exists a randomized distribution over over bids $b_i'$ such that 
$$\sum_i \mathbb{E}_{b_i'(v_i)}[u_i(b_i'(v_i), b_{-i}, \bS)] \geq \lambda SW(OPT(\vvec, \bA)) - \mu SW(Alg(\bvec, \bS)) $$
\end{definition}

The following lemma from ~\cite{caragiannis} shows that smoothness inequality of the above form provides a bound on the Bayes-Nash POA. 
\begin{lemma} [~\cite{caragiannis}]
If a game is $(\lambda, \mu)$-semi-smooth and its social welfare is at least the sum of the players’ utilities,
then the %price of anarchy with uncertainty (and information asymmetries) is
Bayes-Nash POA is at most $(\mu + 1)/\lambda$.
\end{lemma}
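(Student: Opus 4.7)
The plan is to combine the semi-smoothness inequality (which is a pointwise statement about every bid profile and valuation profile) with the Bayes-Nash equilibrium condition, and then invoke the stated welfare-dominates-utilities hypothesis. The crucial structural feature that enables the argument is that in the definition of $(\lambda,\mu)$ semi-smoothness, the deviating bid $b_i'$ for player $i$ depends only on $v_i$; this is precisely what makes $b_i'(v_i)$ a legitimate unilateral deviation in a Bayesian setting, where each player may only condition their reported bid on their own private information.

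Fix any Bayes-Nash equilibrium $\sigma=(\sigma_1,\ldots,\sigma_n)$, so that $(\bvec,\bS)=\sigma(\vvec,\bA)$ when $(\vvec,\bA)\sim\mathcal{D}$. For each player $i$ and each realization of $v_i$, the equilibrium condition says that $i$'s conditional expected utility under $\sigma_i(v_i,A_i)$ is at least that obtained by unilaterally deviating to $b_i'(v_i)$, with expectations taken over $(v_{-i},A_{-i})$ conditioned on $v_i$ and the induced $(\bvec_{-i},\bS_{-i})$. Averaging over $v_i$ and summing over $i$ gives
$$\sum_i \mathbb{E}[u_i(\sigma(\vvec,\bA))] \;\ge\; \sum_i \mathbb{E}_{(\vvec,\bA),(\bvec,\bS),b_i'(v_i)}\bigl[u_i(b_i'(v_i),\bvec_{-i},\bS_{-i})\bigr].$$

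Now apply the semi-smoothness inequality pointwise, for each realization of $(\vvec,\bA)$ and $(\bvec,\bS)$, and take expectation; chained with the previous inequality, this yields
$$\sum_i \mathbb{E}[u_i(\sigma(\vvec,\bA))] \;\ge\; \lambda\,\mathbb{E}[SW(OPT(\vvec,\bA))] \;-\; \mu\,\mathbb{E}[SW(Alg(\bvec,\bS))].$$
Finally, the welfare-utility hypothesis gives $\mathbb{E}[SW(Alg(\bvec,\bS))] \ge \sum_i \mathbb{E}[u_i(\sigma(\vvec,\bA))]$ (which holds whenever payments are non-negative, since $u_i = v_i x_i - p_i \le v_i x_i$). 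Substituting and rearranging produces $(1+\mu)\,\mathbb{E}[SW(Alg)] \ge \lambda\,\mathbb{E}[SW(OPT)]$, i.e., the Bayes-Nash PoA is at most $(\mu+1)/\lambda$. The argument is essentially bookkeeping once the right inequalities are lined up; the only conceptual step is recognizing that because the semi-smoothness deviation $b_i'$ is measurable with respect to the deviator's own type $v_i$ alone, it remains an admissible unilateral deviation in any Bayes-Nash equilibrium, including those with correlated priors, which is why the same $(\mu+1)/\lambda$ bound extends from the full-information smoothness setting to the Bayesian one.
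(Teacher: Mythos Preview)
The paper does not prove this lemma; it is quoted from~\cite{caragiannis} and used as a black box. Your argument is correct and is precisely the standard proof of this type of statement: invoke the Bayes-Nash equilibrium condition against the deviation $b_i'(v_i)$ (which is admissible because it depends only on $i$'s own type), take expectations, apply the semi-smoothness inequality pointwise inside the expectation, and close the loop using the welfare-dominates-utilities hypothesis. Your observation that the argument goes through even for correlated priors because $b_i'$ is measurable with respect to $v_i$ alone is exactly the point of the ``semi-smooth'' definition.
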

Thus, it only remains to prove the smoothness inequality. We prove that our mechanism is $(\frac{1}{2}(1-\frac{1}{e}), 2)$ semi-smooth, and hence obtain a Bayes-Nash POA bound of $6/(1 - \frac{1}{e}) \approxeq 9.49186$. 

\begin{theorem}
Under a no-overbidding assumption, the mechanism that runs $\ALGI$ with probability $1/2$ and allocates to the maximum valued ad with probability $1/2$, and charges the corresponding GSP price in each is $(\frac{1}{2}(1-\frac{1}{e}), 2)$ semi-smooth. 
\end{theorem}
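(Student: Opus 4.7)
The plan is to apply the smoothness framework of Caragiannis et al. using a single, type-dependent randomized deviation and bound the expected sum of utilities. For each player $i$, the deviation draws $z$ from the density $f(z) = 1/(1-z)$ on $[0, 1-1/e]$ (which normalizes to $1$) and bids $(z v_i, A_i)$; this depends only on $(v_i, A_i)$, as required for semi-smoothness. Let $\tau$ be the $\IOPT$ allocation, let $(r, j)$ and $\beta$ be as in Lemma~\ref{lem: beta-bound}, and partition the advertisers into $T = \{i : w_{i\tau(i)} \le W/2\}$ and the at-most-one complement $\{i^*\}$.

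For each $i \in T$, Lemma~\ref{lem: bayes-deviation} applied to $y = z v_i$ gives $u_i^{bpb} \ge v_{i\tau(i)}(1-z)$ whenever $z \ge \beta w_{i\tau(i)}/v_{i\tau(i)}$. The $(1-z)$ factor cancels the density, so the expectation collapses to $\mathbb{E}_z[u_i^{bpb}] \ge v_{i\tau(i)}(1-1/e) - \beta w_{i\tau(i)}$ (the inequality is vacuous when the RHS is negative, since $u^{bpb} \ge 0$ by no-overbidding). Combined with $\mathbb{E}_z[u_i^{max}] \ge 0$, this yields $\mathbb{E}[u_i] \ge \frac{1}{2}[v_{i\tau(i)}(1-1/e) - \beta w_{i\tau(i)}]$.

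The main obstacle is $i^*$, whose target ad $\tau(i^*)$ violates the size condition of Lemma~\ref{lem: bayes-deviation}. The plan is to extract the needed value from the max-value allocation instead. Let $V^* = \max_j v_{i^*j}$ and $B = \max_{i' \ne i^*,\, j \in S_{i'}} b_{i'j}$; the bid $z v_{i^*}$ wins max-value iff $z V^* > B$, with utility $V^* - B$. Setting $x = B/V^*$, direct integration gives $\mathbb{E}_z[u_{i^*}^{max}] = V^*(1-x)(1 + \ln(1-x))$ for $x < 1-1/e$ and $0$ for $x \ge 1-1/e$ (in which case $SW_{max} \ge B \ge (1-1/e) V^* \ge (1-1/e) v_{i^*\tau(i^*)}$ already suffices). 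Using $2 SW(\mathcal{M}) \ge SW_{max} \ge V^* x$ together with the elementary inequality $u \ln u \ge -1/e$ on $(0, 1]$ (tight at $u = 1/e$), one obtains $\mathbb{E}_z[u_{i^*}^{max}] + 2 SW(\mathcal{M}) \ge V^*[1 + (1-x)\ln(1-x)] \ge (1-1/e) V^* \ge (1-1/e) v_{i^*\tau(i^*)}$. Consequently $\mathbb{E}[u_{i^*}] \ge \frac{1}{2} \mathbb{E}_z[u_{i^*}^{max}] \ge \frac{1}{2}(1-1/e) v_{i^*\tau(i^*)} - SW(\mathcal{M})$.

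Summing over all $i$ and invoking Lemma~\ref{lem: beta-bound}: in the $i^*$ case, $\sum_{i \in T} w_{i\tau(i)} < W/2$, so $\frac{1}{2}\beta \sum_{i \in T} w_{i\tau(i)} \le \frac{1}{4}\beta W \le SW(\mathcal{M})$, and combining with the $i^*$ bound gives $\sum_i \mathbb{E}[u_i] \ge \frac{1}{2}(1-1/e) SW(\IOPT) - 2 SW(\mathcal{M})$; the no-$i^*$ case uses $\sum_i w_{i\tau(i)} \le W$ and $\beta W \le 4 SW(\mathcal{M})$ to obtain the same inequality directly. This establishes $(\frac{1}{2}(1-1/e), 2)$-semi-smoothness, and the Bayes-Nash PoA bound $6/(1-1/e)$ follows from the $(\mu+1)/\lambda$ formula.
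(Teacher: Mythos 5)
Your proof is correct and follows essentially the same route as the paper: the same deviation density $f(z)=1/(1-z)$ on $[0,1-1/e]$, the same partition into advertisers whose OPT ad is at most $W/2$ versus the at-most-one outlier $i^*$, the same application of Lemma~\ref{lem: bayes-deviation} for the bang-per-buck part, and the same use of Lemma~\ref{lem: beta-bound} to cash in $\beta W$. The one place you diverge is the treatment of $i^*$ in the max-value auction: you integrate the exact deviation utility $V^*-B$ to obtain $V^*(1-x)(1+\ln(1-x))$ and then invoke $u\ln u\geq -1/e$, whereas the paper lower-bounds the post-deviation utility by $\alpha_{i^*\tau(i^*)}(v_{i^*}-y)$ and uses the loose threshold $\alpha_{i^*\tau(i^*)}y\geq SW_{max}$ so that the $(v_{i^*}-y)$ factor cancels the density exactly (paralleling the bang-per-buck computation) and yields $(1-1/e)v_{i^*\tau(i^*)}-SW_{max}$ directly; your version arrives at a marginally weaker bound ($-SW(\mathcal{M})$ in place of $-\tfrac{1}{2}SW_{max}$), but it still closes the argument with the same $(\tfrac{1}{2}(1-1/e),2)$ parameters.
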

\begin{proof}
Fix a valuations profile $(\vvec, \bA)$ 
Consider the integer optimal allocation with valuation $(\vvec, \bA)$ as $OPT$. For each $i$, let $\tau(i)$ denote the rich-ad selected for advertiser $i$ in $OPT$. $\tau(i) = 0$ be the null ad with $\alpha_{i0} = 0$ and $w_{i0} = 0$ if advertiser $i$ is not allocated in $OPT$. 

Let $(\bvec, \bS)$ denote a bid profile. The allocation of the mechanism is composed of two parts - bang-per-buck allocation and max-value-allocation. Let $\rho(i)$ denote the rich ad allocated to advertiser $i$ in the bang-per-buck allocation and $\gamma(i)$ denote the rich ad allocated to the advertiser $i$ in the bang-per-buck allocation. If an advertiser is not allocated we refer to the null ad with $\alpha_{i0} = 0$ and $w_{i0} = 0$. We will use $SW_{bpb}$ and $SW_{max}$ to denote the social welfare of the bang-per-buck and max-value allocation algorithms. Then,
$$SW_{\mathcal{M}}(\bvec, \bS) = p SW_{bpb}(\bvec, \bS) + (1-p) SW_{max}(\bvec, \bS) = p \sum_i v_{i\rho(i)} + (1-p) v_{m, \mu}.$$

Most of the difficulty in proving the smoothness inequality is in reasoning about what happens in the bang-per-buck allocation. 
As in Lemma~\ref{lem: beta-bound}, let $k^*=\lfloor W/2\rfloor + 1$ and $(r, j) = (i(k^*), j(k^*))$ be the rich-ad that would be allocated the $k^*$'th unit of space. Let $\beta = b_r \alpha_{rj}/w_{rj}$. 

For any advertiser $i$, consider the advertiser deviates to bid $y$ drawn from a distribution on $[0, v_i(1-1/e)]$ with $f(y) = 1/(v_i - y)$. Then 
% with an argument similar to above\footnote{\dnote{See proof of 
by Lemma~\ref{lem: bayes-deviation}, $u_i^{bpb}(y, A_i, \bvec_{-i}, \bS_{-i}) \geq v_{i\tau(i)} - y \alpha_{i\tau(i)}$. {If $ \alpha_{i \tau(i)} \cdot y < \beta w_{i\tau(i)}$, we just lower bound the utility by $0$.}

\begin{align*}
\mathbb{E}_{y \sim F}[u_i^{bpb}(y, A_i, \bvec_{-i}, \bS_{-i})] &\geq \mathbb{E}_{y \sim F} [\alpha_{i\tau(i)}(v_i - y) I(\alpha_{i \tau(i)} \cdot y  \geq \beta w_{i\tau(i)})] \\
&= \int_0^{v_i(1-1/e)} \alpha_{i\tau(i)}(v_i - y) I(\alpha_{i \tau(i)} \cdot y  \geq \beta w_{i\tau(i)}) \cdot \frac{1}{v_i-y} dy \\
&= \alpha_{i \tau(i)} v_i (1-1/e) - \beta w_{i\tau(i)} 
\end{align*}

% Here $p(i, k)$ denotes the GSP payment for the bang-per-buck allocation rule. Since the GSP cost-per-click is the minimum bid below at which the allocation is the same, the payment is less than the bid, $p(i, k) \leq \alpha_{ik}\cdot y$. And the third inequality uses the fact that $(i, k)$ has a higher value than $(i, \tau(i))$. 
% If the bid $v_i/2$ does not win at least $(i, \tau(i))$, then $v_i/2 * \alpha_{i\tau(i)}/s(i, \tau(i)) \leq b(r, j)/s(r, j)$ and 
% $u_i^{bpb}(v_i/2, \bvec_{-i}, \bS) + b(r, j)/s(r, j) * s(i, \tau(i)) \geq v_i/2 \alpha_{i, \tau(i)}$.
We have the above inequality for every $i$, with $w_{i \tau(i)} \leq W/2$. 

% \dnote{[DM: remove this?
% Summing the above inequality for all $i$ with $w_{i \tau(i)} \leq W/2$  
% \begin{equation}
%   \sum_i \mathbb{E}_{y \sim F}[u_i^{bpb}(y, A_i, \bvec_{-i}, \bS_{-i})] \geq (1-1/e)SW(OPT) - (1-1/e)v_{i^* \tau(i^*)} - \beta \sum_{i} w_{i \tau(i)}
% \end{equation}
% ]}

{\bf Case 1:} First consider the case where every advertiser has $w_{i \tau(i)} \leq W/2$ in OPT. Then 
we can sum over all $i$ and use the equilibrium condition 
%combine all the inequalities 
to obtain a single smoothness inequality. 
\begin{align*}
    \sum_i &\mathbb{E}_{y \sim f} u_i(y, A_i, \bvec_{-i}, \bS_{-i}) \\
    &= p \sum_i \mathbb{E}_{y \sim f}  u_i^{bpb}(y, A_i, \bvec_{-i}, \bS_{-i}) + (1-p) \sum_i \mathbb{E}_{y \sim f}  u_i^{max}(y, A_i, \bvec_{-i}, \bS_{-i}) \\
    &\geq p \sum_{i} \mathbb{E}_{y \sim f}  u_i^{bpb}(y, A_i, \bvec_{-i}, \bS_{-i}) \\
    &\geq p \sum_{i} [\alpha_{i \tau(i)} v_i (1-1/e) - \beta w_{i\tau(i)})] \\
    &\geq p (1-1/e) SW(OPT) - p \beta W
\end{align*}
In the last step, we bound $\sum_i w_{i\tau(i)} \leq W$. If $p = 1/2$, by Lemma~\ref{lem: beta-bound} we have $p \beta W \leq 2p (SW_{bpb}(\bvec, \bS) + SW_{max}(\bvec, \bS) = 2 SW_{\mathcal{M}}(\bvec, \bS)$. And we have a smoothness inequality with parameters $(1/2(1-1/e), 2)$.

{\bf Case 2:} Otherwise suppose $OPT$ has an advertiser with $w_{i\tau(i)} > W/2$. 
Note that OPT can have at most one advertiser with ${w_{i \tau(i)}} > W/2$. Denote this advertiser as $i^*$. 
We consider the utility of an advertiser $i^*$ as he deviates to bid $y$ drawn from distribution $f$ on $(0, v_{i^*}(1-1/e))$ with $f(y) = 1/(v_{i^*} - y)$. With a bid of $y$, the bidder will definitely be allocated {in the max-value algorithm} if $ \alpha_{i^* \tau(i^*)} y \geq SW_{max}(\bvec, \bS)$. Note that this is a loose condition, $(i^*,\tau(i^*))$ may not be the most valuable rich ad for $i^*$, and if $i^*$ is allocated in the bid profile the threshold to win might be lower than $SW_{max}(\bvec, \bS)$ . But we will use this weaker condition. Again, recall that the GSP cost-per-click will be at most the bid $y$. Then we have, 
\begin{align*}
\mathbb{E}_{y \sim f}[u_{i^*}^{max}(y, A_{i^*}, \bvec_{-i^*}, \bS_{-{i^*}})] &\geq \mathbb{E}_{y \sim f}[\alpha_{i^*\tau(i^*)} (v_{i^*} - y) I(\alpha_{i^* \tau(i^*)} y \geq SW_{max}(\bvec, \bS))] \\
&\geq \int_0^{v_{i^*}(1-1/e)} \alpha_{i^*\tau(i^*)} (v_{i^*} - y) \frac{1}{(v_{i^*} - y)} I(\alpha_{i^* \tau(i^*)} y \geq SW_{max}(\bvec, \bS))dy \\
&\geq \int_{\frac{SW_{max}(\bvec, \bS))}{\alpha_{i^*\tau(i^*)}}}^{v_{i^*}(1-1/e)} \alpha_{i^*\tau(i^*)} dy \\
&\geq (1-1/e) \alpha_{i^*\tau(i^*)} v_{i^*} - SW_{max}(\bvec, \bS)
\end{align*}

We can combine all the inequalities to obtain a single smoothness inequality. 
\begin{align*}
    \sum_i &\mathbb{E}_{y \sim f} u_i(y, A_i, \bvec_{-i}, \bS_{-i}) \\
    &= p \sum_i \mathbb{E}_{y \sim f}  u_i^{bpb}(y, A_i, \bvec_{-i}, \bS_{-i}) + (1-p) \sum_i \mathbb{E}_{y \sim f}  u_i^{max}(y, A_i, \bvec_{-i}, \bS_{-i}) \\
    &\geq p \sum_{i\neq i^*} \mathbb{E}_{y \sim f}  u_i^{bpb}(y, A_i, \bvec_{-i}, \bS_{-i}) + (1-p) \mathbb{E}_{y \sim f}  u_{i^*}^{max}(y, A_{i^*}, \bvec_{-i^*}, \bS_{-i^*}) \\
    &\geq p \sum_{i \neq i^*} [\alpha_{i \tau(i)} v_i (1-1/e) - \beta w_{i\tau(i)})] +
    (1-p) [(1-1/e) \alpha_{i^*\tau(i^*)} v_{i^*} - SW_{max}(\bvec, \bS)] \\
    &\geq p (1-1/e) SW(OPT) - p \beta W/2 + (1-2p) (1-1/e)v_{i^* \tau(i^*)} - (1-p) SW_{max}(\bvec, \bS)
\end{align*}
Here in the last step we bound $\sum_{i \neq i^*} w_{i \tau(i)} < W/2$.

By Lemma~\ref{lem: beta-bound}, we obtain an upper bound of $2 SW_{bpb}(\bvec, \bS) + 2 SW_{max}(\bvec, \bS)$ on $\beta W$. Setting $p = 1/2$,
$$\sum_i \mathbb{E}_{y \sim f} u_i(y, A_i, \bvec_{-i}, \bS_{-i}) \geq \frac{1}{2}(1-\frac{1}{e}) SW(OPT) - \frac{1}{2}( SW_{bpb}(\bvec, \bS) + SW_{max}(\bvec, \bS)) - \frac{1}{2} SW_{max}(\bvec, \bS). $$ Recall that $SW_{\mathcal{M}}(\bvec, \bS) = \frac{1}{2} SW_{bpb}(\bvec, \bS) + \frac{1}{2} SW_{max}(\bvec, \bS).$ Hence, the smoothness inequality 
\begin{align*}
      \sum_i \mathbb{E}_{y \sim f} u_i(y, A_i, \bvec_{-i}, \bS_{-i})
      \geq 1/2(1-1/e) SW(OPT) - 2 SW(\bvec, \bS),
\end{align*}
follows, and we obtain a bound on the Bayes-Nash POA of $6/(1-1/e) \approxeq 9.49186$. 
\end{proof}

 \section{Experiments} \label{sec: experiments}
In this section we present some empirical results for our truthful mechanisms. The allocation rules we use for our theoretical results can be extended to obtain higher value. First we extend $\ALGI$ to skip past a high bang-per-buck ad that does not fit in the remaining space. More precisely, recall that $\ALGI$ calls $\ALGB$ to get the space allocation. $\ALGB$ stops when the ad being considered cannot be fit fully in the remaining space. The advertiser corresponding to this ad still gets allocated the remaining space, which is filled with the highest-value ad that fits in post-processing. In our modified version, we update $\ALGB$ to skip past this large ad. This is equivalent to dropping step $(3)$ of $\ALGB$ (i.e., we keep going until we run out of ads), and running the rest of $\ALGI$ as it is. We call this modified algorithm \emph{\greedybpb}. For our theoretical result we also select the maximum value ad with probability $1/3$. In practice, this can be very inefficient. For our empirical evaluation, we extend this to continues to allocate as long as space is remaining. Similar to the \greedybpb, the algorithm skips past a high value but large ad that cannot fit, and continues allocating until the space or the set of rich ads runs out. We call this algorithm \emph{GreedyByValue}. It worth noting that these extensions do not improve the worst-case approximation ratio of Theorem~\ref{thm: simple 4-approx}: we include a brief proof in Appendix~\ref{app:proof-experiments}.
%the lower bound example presented in the proof of Theorem~\ref{thm: simple 4-approx} is still a $3$-approximation with the improved heuristics. We include a brief proof in Appendix~\ref{app:proof-experiments} that shows that these extensions are still monotone in bid and set of rich ads.
We implement our proposed randomized algorithm by flipping a coin with probability $2/3$ for each query and selecting the result of GreedyByBangPerBuck algorithm if it is heads and GreedyByValue otherwise. We call this \emph{RandomizedGreedy}. 
As a baseline, we have implemented \emph{IntOPT}, that uses brute-force-search to evaluate all possible allocations and compute the optimal integer allocation. This allocation paired with the VCG payment rule (which is computed by computing the integer OPT with advertiser $i$ removed, and subtracting from that the allocation of all advertisers other than $i$ in the optimal allocation) gives the VCG mechanism. Finally, we implement the incremental bang-per-buck order algorithm as \emph{IncrementalBPB} to compute the fractionally optimal allocation. 

We evaluated our algorithms from real world data obtained from a large search advertising company. The data consists of a sample of approximately 11000 queries, selected to have at least 6 advertisers each. All the space values for the rich-ads are integer. We use 500 as the space limit as that is larger than the space of any individual rich ad. Table~\ref{tab:my_label} compares the average performance of these algorithms.
Our algorithms are comparable to VCG, on average, but require a lot less time to run.

\begin{table}
    \centering
    \begin{tabular}{c|c|c|c}
         Algorithm & ApproxECPM	& time-msec  \\
         \hline 
         GreedyByBPB-Myerson	&	0.9493 &	 0.0033 \\
         GreedyByValue-Myerson &	0.9196 & 	0.0016 \\
         RandomizedGreedy-Myerson & 0.9393 $\pm$ 0.0001  & 0.0027 $\pm$ 0.000 \\
         VCG & 1.0 & 0.0308 
    \end{tabular}
    \caption{Average performance of the algorithms compared to VCG. We report average approximation of eCPM relative to VCG and average running time in miliseconds. We report confidence intervals for the randomized algorithm by noting the average performance over all queries over 100 runs.}
    \label{tab:my_label}
\end{table}

We first compare the approximation obtained by various algorithms to the fractional-optimal. 
\begin{figure}
    \centering
    \includegraphics[width=10cm]{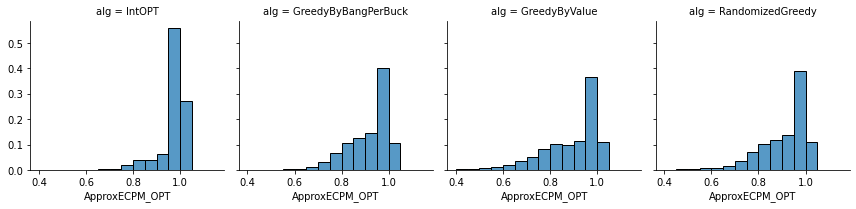}
    \caption{Histogram of approximation factor for IntOPT, GreedyByValue, GreedyByBangPerBuck and RandomizedGreedy compared to Fractional Optimal.}
    \label{fig:frac-opt-approx}
\end{figure}
In Figure~\ref{fig:frac-opt-approx}, we see that \greedybpb~and IntOPT obtain at least a $0.55$ fraction of the fractional opt, while the approximation factor for GreedyByValue can be as low as $0.4$. There is also a substantial amount of mass in the $1.0$ bucket where integer-opt and fractional-opt coincide and the greedy algorithms also sometimes achieve that. Next we compare the approximation obtained by various algorithms to IntOPT.
\begin{figure}
    \centering
    \includegraphics[width=8cm]{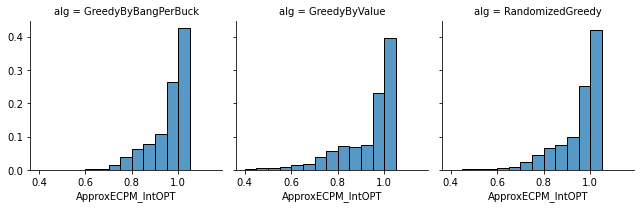}
    \includegraphics[width=3.5cm]{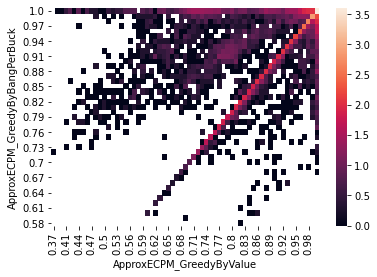}
    \caption{Histogram of approximation factor for GreedyByValue, GreedyByBangPerBuck and randomized Greedy compared to IntOPT. Height of each bucket represents the fraction of queries in the bucket. Last figure shows correlation of approximation factor relative to IntOPT for GreedyByValue, GreedyByBangPerBuck. Color-scale in the heat-map by log(number of queries) in bucket.}
    \label{fig:vcg-approx}
\end{figure}
In Figure~\ref{fig:vcg-approx} we see the approximation obtained by various algorithms compared to the IntOPT allocation. There are more queries where we obtain the optimal approximation, but the worst-case is still $~0.6$ for GreedyByBangPerBuck and $~0.4$ for GreedyByValue. 
For additional insight, we plot a heatmap to correlate the approximation factor obtained by GreedyByBangPerBuck and GreedyByValue with VCG as the baseline. 
In Figure~\ref{fig:vcg-approx}, we compare the approximation factor of GreedyByValue and GreedyByBangPerBuck. Blank spaces in this plot correspond to not having any queries with a particular combination of approximation factors. We note that a lot of the queries have the same approximation factor for GreedyByValue and GreedyByBangPerBuck --- indicating that RandomizedGreedy won't make mistakes. But GreedyByBangPerBuck more often has better approximation factor than GreedyByValue, so sticking to GreedyByBangPerBuck as the only heuristic might perform better.  
Finally, in Appendix~\ref{app:experiments} (Figure~\ref{fig:running_time_alg}) we compare the clock-time of our allocation rules with that of the IntOPT allocation rule. These allocation rules are monotone, so they can be paired with Myerson's payment rule as implied by Lemma~\ref{lem: monotone}. We evaluate the time required to compute the payments and relative revenue compared to VCG in Appendix~\ref{app:experiments}. Additionally, in Appendix~\ref{app:experiments} we evaluate our algorithms with an added cardinality constraints and show that we obtain reasonable approximations to the optimal allocation.

%\newpage
\bibliographystyle{plainnat}
\bibliography{biblio.bib}

\newpage

\appendix

\section{Missing Preliminaries}\label{app: missing prelims}

\subsection{Fractional Optimal and the Incremental-bang-per-buck Algorithm}\label{appsubsec: incremental bang-per-buck order}

\citet{Sinha79} introduce the notion of \emph{Dominated} and \emph{LP dominated} ads and show that they are not used in the fractional optimal solution. 
\begin{definition}[Dominated/LP-dominated~\cite{Sinha79}]\label{def: dominated}
For an advertiser $i$, the rich ads of the advertiser can be dominated in two ways. 
\begin{itemize}[leftmargin=*]
\item{Dominated:} If two rich ads satisfy $w_{ij} \leq w_{ik}$ and $b_{ij} \geq b_{ik}$ then $k$ is dominated by $j$.
\item {LP Dominated:} If three rich ads $j, k, l$ with $w_{ij} < w_{ik} < w_{il}$ and $b_{ij} < b_{ik} < b_{il}$, satisfy 
$\frac{b_{il} - b_{ik}}{w_{il} - w_{ik}} \geq \frac{b_{ik} - b_{ij}} {w_{ik} - w_{ij}}$,
then $k$ is LP-dominated by $j$ and $l$.
\end{itemize}
\end{definition}

Moroever,~\cite{Sinha79} showed that the fractional optimal solution can be obtained through the following \emph{incremental-bang-per-buck} algorithm.

\begin{algo}~
\begin{itemize}[leftmargin=*]
    \item Eliminate all the Dominated and LP-dominated rich ads for each advertiser.
    \item (Compute incremental-bang-per-buck) For each advertiser, allocate the null ad. Sort the remaining rich ads by space (label them $i1, \ldots ik, \ldots$). Construct a vector of scores $\frac{b_{ik} - b_{ik-1}}{w_{ik} - w_{ik-1}}$ for these. %\amnote{1. what is the score for the (0,0) items, 2. Need to state that each adv starts with that item (implicit if the scores are infinity).}
    \item (Allocate in incremental bang-per-buck order) While space remains: select the rich ad $(i, k)$ with highest score among remaining rich ads. If the remaining space is at least as much as the incremental space required ($w_{ik} - w_{i(k-1)}$), this new rich ad is allocated to its advertiser and it fully replaces previously allocated rich ad for the advertiser. Otherwise, allocate the advertiser fractionally in the remaining space. This fractional allocation puts a weight $x$ on the previously allocated rich ad of the advertiser and $(1-x)$ on the newly selected rich-ad such that the fractional space equals remaining space plus previously allocated space of the advertiser.\footnote{For more details refer to Lemma~\ref{lem: 1optfrac} in the appendix.}
\end{itemize}
\end{algo}

We prove some lemmas about the optimal fractional solution. 
The following lemma provides a simple characterization of the advertisers' welfare depending on whether the optimal solution uses one or two rich ads fractionally.
\begin{lemma} \label{lem: 1optfrac}
Let $W_i^*$ be the space allocated to advertiser $i$ with a non-null allocation in an optimal solution to the {\sc Multi-Choice Knapsack} problem.

There are two cases.
\begin{enumerate}
    \item The optimal allocation uses a single {non-null} rich ad with (value, size) $(b_l, l)$ ($l = W^*_{i}$) in space $W^*_i$. The advertiser is allocated integrally and its value is $b_l$. 
    %The advertiser's value (i.e. their contribution to the social welfare) is $\frac{b_l}{l}W^*_{i}$. Note that if $l = W^*_i$, the advertiser is allocated integrally\amnote{, and if $l > W^*_i$ then it is also allocated the null ad fractionally.}
    \item The optimal allocation uses two rich ads with (value, size) $(b_s, s)$, $(b_l, l)$, with $s < W^*_{i} < l$ and $(b_l, l)$ not null, in space $W^*_i$. We have that $b_l > b_s$ and the advertiser's value (i.e. their contribution to the social welfare) is $b_s + \frac{b_l - b_s}{l-s} (W^*_{i} - s)$.  If $(b_s, s)$ is not null, $\frac{b_s}{s} \geq \frac{b_l}{l} \geq \frac{b_l - b_s}{l-s}$.
\end{enumerate}
\end{lemma}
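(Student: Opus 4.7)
My plan is to derive the lemma directly from the incremental-bang-per-buck algorithm characterization stated just above: by the result of Sinha and Zoltners, the optimal fractional solution to Multi-Choice Knapsack is precisely that algorithm's output, so it suffices to track what happens to each advertiser $i$ individually. First I would observe that, after removing dominated and LP-dominated ads and sorting the remainder by size, the algorithm progresses through $i$'s surviving ads strictly in size order (since the incremental bang-per-buck values for a single advertiser are monotonically non-increasing in size once LP-dominated ads are purged). Each step upgrades $i$'s currently allocated ad to the next larger surviving one, and the process halts for $i$ when either $i$'s ads are exhausted or the global knapsack fills up mid-upgrade. If the final upgrade of $i$ places an ad integrally of size $l = W_i^*$, we are in Case 1 and $i$'s contribution is exactly $b_l$. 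Otherwise the algorithm halts between some surviving ad $(b_s, s)$ (possibly the null ad) and the next larger surviving ad $(b_l, l)$, yielding a convex combination with weight $x$ on the smaller and $1 - x$ on the larger, constrained by $xs + (1-x)l = W_i^*$. Substituting $x = (l - W_i^*)/(l - s)$ and simplifying $xb_s + (1-x)b_l$ yields the claimed expression $b_s + \frac{b_l - b_s}{l-s}(W_i^* - s)$.

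For the inequalities in Case 2, I would argue as follows. First, $b_l > b_s$: if $b_l \leq b_s$ then, since $l > s$, the ad $(b_l, l)$ would be dominated by $(b_s, s)$, contradicting its survival through the elimination step. Next, when $(b_s, s)$ is non-null (so $s > 0$), the inequality $\frac{b_s}{s} \geq \frac{b_l}{l}$ must hold; otherwise $\frac{b_l}{l} > \frac{b_s}{s}$ rearranges to $\frac{b_l - b_s}{l - s} > \frac{b_s}{s} = \frac{b_s - 0}{s - 0}$, which together with $0 < b_s < b_l$ exhibits $(b_s, s)$ as LP-dominated by the null ad $(0,0)$ and $(b_l, l)$, again contradicting its survival. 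Finally, the inequality $\frac{b_l}{l} \geq \frac{b_l - b_s}{l - s}$ is algebraically equivalent to $\frac{b_s}{s} \geq \frac{b_l}{l}$ (both simplify to $l b_s \geq s b_l$), and therefore comes for free.

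The main, and in fact only mild, obstacle I anticipate is justifying that the two ads $(b_s, s)$ and $(b_l, l)$ combined in the fractional step of Case 2 really are consecutive in $i$'s sorted list of surviving ads, since this is what lets the LP-domination argument above invoke the null ad and $(b_l, l)$ directly (rather than some intermediate surviving ad). This follows from the structural property already used in the first paragraph: within a single advertiser, the algorithm visits surviving ads strictly one at a time in increasing size order and never skips one, so the partial upgrade necessarily occurs between two adjacent surviving ads. With this in hand, the case analysis and inequalities above complete the proof.
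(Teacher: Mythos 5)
Your proof is correct, but it takes a genuinely different route from the paper's. The paper argues directly from the LP relaxation for the single advertiser $i$ in space $W_i^*$: with $|S|$ variables and $|S|+2$ constraints, an optimal extreme point has at most two non-zero variables, and then each case is handled by solving the (at most two) tight constraints for the weights. You instead derive the structure from the incremental-bang-per-buck algorithm itself, using the facts that after pruning the incremental bang-per-bucks of a single advertiser are strictly decreasing (so her ads are visited in size order and the partial upgrade is between consecutive survivors) and that a surviving ad cannot be dominated or LP-dominated. Both arguments are valid. Yours has the side benefit of explicitly establishing $b_l > b_s$ and the chain $\frac{b_s}{s} \geq \frac{b_l}{l} \geq \frac{b_l - b_s}{l-s}$ via the domination and LP-domination definitions, which the paper states in the lemma but does not explicitly re-derive in its proof; the paper's LP approach, on the other hand, is more self-contained and does not lean on the correctness of the Sinha--Zoltners algorithm. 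One small point worth making explicit in your write-up: the LP-domination step implicitly needs $b_s > 0$, but this is automatic, since a non-null surviving $(b_s,s)$ with $s>0$ and $b_s = 0$ would already be (plainly) dominated by the null ad.
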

\begin{proof}
Recall that the main purpose of the null ad is to help make the fractional allocation of advertiser $i$ exactly equal to one. We can reason about the optimal allocation of advertiser $i$ in space $W_i$ without the null ad and bringing the null ad if $i$'s allocation is less than one. Note that the null ad does not change the value or space occupied by advertiser $i$.  
The optimization problem for a single advertiser (without the null ad) is as follows. Let $S$ be the set of rich ads for advertiser $i$. 
\begin{align*}
    \max &\sum_{k \in S} x_k b_k \\
    s.t. & \sum_{k \in S} x_k w_k \leq W_i^* \\
    & \sum_{k \in S} x_k \leq 1 \\
    & x_k \geq 0 \qquad \forall k \in S
\end{align*}
This LP has $|S|$ variables and $|S| + 2$ constraints, so there exists an optimal solution with at most $2$ non-zero variables. 

Suppose there is only one non-zero variable. The optimal fractional solution is made of a single ad $(b_l, l)$ with $s \geq W_i^*$. Suppose the advertiser is allocated $x \leq 1$. Then since $l x \leq W_i^*$, we have that $x \leq \frac{W_i^*}{l}$. Since the optimal fractional solution maximizes the advertiser $i$'s value $b_l x$ in that space (as noted in Fact~\ref{fact: fracopt}), we have that $x = \frac{W_i^*}{l}$ and the advertisers value is $b_l \frac{W_i^*}{l} = \frac{b_l}{l} W_i^*$. When $l = W_i^*$, $x = 1$ and the advertiser's allocation is integral. Otherwise, $x < 1$, we can set $(b_s, s) = (0, 0)$ with $x_s = 1-x$ and the advertiser's value is still $\frac{b_l}{l} W_i^* = \frac{b_l - 0}{l-0} W_i^* + 0$. 

Suppose the optimal allocation uses two (non-null) rich ads 
$(b_s, s)$, $(b_\ell, \ell)$. Then both the knapsack and unit demand constraints must be tight. That is $x_s s + x_\ell \ell = 1$ and $x_s s + x_\ell \ell = W_i^*$. The solution to this system is to have $x_s = \frac{l - W_i^*}{l-s}$ and $x_l = \frac{W_i^* - s}{l-s}$. Both $x_s$ and $x_l$ are fractional if $s < W_i^* < l$. And the advertiser's value is 
\begin{align*}
    &b_s \frac{l - W_i^*}{l-s} + b_l \frac{W_i^* - s}{l-s} \\
    &~~= b_s + (b_l - b_s) \frac{W_i^* - s}{l-s} \\
    &~~= b_s + \frac{b_l - b_s}{l-s} (W_i^* - s) \qedhere
\end{align*}
\end{proof}

The following lemma is a niche property of the optimal fractional solution constructed by the incremental bang-per-buck order algorithm (Appendix~\ref{appsubsec: incremental bang-per-buck order}) that can be easily derived and that we use in our proofs. 
\begin{lemma}\label{lem:incremental bpb}
Let $i$ be the last advertiser {that is allocated in the incremental bang-per-buck order algorithm} and suppose it is allocated fractionally. Let $\xvec^* = \xvec(OPT)$ denote the optimal fractional allocation.
Let $(b_s,s),(b_\ell,\ell)$ be the ads used in $x_i^*$. For all $j\neq i$, $\frac{b_\ell - b_s}{\ell - s} \le \frac{b_{j}\cdot x_j^*}{ W_j^*}$.
\end{lemma}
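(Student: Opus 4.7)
The plan is to trace back, for each $j \neq i$, the sequence of incremental replacements the algorithm performed on $j$'s allocation, and argue that each of those incremental steps had bang-per-buck at least $\frac{b_\ell - b_s}{\ell - s}$, then combine these via a mediant-style inequality to conclude the same lower bound for the overall ratio $\frac{b_j \cdot x_j^*}{W_j^*}$.

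First, I would set up notation. Because $i$ is the last advertiser touched by the algorithm and $i$'s allocation is fractional, every other advertiser $j \neq i$ is allocated integrally in $\xvec^*$; call the final rich ad assigned to $j$ the $m$-th ad in its (dominated-pruned, space-sorted) list, and denote the intermediate (value,size) pairs the algorithm placed $j$ at by $(b_{j_0}, w_{j_0}) = (0,0), (b_{j_1}, w_{j_1}), \ldots, (b_{j_m}, w_{j_m})$. By construction of the incremental bang-per-buck algorithm, $W_j^* = w_{j_m}$ and $b_j \cdot x_j^* = b_{j_m}$, and the algorithm performed these replacements in decreasing order of the scores $r_{k} := \frac{b_{j_k} - b_{j_{k-1}}}{w_{j_k} - w_{j_{k-1}}}$ for $k = 1, \ldots, m$.

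Second, I would invoke the hypothesis that the last step of the algorithm is the transition on $i$ from $(b_s, s)$ to $(b_\ell, \ell)$. That step is selected from the global priority queue of incremental scores, so every score $r_k$ chosen strictly earlier must satisfy $r_k \geq \frac{b_\ell - b_s}{\ell - s}$. Since all of $j$'s transitions happened strictly before $i$'s last step (because $j \neq i$ and $i$'s last step is the very last action), this bound applies to every $r_k$ for $j$.

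Third, I would telescope and apply the mediant inequality: writing $b_{j_m} = \sum_{k=1}^m (b_{j_k} - b_{j_{k-1}})$ and $w_{j_m} = \sum_{k=1}^m (w_{j_k} - w_{j_{k-1}})$, the fact that each summand ratio is at least $c := \frac{b_\ell - b_s}{\ell - s}$ yields
\[
\frac{b_j \cdot x_j^*}{W_j^*} \;=\; \frac{b_{j_m}}{w_{j_m}} \;=\; \frac{\sum_{k=1}^m (b_{j_k}-b_{j_{k-1}})}{\sum_{k=1}^m (w_{j_k}-w_{j_{k-1}})} \;\geq\; c,
\]
which is exactly the claim. The main obstacle is really a bookkeeping one: one has to argue carefully that (a) after LP-dominated ads are removed the sequence of allocations of $j$ is indeed a chain of strictly increasing sizes with strictly decreasing incremental scores, so the telescoping is legitimate, and (b) the last step being on $i$ (not on $j$) implies every incremental score used for $j$ is at least that of $i$'s last step; both follow directly from the definition of the algorithm in Appendix~\ref{appsubsec: incremental bang-per-buck order} and Definition~\ref{def: dominated}.
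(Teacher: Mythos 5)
Your proposal is correct. Let me compare it to the paper's own argument, which takes a somewhat different route.

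The paper looks only at the \emph{last} transition in $j$'s chain: it lets $(b_{jk},w_{jk})$ be the ad finally allocated to $j$ and $(b_{jt},w_{jt})$ the one it replaced, observes that $\frac{b_\ell-b_s}{\ell-s}\le\frac{b_{jk}-b_{jt}}{w_{jk}-w_{jt}}$ because $i$'s step comes last, and then shows separately that $\frac{b_{jk}-b_{jt}}{w_{jk}-w_{jt}}\le\frac{b_{jk}}{w_{jk}}$ by an explicit dominance argument (if the reverse held, $t$ would have been LP-dominated by the null ad and $k$, contradicting the pruning). You instead keep the entire chain $(0,0)=(b_{j_0},w_{j_0})\to\cdots\to(b_{j_m},w_{j_m})$, note that every incremental score along the chain is at least $c=\frac{b_\ell-b_s}{\ell-s}$ since all these steps precede $i$'s last step, and then close with the mediant inequality on the telescoping sums. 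Both approaches rest on the same monotone-by-score structure of the incremental bang-per-buck algorithm, but yours avoids the separate dominance computation by absorbing it into the telescoping: the paper's inequality $r_m\le b_{jm}/w_{jm}$ is exactly the statement that a non-increasing sequence's last term is at most its weighted mean, which your mediant step delivers for free. The trade-off is that your argument uses the full chain where the paper only needs the final link; the paper's is more surgical, yours is more self-contained. Your bookkeeping remark (that pruning guarantees a strictly increasing-in-space, strictly decreasing-in-score chain) is precisely the point one must verify, and it does follow from Definition~\ref{def: dominated} and the description in Appendix~\ref{appsubsec: incremental bang-per-buck order}.
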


% \paragraph{Proof of Lemma~\ref{lem: frontier}}
% \begin{proof}
% If $b_{ik} > b_{i(k+1)}$, then since $w_{ik} \leq w_{i(k+1)}$, we can conclude that add $(k+1)$ is dominated by ad $k$ and does not belong to the frontier. 
% The second property follows as if $\frac{b_{ik}}{w_{ik}} \leq \frac{b_{i(k+1)}}{w_{i(k+1)}}$ and we also have that $0 \leq b_{ik} \leq b_{i(k+1)}$ and $0 \leq w_{ik} \leq w_{i(k+1)}$, then with simple rearrangement we can obtain that $\frac{b_{ik}}{w_{ik}} \leq \frac{b_{i(k+1)} - b_{ik}}{w_{i(k+1)} - w_{ik}}$ and $k$ is LP-dominated by $0$ and $(k+1)$. 
% \end{proof}

\begin{proof}
Let $j$ be an advertiser with $j \neq i$. Since $j \neq i$, by Fact ~\ref{fact: fracopt} allocation of $j$ in $\xvec^*$ is integral. Let $(b_{jk}, w_{jk})$ be the ad allocated to $j$. Let $(b_{jt}, w_{jt})$ be the ad that was previously allocated to $j$ when $(b_{jk}, w_{jk})$ was considered. Since the incremental bang-per-buck is defined by sorting ads in increasing order of their space, $w_{jk} \geq w_{jt}$. 
Thus $\frac{b_{jk} - b_{jt}}{w_{jk}-w_{jt}}$ is the incremental bang-per-buck that allowed $j$ to be selected and since $i$ is the last advertiser we have that $\frac{b_l - b_s}{l-s} \leq \frac{b_{jk} - b_{jt}}{w_{jk}-w_{jt}}$. 

To conclude the proof, we show that $\frac{b_{jk} - b_{jt}}{w_{jk}-w_{jt}} \leq \frac{b_{jk}}{w_{jk}}$. 
This is true if and only if $\frac{b_{jk}}{w_{jk}} \leq \frac{b_{jt}}{w_{jt}}$.  We have that  $0 \leq w_{jt} \leq w_{jk}$. We can conclude that $0 \leq b_{jt} \leq b_{jk}$, as otherwise, $k$ is dominated by $t$. Finally, Suppose, $\frac{b_{jk}}{w_{jk}} > \frac{b_{jt}}{w_{jt}}$, then with simple rearrangement we can obtain that $\frac{b_{jk}}{w_{jk}} < \frac{b_{jt} - b_{jk}}{w_{jt} - w_{jk}}$ and $k$ is LP-dominated by $0$ and $t$. 
\end{proof}

\section{Proofs from Section~\ref{sec: monotone and lower bounds}}\label{app:missing from sec 3}

\begin{proof}[Proof of Theorem~\ref{thm: lower bounds} continued]
Consider an instance with two advertisers $A, B$, and rich ads $\{(1, 1), (2-\eps, 2)\}$. The total space available is $3$.
The optimal allocation can randomly pick between $\{A: (1, 1), B: (2-\eps, 2)\}$ and $\{A: (2-\eps, 2), B: (1, 1)\}$, getting a total value of $3-\eps$. Therefore, in the output of any algorithm in this instance, some advertiser, say $B$, obtains value $x \leq (3 - \eps)/2$. A randomized monotone allocation rule must ensure that $B$'s value, if she hides $(1, 1)$, is at most $x$. In that case, the algorithm will randomize between $\{A: (1, 1), B: (2-\eps, 2)\}$ and $\{A: (2-\eps, 2)\}$, and it choose the first allocation with probability more than $\frac{x}{2-\eps}$. The social welfare of this randomized allocation is at most $\frac{x}{2-\eps} \cdot (3-\eps)+ (1 - \frac{x}{2-\eps}) \cdot (2-\eps) \leq (11/12) \cdot (3-\eps)$. Recalling that OPT is $(3 - \eps)$ concludes the proof.
\end{proof}

\section{Proofs from Section~\ref{sec: positive result}}\label{app: missing from positive}

%Now we prove all the claims from Section~\ref{sec: positive result}

\begin{proof}[Proof of Observation~\ref{obv:ALGB ignores dominated}]
We have that $\frac{b_{ij}}{w_{ij}} \geq \frac{b_{ij'}}{w_{ij'}}$ and $w_{ij} \geq w_{ij'}$. If we multiply these two inequalities we get: $b_{ij} \geq b_{ij'}$. If $w_{ij} = w_{ij'}$, then $j'$ is dominated by $j$. Otherwise, we will show that $j'$ is LP-dominated by $j$ and $0$. We have that $0 < w_{ij'} < w_{ij}$ and $0 \leq b_{ij'} \leq b_{ij}$. It remains to show, $\frac{b_{ij} - b_{ij'}}{w_{ij} - w_{ij'}} \geq \frac{(b_{ij'}}{w_{ij'}}$. This is equivalent to $\frac{b_{ij}}{w_{ij}} \geq \frac{b_{ij'}}{w_{ij'}}$, which is true.
\end{proof}

\begin{proof}[Proof of Claim~\ref{claim: integral-OPT-I}]
The post-processing step in $\ALGI$ integrally allocates the best ad that fits in $W_i$ for all advertisers $i\in \mc{N}$. Let $INT_i(w)$ denote best ad in $S_i$ that fits in space $w$. 

For all $i\in \mc{I}\setminus \{i'\}$, since $W_i \ge W_i^*$ and  the optimal allocation $\xvec^*_i$ is integral, we get $b_i\cdot \xvec_i(\ALGI) = b_i\cdot \xvec_i(INT_i(W_i) \ge b_i\cdot \xvec_i(INT_i(W_i^*)) = b_i \cdot \xvec_i(OPT_i(W_i^*))$. Recall, by Lemma~\ref{lem: 1optfrac}, we have $ b_i\cdot \xvec_i(OPT_i(W_i^*)) = b_i\cdot \xvec_i^*$.

Thus we get $b_i\cdot \xvec_i^* \le b_i \cdot \xvec_i(\ALGI)$ for all $i\in \mc{I}\setminus\{i'\}$. Further, by summing up the contributions of all $i\in \mc{I}\setminus\{i'\}$, we get $\val(\xvec^*, \mc{I}\setminus\{i'\}, \vec{W}^*) \le \val(\xvec(\ALGI), \mc{I}\setminus\{i'\}, \vec{W})$. 
\end{proof}

\begin{proof}[Proof of Claim~\ref{claim: bpb-K}]
$\ALGB$  consider ads in decreasing order of bang-per-buck, and moreover by Observation~\ref{obv:ALGB ignores dominated} it never ``ignores'' ads that are allocated in $OPT$. So, if $k\in \mc{K}\setminus \{i'\}$ had higher bang-per-buck in $OPT$ than any advertiser $i$ in $\ALGB$, then the space allotted for $k$ in $\ALGB$  would have been at least $W_{k}^*$; a contradiction. Note that this holds even if $i=i''$ is the last advertiser considered in $\ALGB$ (who potentially gets a fractional allocation).
\end{proof}

\begin{proof}[Proof of Claim~\ref{claim:i'-contribution}]
From Lemma~\ref{lem: 1optfrac} we know that $b_\ell > b_s$ and $b_{i'}\cdot \xvec_{i'}^* = b_s + (b_\ell - b_s)\frac{W_{i'}^* - s}{\ell - s}$. Moreover if $(b_s, s)$ is not the null ad, i.e. $s > 0$, $\frac{b_\ell - b_s}{\ell-s} \leq \frac{b_\ell}{\ell} \leq \frac{b_s}{s}$, 

Suppose that $i' \in \mc{K}$. If $s > W_{i'}$, then by the same argument as Claim~\ref{claim: bpb-K} we have that for all $i\in \mc{N}$ the bang-per-buck of every ad in $\ALGB$ is at least $b_s/s \ge b_\ell/\ell \ge (b_\ell - b_s)/(\ell-s)$. 
If $s \le W_{i'}$, we have $b_s \le b_{i'}\cdot \xvec_{i'}(\ALGI)$. And, by the same argument as Claim~\ref{claim: bpb-K}, we have that $\frac{b_\ell - b_s}{\ell-s} \leq \frac{b_\ell}{\ell} \leq \frac{b_j \xvec_j(\ALGI)}{W_j}$ for $j \in \mc{N}$.

Suppose that $i'\in \mc{I}$. Clearly, $s \le W_{i'}^* \leq W_{i'}$, so we have $b_s \le v_{i'}\cdot \xvec_{i'}(\ALGI)$. Let $k\in \mc{K}$ be some ad with $W_k^* > W_k$. By Lemma~\ref{lem:incremental bpb} we have $(b_\ell - b_s)/(\ell -s) \leq b_k\cdot \xvec_k^*/ W_k^*$. Claim~\ref{claim: bpb-K} gives us that $b_k\cdot \xvec_k^*/W_k^* \le b_i\cdot \xvec_i(\ALGB)/W_i$ for all $i\in\mc{N}$. Thus we get $(b_\ell - b_s)/(\ell -s) \le b_i\cdot \xvec_i(\ALGB)/W_i$ for all $i\in\mc{N}$.
\end{proof}

%\begin{lemma}%\label{lem: integer monotone}
%Let $x_i(\ALGI(\bvec, \bS))$ to denote the expected number of clicks of an advertiser $i$ provided by $\ALGI$.
%Then, for every $i \in \mc{N}$, $x_i(\ALGI(\bvec, \bS))$ is monotone in $b_i$ and $S_i$.
%\end{lemma}

\section{Examples}\label{app: examples}
The following example shows that $\ALGB$ might not be monotone.
\begin{example}\label{eg:algB-not-monotone}
Consider two advertisers $A$ and $B$. $A$ has two rich ads with (value,size) = $(2,2)$ and $(1,3)$, and $B$ has one rich ad with value size $(0.5,3)$. Let the total space $W=3$.
$\ALGB$ will allocate $(1,3)$. But if $A$ removed $(1,3)$, then the algorithm allocates $(2,2)$ to $A$ and $(0.5,3)$ to $B$ fractionally.
\end{example}

The following example shows that $\ALGI$ can be an arbitrarily bad approximation to the social welfare.
\begin{example}\label{eg:algI-bad}
We have two advertisers $A$ and $B$. $A$ has one rich ad with (value, size) = ($\varepsilon,\varepsilon/2$), and $B$ has one rich ad ($(M,M)$). The total space available is $W= M$.

Clearly, the optimal integer allocation is to award the entire space to $B$, to obtain social welfare $= M$. The fractional opt selects $A: (\varepsilon,\varepsilon/2)$ and $B:(M,M)$ with weight $M-\varepsilon/2$, obtaining social welfare $=M+\varepsilon/2$. We note that, in this instance the bang-per-buck algorithm also selects the fractional optimal allocation. At the same time, the integer allocation $\xvec(\ALGI)$, drops $B$, and only obtains social welfare $= \varepsilon$.
\end{example}

\section{Proofs from Section~\ref{sec: experiments}}\label{app:proof-experiments}

We briefly sketch a proof that the new ``heuristically better" algorithms are still monotone.

\begin{lemma}
\greedybpb~ is monotone in both $b_i$ and $S_i$.
\end{lemma}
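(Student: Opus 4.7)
The plan is to mirror the proof of Theorem~\ref{thm: monotone-space}, adapted to the modified space-allocation procedure $\ALGB'$ obtained from $\ALGB$ by dropping the early-termination step, so that when the next ad in the bang-per-buck order does not fit in the remaining space, $\ALGB'$ simply skips it and continues. I would first argue that the total space $W_i$ allocated to each advertiser $i$ by $\ALGB'$ is monotone in both $b_i$ and $S_i$; monotonicity of $\greedybpb$ in the expected number of clicks then follows because its post-processing step assigns to $i$ the highest-value single ad of $S_i$ that fits in $W_i$, and this quantity is monotone in both $W_i$ and $S_i$.

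For monotonicity in the bid, I would adapt the argument from Theorem~\ref{thm: monotone-space} essentially verbatim. The key observation used there --- that any ad of $i$ temporarily allocated under $b_i$ remains temporarily allocated under $b_i'>b_i$ --- continues to hold for $\ALGB'$: whether $(i,j)$ fits at the moment it is considered depends only on whether the space consumed by other advertisers is at most $W-w_{ij}$, and this quantity is weakly smaller under $b_i'$ because fewer of each other advertiser's ads have been processed before $(i,j)$, while step (2) of $\ALGB$ ensures that each $i'$'s temporarily allocated ads weakly increase in space. Letting $(i,j^*)$ be the final ad allocated to $i$ under $b_i$, so that $W_i=w_{ij^*}$, we conclude that $(i,j^*)$ still fits under $b_i'$, and any subsequent replacement of $(i,j^*)$ by a larger $i$-ad can only increase $i$'s allocated space further.

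For monotonicity in $S_i$, by transitivity it suffices to consider adding a single ad $k$ to $S_i$, producing $S_i'=S_i\cup\{k\}$. I would mirror the case analysis of Theorem~\ref{thm: monotone-space}: (i) $k$ is never temporarily allocated under $S_i'$, in which case the two executions coincide and $W_i'=W_i$; (ii) $k$ is temporarily allocated but is eventually superseded by another ad of $i$; and (iii) $k$ is allocated and remains $i$'s final ad. In each case the goal is to exhibit a specific ad of $i$ --- either the original final ad $(i,j^*)$ or a higher-space ad of $i$ that now supersedes it --- that still fits at the moment it is considered under $S_i'$, by carefully comparing the space consumed by non-$i$ advertisers at that moment to the corresponding moment in the $S_i$-execution.

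The main obstacle is the cascading effect in case (iii): temporarily allocating $k$ changes the running remaining space at every subsequent step, which affects whether other advertisers' subsequent ads fit, which in turn changes the space available when $i$'s larger ads are considered. Fortunately, removing the early-termination step eliminates precisely the most delicate sub-case of the proof of Theorem~\ref{thm: monotone-space} --- the one where $k$ coincides with $i$'s final ad and $i$ is the very last advertiser processed before $\ALGB$ halts --- since in $\ALGB'$ no ad is ever denied on the grounds that the algorithm has stopped. Modulo this simplification, the remaining case analysis proceeds essentially as in the proof of Theorem~\ref{thm: monotone-space}.
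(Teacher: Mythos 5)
Your plan is essentially the paper's approach: show that the space $W_i$ allotted by the skipping version of $\ALGB$ is monotone in both $b_i$ and $S_i$ (post-processing then inherits monotonicity), use the bang-per-buck increase for bid monotonicity, and do a per-ad case split for set monotonicity. One caveat: your closing claim that the surviving cases ``proceed essentially as in the proof of Theorem~\ref{thm: monotone-space}'' is optimistic, because those arguments lean on $\ALGB$'s early-termination structure (e.g., the inference that ``the space ran out before we got to $\ell$,'' hence $w_{i\ell}\le w_{ik}$), which is precisely what the skipping algorithm lacks. The paper's appendix proof instead rests on two observations tailored to the skipping variant that absorb the cascading effect you flag as your main obstacle: (a) since the algorithm runs until the rich ads are exhausted, deleting an ad of $i$ can only increase the space consumed by every other advertiser at every subsequent point of the run; and (b) temporarily allocating $j$ to $i$ does not shrink the space available to $i$'s later, larger ads, since any replacement reuses $j$'s space, while the smaller $i$-ads dropped when $j$ is considered cannot free up extra room. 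Making these explicit is what keeps the case analysis from degenerating into a delicate step-by-step comparison of two divergent runs.
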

\begin{proof} 
Recall that in \greedybpb~we do not stop immediately when encountering an ad that doesn't fit, but instead we will continue until we run out ads or knapsack space. We will refer to this as the bang-per-buck allocation. Finally, we do a post-processing step as usual.

Let $W_i$ denote the space allocated to agent $i$ before the post-processing step for each $i$. Since the post-processing step uses the space $W_i$ optimally, it is enough to show that the bang-per-buck allocation is space monotone. That is, we will show that $W_i$ is monotone in $b_i$ and $S_i$ (without worrying about the post-processing step). Also, wlog we can assume that the algorithm continues to consider ads in bang-per-buck order until we run out of ads (that is, if knapsack is full we will keep going without allocating anything else).

Clearly, for any $b'_i> b_i$, the advertiser will not get allocated 
lesser space when bidding $b'_i$. This is because, when reporting $b'_i$, all the ads of $i$ now have higher bang per buck (than compared to reporting $b_i$). So any ad $j\in S_i$ that was allocated (temporarily or otherwise) under $b_i$ will still be allocated in $b'_i$. In particular, the space available to $j$ is weakly higher under $b'_i$.

We next show that by removing any ad $j\in S_i$ the space allocated to $i$ does not increase. Note that, by transitivity, it is sufficient to prove monotonicity for removing one ad at a time. We see that by removing ads we can only increase the amount of space allocated to other advertisers. This is because the algorithm will continue until we run out of space. First, if $j$ was never (temporarily) allocated by the algorithm then either $j$ was dominated or there was not enough space to allocate $j$, in either case nothing changes under $S_i\setminus \{j\}$.

First consider the case where $j$ was the final ad allocated under $S_i$. This implies that either $j$ was the largest of $i$, or no ad $k\in S_i$ larger than $j$ (that is, $w_{ik} > w_{ij}$) had enough space available. In either case, by removing $j$ the space allocated to $i$ can only be lesser, since no such larger ad $k$ (if any) will have enough space available.

Next we consider the case where $j$ was temporarily allocated and finally a larger ad $k$ was allocated to $i$ instead. Then under $S_i\setminus \{j\}$ can only have weakly lesser space available to $k$. To see why this is true, consider "time step" at which $j$ was temporarily allocated. {Note that $j$ being allocated does not reduce the space available to advertiser $i$ as subsequent rich ads of $i$ replace $j$ and can use the space previously allocated to $j$.} If $w-w_{ij}$ is the space available to other agents at this time step, then by removing $j$ there is at least $w-w_{ij}$ space available for other agents at this time step (when $j$ is not available). %\knote{Should it say j is not available? since skipping might refer to when we skip past ads that don't fit?} 
{Also while $j$ being allocated does cause the algorithm to drop other rich ads of lower bang-per-buck and lower size, since the size of these rich ads is necessary smaller they cannot make more space available under $S_{i} \setminus \{j\}$.}  Hence we see that the space allocated to $i$ is weakly lesser under $S_{i}\setminus \{j\}$.

%If there were no other ads that are removed from $E_i$ because of $j$, then clearly the only thing that happens is that some other advertiser might now get more space that was supposed to be allocated to $i$. If there were some other ads that are removed from $E_i$ because of $j$, note that they only have smaller space that $w_{ij}$, so by considering them later  $i$ is not gaining any space either.
\end{proof}

% \dnote{
% \begin{lemma}
% The social welfare of \greedybpb plus the max value ad is at least that of \ALGB, that is, $SW(\xvec(\greedybpb)) + b_{\max} \ge SW(\xvec(\ALGB))$.
% \end{lemma}}
% \begin{proof}
%  \dnote{DM: This is definitely true, but need to show that this is enough. Basically greedybpb will do better than ALGI on everything except $i''$, but we are anyway planning to capture $i''$ with $b_{\max}$ (need to be careful about this to not get an additional $b_{\max}$).}
 
%  \dnote{Actually, since we show it is monotone and that in the worst case it is not better than 1/3, then it doesn't matter if this is theoretically worse! LOL }
% \end{proof}

\begin{lemma}
GreedyByValue is monotone in both $b_i$ and $S_i$.
\end{lemma}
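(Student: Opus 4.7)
My plan mirrors the structure of the preceding \greedybpb monotonicity proof. Since GreedyByValue is deterministic, $x_i$ is simply $\alpha_{ij^{\star}}$ where $j^{\star}$ is the ad (if any) it allocates to $i$, so I will argue directly at the level of which ad is allocated.

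For monotonicity in $b_i$, fix $b_i' > b_i$ and compare the two executions. The relative order among $i$'s own ads is preserved (all scale by the same factor), as is the relative order among non-$i$ ads; only the interleaving changes, with $i$'s ads moving weakly earlier under $b_i'$. Label $i$'s ads in decreasing order of $\alpha_{ij}$ as $j_1,\dots,j_k$. Suppose in the old run $i$ is allocated $j_m$; I aim to show that in the new run $i$ is allocated some $j_{m'}$ with $m'\le m$, which suffices because $\alpha_{ij_{m'}}\ge\alpha_{ij_m}$. If any such $j_{m'}$ with $m'<m$ is allocated in the new run we are done, so assume not; in particular $i$ has no ad when $j_m$ is reached in the new run. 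The non-$i$ ads processed before $j_m$ in the new run are exactly those with value exceeding $b_i'\alpha_{ij_m}$, while in the old run they are those with value exceeding $b_i\alpha_{ij_m}$, a superset in the same relative order. Since $j_1,\dots,j_{m-1}$ are skipped in both runs (in old because they do not fit, in new by our case assumption) they do not alter the knapsack state, so the state just before $j_m$ in the new run equals the state reached in the old run after greedily processing only the high-value ``prefix'' of non-$i$ ads. The additional medium-valued non-$i$ ads processed afterward in the old run can only (weakly) decrease the remaining space, so the space remaining at $j_m$ in new is at least that in old, which is at least $w_{ij_m}$. Hence $j_m$ fits in new and is allocated to $i$.

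For monotonicity in $S_i$, by transitivity it suffices to show that adding a single ad $j'$ to $S_i$ does not decrease $x_i$. The two executions agree step-by-step up to the moment $j'$ would be processed. At that moment in the $S_i\cup\{j'\}$ run, three cases can occur: (a) $i$ already has an ad and $j'$ is skipped, leaving the runs identical; (b) $i$ has no ad but $j'$ does not fit, and again $j'$ is skipped; or (c) $i$ has no ad and $j'$ fits, so $j'$ is allocated and the runs diverge. In cases (a) and (b), $x_i$ is unchanged. In case (c), any ad $j^{\star}$ eventually allocated to $i$ in the $S_i$ run must be processed after $j'$'s slot: otherwise, since the runs agree before $j'$'s position, $j^{\star}$ would also have been allocated to $i$ in the $S_i\cup\{j'\}$ run at the same step, contradicting that $i$ had no ad at $j'$'s slot. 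Hence the value of $j^{\star}$ is at most that of $j'$, so $\alpha_{ij^{\star}}\le \alpha_{ij'}$, and $i$ is weakly better off under $S_i\cup\{j'\}$.

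The main obstacle is the bid-monotonicity step, specifically the ``state coupling'' argument: one must carefully exploit that (i) $i$'s skipped ads do not alter the knapsack state, and (ii) the non-$i$ ads processed before $j_m$ in the new run form a value-prefix of those processed in old, so the greedy decisions on those common ads coincide between the two runs. The $S_i$ case is noticeably simpler because the two executions are literally identical until $j'$'s insertion point.
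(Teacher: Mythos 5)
Your proposal is correct and follows essentially the same approach as the paper: argue that $i$'s skipped ads never affect the knapsack state, couple the two executions via the common value-ordered prefix, and conclude that $i$ can only receive a (weakly) higher-$\alpha$ ad when the bid increases or when an ad is added to $S_i$. The paper states the bid-monotonicity step very tersely (``all of $i$'s ads become more valuable, so the allocation can only improve'') and argues set-monotonicity by removing rather than adding an ad, but these are symmetric views of the same coupling argument you spell out.
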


\begin{proof}
Recall that in GreedyByValue we continue allocating ads in value order (while only allocating at most one ad per advertiser) until we run out of ads. That is, we skip past ads that do not fit in the available space and continue considering ads in value order. Once an agent $i$ get allocated we ignore all their remaining ads.

 Suppose $i$ changes her bid from $b_i$ to $b'_i > b_i$. This implies, the value of all of $i$'s ads increased. Therefore, her allocation under $b'_i$ can only be better.
 
 Similarly by dropping an ad $j\in S_i$ the allocation of $i$ can only be worse. If $j$ was not allocated under $S_i$ then nothing changes because either $j$ didn't fit (in which case we will still continue) or a better ad was allocated before $j$ and will be allocated under $S_i\setminus \{j\}$.
 
 If $j$ was allocated, then $j$ was the highest valued ad of $i$ that fit. That is, all ads of $i$ with higher value than $j$ (if any) were considered before $j$ and did not fit. Thus they will still not be allocated even under $S_i \setminus \{j\}$. 
 
 %Since we will always allocate the best ad of $i$ that fits, clearly by increasing $b_i$ we will only improve the chances of "better valued" ads of $i$ being allocated. Similarly, by dropping $j\in S_i$, we cannot improve the allocation. Either $j$ was not allocated -- then now nothing will change. This is because, either some there $k\in S_i$ with better value was already allocated (and will be allocated now also), or $j$ didn't fit and we move on to other ads (which will happen now also). If $j$ was allocated, then we can only do worse, because $j$ was the highest value ad of $i$ that could fit in the value order (now only worse ads are left after $j$).
\end{proof}
% \begin{figure}
%     \centering
%     \includegraphics[width=6cm]{figures/approxecpm_correlate_4.png}
%     \caption{Correlation of the approximation factor to VCG of GreedyByValue and GreedyByBangPerBuck. Color-sclae by log(number of queries) in the bucket.}
%     \label{fig:vcg-approx-cardinality}
% \end{figure}

\section{Further Empirical Evaluation}\label{app:experiments}

\paragraph{Comparison with Myerson Payment rule.}
We compare the revenue performance of our allocation algorithms when paired with truthful payment rules. 
\begin{table}
    \centering
    \begin{tabular}{c|c|c|c}
         Algorithm &ApproxECPM	& ApproxPayment   & time-msec  \\
         GreedyByBPB-Myerson	&	0.9493 &	0.66 & 0.0033 \\
         GreedyByValue-Myerson &	0.9196 & 1.00 & 	0.0016 \\
         RandomizedGreedy-Myerson & 0.9393 $\pm$ 0.0001  & 0.7758 $\pm$ 0.0007 & 0.0027 $\pm$ 0.000 \\
         VCG & 1.0 & 1.0 & 0.0308 
    \end{tabular}
    \caption{Average performance of the algorithms compared to VCG. We report average approximation of eCPM and payment relative to VCG and average running time in miliseconds. We report confidence intervals for the randomized algorithm.}
    \label{tab:payment}
\end{table}

We pair IntOPT with the VCG payment rule gives the VCG mechanism. The VCG payment for advertiser $i$ is computed by computing the brute-force integer OPT with the advertiser $i$ removed and subtracting from that the allocation of all advertisers other than advertiser $i$ in the optimal allocation. 
For our monotone mechanisms we compute Myerson payments as implied by ~\ref{lem: monotone}. 
We compute the Myerson payment for advertiser $i$, by first computing the GSP cost per click at the submitted bid, setting a new-bid equal to GSP cost-per-click - $\epsilon$, and rerunning the allocation algorithm. This procedure is repeated until GSP cost-per-click or the allocation of advertiser $i$ is equal to zero. %There are likely more optimized implementations of this payment rule, but we find that despite recursive calls to the greedy algorithm, our algorithm runs much faster than VCG. 

We first compare the revenue performance of the four different truthful mechanisms that we have. Since these mechanisms are truthful, revenue can be compared without having to reason about equilibrium. Note however that we do not set reserve prices and reserve prices can be set and tuned differently to fully compare the revenue from these mechanisms. 
\begin{figure}
    \centering
    \includegraphics[width=8cm]{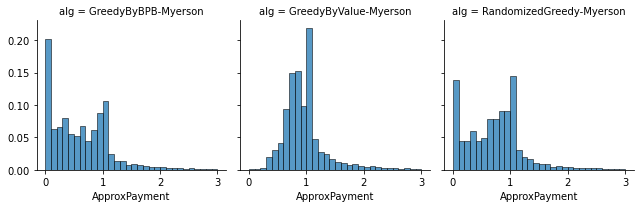}
    \caption{Histogram of ratio of revenue with MyersonPaymentRule for GreedyByValue, GreedyByBangPerBuck (GreedyByBPB) and RandomizedGreedy compared to VCG}
    \label{fig:payment-approx}
\end{figure}
In Figure~\ref{fig:payment-approx}, the revenue from GreedyByBangPerBuck and GreedyByValue can be both higher and lower than VCG. GreedyByBangPerBuck tends to have lower revenue on average. This is probably due to the bang-per-buck allocation --- large value ads might also occupy larger space and have lower bang-per-buck. Thus, even if a large value ad is used to price smaller ads that are selected, since the bang-per-buck is small the payment for the smaller ad is still small. We might also make better trade-off between revenue and efficiency by stopping the algorithms early. 

In Figure~\ref{fig:running_time_mech}, we compare the running time for computing truthful payments in each mechanism. We note that all the implementations can be further optimized and the choice of programming language can influence the running time as well. The results here are from mechanisms implemented in Python. We see that GreedybyBangPerBuck(GreedyByBPB) and GreedyByValue run much faster than VCG. The greedy allocation rules themselves are much faster than brute-force OPT, but the truthful payment rule computation for the Greedy algorithm requires more recursive calls to the Greedy allocation rule than that for VCG, this can be further optimized if required.
\begin{figure}
    \centering
    \includegraphics[width=11cm]{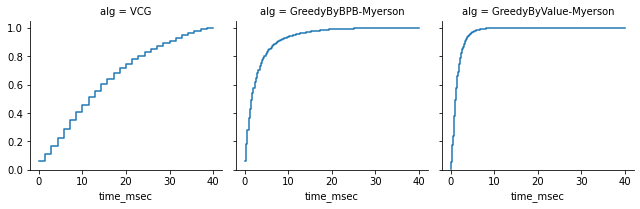}
    \caption{Histogram of ratio of running time in milliseconds for for GreedyByValue-Myerson, GreedyByBPB-Myerson and VCG mechanisms. We clip running time larger than 50 in the last bin.}
    \label{fig:running_time_mech}
\end{figure}

\paragraph{Empirical results with cardinality constraint.}
We also implemented our algorithm with the cardinality constraint. Suppose there is a limit of $k=4$ distinct advertisers to be shown. This changes the optimization problem and the greedy-incremental-bang-per-buck algorithm no longer produces the optimal allocation. We compare the performance of simple greedy algorithms with Myerson payment rule with that of the VCG mechanism that computes the optimal allocation. The algorithm for computing optimal allocation recurses on subsets of ads and can be easily extended to track the cardinality constraint. The algorithm that allocates greedily by value of the rich-ad, will allocate the highest ad that fits within the available space and this algorithm can be stopped as as soon as $k$ distinct advertisers have been selected. To obtain the best social welfare using the greedy by bang-per-buck heuristic, more care is required. We cannot stop as soon as $k$ distinct advertisers are selected, instead we can improve social welfare further by replacing previously selected ads. Thus we extend our GreedyByBangPerBuck algorithm such that if the cardinality constraint is reached, it replace existing ad of the same advertiser if present (in this case the cardinality is unaffected) or replaces allocated ad of the advertiser that has the lowest value among all allocated advertisers.

\begin{figure}

%\begin{subfigure}{.5\textwidth}
    \centering
    \includegraphics[width=8cm]{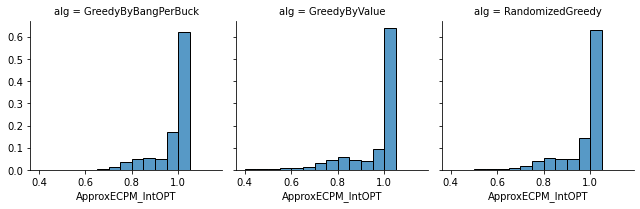}
%    \caption{(a)}
%  \label{fig:vcg-approx-cardinality}
%\end{subfigure}%
%\begin{subfigure}{.5\textwidth}
%  \centering
%    \includegraphics[width=0.4\linewidth]{figures/approxecpm_correlate_4.png}
%    \caption{}
%    \label{fig:vcg-approx-cardinality}
%\end{subfigure}
\caption{Approximation Factor of algorithms relative to IntOPT
(a) Histogram of approximation factor for GreedyByValue, GreedyByBangPerBuck and randomized Greedy compared to IntOPT with a cardinality constraint of 4 ads }
\label{fig:vcg-approx-cardinality}
\end{figure}
In Figure~\ref{fig:vcg-approx-cardinality}, we compare the approximation factor of our greedy algorithms with cardinality constraint relative to the optimal integer allocation. We find that the worst-case approximation factors are still $0.6$ and $0.4$ for the GreedyByBangPerBuck and GreedyByValue algorithms, but $60\%$ of the queries have an approximation of $1.0$.

\begin{figure}
    \centering
    \includegraphics[width=8cm]{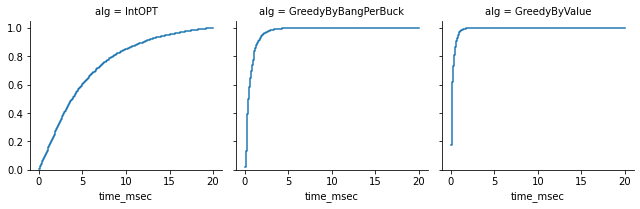}
    \caption{CDF of running time in milliseconds for for GreedyByValue, GreedyByBangPerBuck and IntOPT algorithms. }
    \label{fig:running_time_alg}
\end{figure}

\end{document}